\documentclass[11pt]{article}
\pdfoutput=1
\usepackage{fullpage}

%This is a template for producing LIPIcs articles. 
%See lipics-manual.pdf for further information.
%for A4 paper format use option "a4paper", for US-letter use option "letterpaper"
%for british hyphenation rules use option "UKenglish", for american hyphenation rules use option "USenglish"
% for section-numbered lemmas etc., use "numberwithinsect"
 
%\usepackage{microtype}%if unwanted, comment out or use option "draft"

%\graphicspath{{./graphics/}}%helpful if your graphic files are in another directory

%%%%%%%%%%%%%%%%%%%%%%%%%%%%%%%%%%%%%%%%%%%%%%%%%%

\usepackage{graphicx}   % need for figures
\usepackage{hyperref}   % use for hypertext links, including those to external documents and URLs
\usepackage{amsmath, amsthm,amssymb}    % need for subequations
\usepackage{mdwlist}
\usepackage{xspace}
\usepackage[usenames,dvipsnames]{color}
\usepackage{verbatim}   % useful for program listings
\usepackage{subfig}
\usepackage{enumitem}

\DeclareGraphicsRule{.pdf_t}{pdf}{*}{}
%\graphicspath{{fig/}}

%% Uncomment for algorithm support
%\usepackage[algoruled, lined]{algorithm2e}

   % TODO shortcut

%%%%% cross out package
\usepackage[normalem]{ulem}

%%%%% Person-specific comments

\newcommand{\red}[1]{{\color{red} {#1}}}
\newcommand{\cut}[1]{}   % remove things inside the cut

%%%%%% Packed environments for itemize, enumerate and description
\newenvironment{packed_item}{
\begin{itemize}
   \setlength{\itemsep}{1pt}
   \setlength{\parskip}{0pt}
   \setlength{\parsep}{0pt}
}
{\end{itemize}}

\newenvironment{packed_enum}{
\begin{enumerate}
   \setlength{\itemsep}{1pt}
  \setlength{\parskip}{0pt}
   \setlength{\parsep}{0pt}
}
{\end{enumerate}}

%%%%%%%%%%%%%%%%%%%%%%%%%%%%%%%%              

%%%% Commands for   short stuff           
\newcommand{\ie}{{\em i.e.}\xspace}
\newcommand{\eg}{{\em e.g.}\xspace}
\newcommand{\ea}{{\em et al.}\xspace}

\newcommand{\introparagraph}[1]{\textbf{#1.}}  % define own new subsection type: noindent, bold (textsc)

%%%% Set commands
\newcommand{\set}[1]{\{#1\}}                    % Set (as in \set{1,2,3}).
\newcommand{\setof}[2]{\{{#1}\mid{#2}\}}        % Set (as in \setof{x}{x>0}).

\usepackage{aliascnt} 

\newtheorem{theorem}{Theorem}[section]          	% Theorem environment.
\newaliascnt{lemma}{theorem}				% 1 alias counter
\newtheorem{lemma}[lemma]{Lemma}              	% Lemma environment.
\aliascntresetthe{lemma}  					% 3 set
\newaliascnt{conjecture}{theorem}			% 1 alias counter
    % Conjecture environment.
\aliascntresetthe{conjecture}  				% 3 set
\newaliascnt{remark}{theorem}				% 1 alias counter
\newtheorem{remark}[remark]{Remark}              
\aliascntresetthe{remark}  					% 3 set
\newaliascnt{fact}{theorem}				% 1 alias counter
              
\aliascntresetthe{fact}  					% 3 set
\newaliascnt{corollary}{theorem}			% 1 alias counter
      % Corollary environment.
\aliascntresetthe{corollary}  				% 3 set
\newaliascnt{definition}{theorem}			% 1 alias counter
\newtheorem{definition}[definition]{Definition}    % Definition environment.
\aliascntresetthe{definition}  				% 3 set
\newaliascnt{proposition}{theorem}			% 1 alias counter
\newtheorem{proposition}[proposition]{Proposition}  % proposition environment.
\aliascntresetthe{proposition}  				% 3 set
\newaliascnt{observation}{theorem}				% 1 alias counter
\newtheorem{observation}[observation]{Observation}  % proposition environment.
\aliascntresetthe{observation}  				% 3 set
\newaliascnt{example}{theorem}			% 1 alias counter
\newtheorem{example}[example]{Example}  	% 2 environment.
\aliascntresetthe{example}  				% 3 set

\providecommand{\att}[1]{\text{att}(#1)}

\providecommand{\ba}[0]{\mathbf{\alpha}}

\providecommand{\mI}[0]{\mathcal{I}}
\providecommand{\mH}[0]{\mathcal{H}}
\providecommand{\mP}[0]{\mathcal{P}}
\providecommand{\mE}[0]{\mathcal{E}}
\providecommand{\mQ}[0]{\mathcal{Q}}
\providecommand{\dom}[0]{\mathtt{Dom}}

\providecommand{\tup}[0]{\mathtt{tup}}

\title{Answering Conjunctive Queries with Inequalities\thanks{
This work has been partially funded by the NSF awards IIS-1247469 and IIS-0911036, European Research Council under the FP7, ERC grant MoDaS, agreement 291071 and by the Israel Ministry of Science.}}
\author{Paraschos Koutris \and Tova Milo \and Sudeepa Roy \and Dan Suciu}

\begin{document}

\maketitle

%%%%%%%%%% ABSTRACT %%%%%%%%%%%%%%%%%%%%%%%%%%%
\begin{abstract}
In this paper, we study the complexity of answering conjunctive
queries (CQ) with inequalities ($\neq$).  In particular, we are
interested in comparing the complexity of the query with and without
inequalities. 
The main contribution of our work is a novel combinatorial technique that enables us to use 
any Select-Project-Join query plan for a given CQ without inequalities in 
answering the CQ with inequalities, with an additional factor in running time 
that only depends on the query. 
The key idea is to define a new projection operator, which keeps 
a small representation (independent of the size of the database) of 
the set of input tuples that map to each tuple in the output of the projection; this representation is 
used to evaluate all the inequalities in the query.
 Second, we generalize a result by Papadimitriou-Yannakakis \cite{PY97} 
 and give an alternative algorithm based on the color-coding technique \cite{AlonYZ08}
 to evaluate a CQ with inequalities 
 by using an algorithm for the CQ without inequalities.  
Third, we investigate the structure of the query graph, inequality graph, 
and the augmented query graph with inequalities, and show that 
even if the query and the inequality graphs have bounded treewidth, the augmented graph 
not only can have an unbounded treewidth but can also be NP-hard to evaluate.  
Further, we illustrate classes of queries and inequalities where the augmented graphs have unbounded treewidth, 
but the CQ with inequalities can be evaluated in poly-time.
Finally, we give necessary properties and sufficient properties that allow a class of CQs to have 
poly-time combined complexity with respect to any inequality pattern.
We also illustrate classes of queries where our query-plan-based technique 
outperforms the alternative approaches 
discussed in the paper. 

\end{abstract}

\section{Introduction}
\label{sec:intro}

In this paper, we study the complexity of answering conjunctive queries (CQ) with a set of inequalities of the form $x_i \neq x_j$ between variables in the query. 
The complexity of answering CQs without inequalities has been extensively studied in the literature during the past three decades. %Chandra and Merlin~\cite{CM77} 
%Barrington \ea\ \cite{BarringtonIS88} %\red{R5: Gurevich?} 
Query evaluation of CQs is NP-hard in terms of \emph{combined complexity} (both query and database are inputs), 
while
the  {\em data complexity} of CQs (query is fixed) is in $AC_0$ \cite{AbiteboulHV95}. Yannakakis~\cite{Yan81} showed  that evaluation of acyclic CQs has polynomial-time combined complexity. 
This result has been generalized later to CQs with bounded treewidth, bounded querywidth, or bounded hypertreewidth:  the combined complexity remains polynomial if the width of a %particular 
tree or query decomposition of the query (hyper-)graph is bounded \cite{ChekuriR00, Gottlob1999, Kolaitis1998, Flum2002}.

However, the complexity of query evaluation changes drastically once we add inequalities in the body of the query. 
Consider the following Boolean acyclic CQ $P^k$ which can be solved in $O(k|D|)$ time on a database instance $D$:
$$P^k(~) = R_1(x_1, x_2), R_2(x_2, x_3), \dots, R_k(x_{k}, x_{k+1})$$
If we add the inequalities $x_i \neq x_j$ for every $i < j$ and evaluate it on an instance where each $R_{\ell}, 1 \leq \ell \leq k$, corresponds to the edges in a graph with $k+1$
vertices, query evaluation becomes equivalent to asking whether the graph contains a Hamiltonian path, and therefore is NP-hard in $k$.
Papadimitriou and Yannakakis~\cite{PY97} observed this fact and showed that still the problem is {\em fixed-parameter tractable} for acyclic CQs:
\begin{theorem}[\cite{PY97}]\label{thm:PY97_acyclic}
Let $q$ be an acyclic conjunctive query with inequalities and $D$ be a database instance. Then, $q$ can be evaluated in time $2^{O(k \log k)} \cdot |D| \log^2 |D|$
where $k$ is the number of variables in $q$ that appear in some inequality. 
%In other words, $q$ is {\em fixed-parameter tractable}.
\end{theorem}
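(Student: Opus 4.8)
The plan is to combine the color-coding technique of Alon--Yuster--Zwick with Yannakakis's algorithm for acyclic queries. Let $k$ be the number of variables of $q$ that participate in some inequality; call these the \emph{constrained variables}. The inequalities induce a graph $G$ on the $k$ constrained variables, where we put an edge between $x_i$ and $x_j$ whenever $x_i \neq x_j$ appears in $q$. The key observation is that a valuation $\nu$ of all the variables of $q$ satisfies all the inequalities if and only if the restriction of $\nu$ to the constrained variables is ``rainbow'' with respect to some proper coloring of $G$: more precisely, if we randomly color the active domain of $D$ with $k$ colors, then with probability at least $k^{-k} = 2^{-O(k\log k)}$ the values $\nu(x_1),\dots,\nu(x_k)$ all receive distinct colors, and in that case every inequality $x_i \neq x_j$ is automatically satisfied. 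Conversely, any valuation whose constrained variables get distinct colors is a genuine answer. Derandomization is done in the standard way via a $k$-perfect family of hash functions from the active domain to $[k]$, of size $2^{O(k)}\log|D|$, which contributes the claimed extra $\log|D|$ factor.

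The second step is to show that, once a coloring $c : \dom(D) \to [k]$ is fixed, checking whether there is a valuation that is consistent with $c$ (i.e., each constrained variable $x_i$ is assigned a value of a prescribed color $\pi(i)$, ranging over all $k!$ bijections $\pi : \{x_1,\dots,x_k\} \to [k]$) can be reduced to evaluating an \emph{ordinary} acyclic CQ without inequalities. For a fixed color assignment $\pi$, we restrict each relation $R_\ell$ of $q$ to those tuples whose $i$-th component has color $\pi(i)$ for every constrained variable $x_i$ occurring in $R_\ell$; this filtering takes linear time in $|D|$ and preserves acyclicity since we are only shrinking relations, not changing the hypergraph of $q$. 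Running Yannakakis's algorithm on this filtered instance answers, in time $O(|q| \cdot |D| \log |D|)$ (the $\log|D|$ for sorting/joining), whether $q$ has a satisfying valuation whose constrained variables realize the color pattern $\pi$. We iterate over all $k!$ patterns $\pi$ and over all hash functions in the perfect family.

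Putting the pieces together: there are $2^{O(k)}\log|D|$ hash functions, and for each we try $k! = 2^{O(k\log k)}$ color patterns, each handled by one Yannakakis run costing $O(|q|\,|D|\log|D|)$. The total is $2^{O(k\log k)} \cdot |D| \log^2|D|$, as claimed; correctness follows because $q$ has an answer iff for \emph{some} hash function the $k$ constrained variables of a witnessing valuation are given distinct colors, which is guaranteed by the defining property of a $k$-perfect hash family. For the Boolean case the output is a yes/no; to also enumerate or count answers one keeps the valuations found, accepting that the same answer may be discovered under several hash functions and deduplicating.

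The main obstacle I expect is not the color-coding step itself but the bookkeeping needed to argue that the filtering-plus-Yannakakis subroutine correctly captures exactly the valuations whose constrained variables receive the pattern $\pi$ \emph{and} that distinctness of colors genuinely implies all inequalities hold --- this is immediate for inequalities \emph{between constrained variables}, but one must be careful that variables appearing in an inequality are, by definition, constrained, so there is no ``mixed'' case to worry about. A secondary technical point is ensuring the perfect hash family has the stated size and can be constructed in the allotted time, which is exactly the Alon--Yuster--Zwick construction cited in the introduction; I would simply invoke it. The $\log^2|D|$ rather than $\log|D|$ comes from the product of the $\log|D|$ in the hash-family size and the $\log|D|$ from sorting inside Yannakakis's algorithm, and I would make sure the accounting of these two logarithmic factors is done explicitly.
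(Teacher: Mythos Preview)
Your overall architecture --- $k$-perfect hash family, filter relations by color, run Yannakakis --- is the right one and matches the paper's account (see the proof of the generalization in Section~\ref{sec:gen-YP}). However, there is a genuine gap in the step where you iterate over all $k!$ \emph{bijections} $\pi:\{x_1,\dots,x_k\}\to[k]$.

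The problem is that two constrained variables $x_i$ and $x_j$ that are \emph{not} joined by an inequality may take the \emph{same} value in a witnessing valuation $\nu$. In that case $h(\nu(x_i))=h(\nu(x_j))$ under every hash function, so no bijection $\pi$ can ever match the color pattern of $\nu$, and your filtered instances will all miss this witness. Concretely: take $\mI=\{x_1\neq x_2,\ x_2\neq x_3\}$ (so $k=3$), and suppose the only satisfying valuation has $\nu(x_1)=\nu(x_3)=a$ and $\nu(x_2)=b\neq a$. None of the $3!$ bijections allows $x_1$ and $x_3$ to share a color, so your algorithm wrongly answers ``no''. Your sentence ``with probability at least $k^{-k}$ the values $\nu(x_1),\dots,\nu(x_k)$ all receive distinct colors'' is simply false when those values are not all distinct to begin with.

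The fix, which is exactly what the paper does, is to iterate not over bijections but over all \emph{proper colorings} of the inequality graph $G^{\mI}$ with $k$ colors: there are at most $k^k=2^{O(k\log k)}$ of them, so the running time is unchanged. For any witness $\nu$, the $k$-perfect family contains an $h$ that separates the (at most $k$) distinct values among $\nu(x_1),\dots,\nu(x_k)$; setting $c_i=h(\nu(x_i))$ then gives a proper coloring of $G^{\mI}$ (adjacent variables have distinct $\nu$-values, hence distinct $h$-values), and the corresponding filtered subinstance retains $\nu$. This is the content of \autoref{lem:color}.
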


The proof is based on the {\em color-coding} technique introduced by Alon-Yuster-Zwick in \cite{AlonYZ08} 
that finds subgraphs in a graph.
In general, answering CQs with inequalities is closely related to finding patterns in a graph, which has been extensively studied 
in the context of graph theory and algorithms.
For example, using the idea of \emph{representative sets},  Monien \cite{Monien85} showed the following: given a graph $G(V, E)$ and a vertex $s \in V$, there exists a deterministic $O(k! \cdot |E|)$ algorithm that finds all vertices 
$v$ with a length-$k$ path from $s$ and also reports these paths (a trivial algorithm will run in time $O(|V|^k)$).
Later, Alon \ea\ proposed the much simpler color-coding technique that can solve the same problem in expected time $2^{O(k)} |V|$ for 
undirected graphs and $2^{O(k)} |E|$ for directed graphs. These two ideas have been widely used to find other patterns in a graph, \eg, for finding 
cycles of even length \cite{AYZ97, YusterZ97, AlonYZ08}.

In the context of databases, 
 Papadimitriou and Yannakakis \cite{PY97} showed that 
answering acyclic CQs with comparison operators between variables ($<, \leq$ etc.) is
harder than answering acyclic CQs with inequalities ($\neq$) since this problem is no longer fixed-parameter tractable.
The query containment problem for CQs with comparisons and inequalities ($\neq, <, \leq$), \ie, whether $Q_1 \subseteq Q_2$, 
has been shown to be $\Pi^p_2$-complete by van der Meyden \cite{vanderMeyden1997}; the effect of several syntactic properties of $Q_1, Q_2$ on the complexity of this problem has been studied by Kolaitis \ea\ \cite{Kolaitis1998}. 
Durand and Grandjean \cite{Durand06} improved Theorem~\ref{thm:PY97_acyclic} from \cite{PY97} by reducing the time complexity by a $log^2|D|$ factor.
Answering queries with views in the presence of comparison operators has been studied by Afrati \ea\ \cite{Afrati2002}. 
Rosati \cite{Rosati07} showed that answering CQs with inequalities is undecidable in description logic.  %, Gutierrez-BasultoIK12}. 
%In this work, we are interested in studying how the complexity of evaluating a (boolean) conjunctive query changes once we add inequalities ($\neq$) in the body of the query. This problem has been mostly studied in the context of finding subgraphs in a graph (see the appendix). In the database context, Yannakakis and Papadimitriou have studied this question in \cite{PY97}, but only for acyclic queries.

\medskip
\textbf{Our Contributions.~~} In this paper we focus on the combined complexity of 
answering CQs with inequalities ($\neq$) where we explore both the structure of the query and the inequalities.
Let $q$ be a CQ with a set of variables, %$vars(q)$ appearing in the body of $q$.
$\mI$ be a set of inequalities of the form $x_i \neq x_j$, and % where $x_i, x_j \in vars(q)$, 
 $k$ be the number of variables that appear in one of the inequalities in $\mI$ ($k < |q|$). 
We will use $(q, \mI)$ to denote $q$ with inequalities $\mI$, and $D$ to denote the database instance.
We will refer to the combined complexity in $|D|, |q|, k$ by default (and not the data complexity on $|D|$)
unless mentioned otherwise.

%Next we summarize our contributions in this paper.  

The main result in this paper says that 
any query plan for evaluating a CQ can be converted to a query plan for evaluating
the same CQ with arbitrary inequalities, and the increase in running time 
is a factor that only depends on the query:
\begin{theorem}[\textbf{Main Theorem}]\label{thm:main}
Let $q$ be a CQ that can be evaluated in time $T(|q|,|D|)$ using a Select-Project-Join  (SPJ) query plan $\mP_q$. Then, a query plan $\mP_{q, \mI}$ for $(q, \mathcal{I})$ can be obtained 
to evaluate $(q, \mathcal{I})$ in time $g(q, \mI) \cdot \max(T(|q|, |D|),~ |D|)$ for a function $g$ that is independent of the input database.~\footnote{Some queries like $q() = R(x) S(y)$ can be evaluated in constant time whereas to evaluate the inequality constraints we need to scan the relations in $D$.}
\end{theorem}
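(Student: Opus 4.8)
The plan is to keep the structure of the given SPJ plan $\mathcal{P}_q$ and to \emph{annotate} every intermediate tuple with a bounded-size ``witness set'' recording, compactly, enough about the values taken by the variables of $\mathcal{I}$ that have already been projected out so that all inequalities can be resolved later; the annotated projection is exactly the new operator promised in the Introduction. Let $X$, $|X|=k$, be the set of variables occurring in $\mathcal{I}$. For a node $S$ of $\mathcal{P}_q$ call a variable \emph{dead at $S$} if it occurs in the subplan rooted at $S$ but not in the schema of $S$, and \emph{live} otherwise, and set $\mathrm{dead}(S)=X\cap\{\text{dead variables at }S\}$. An \emph{augmented relation} at $S$ is a set of pairs $(t,\mathcal{W})$, where $t$ ranges over the tuples the original plan produces at $S$ and $\mathcal{W}$ is a set of partial assignments $w\colon\mathrm{dead}(S)\to\dom$, identified with the set of pairs $\{(x,w(x))\}$. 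The invariant to maintain at every $S$ and every $t$ is: $(i)$ every $w\in\mathcal{W}$ satisfies all inequalities of $\mathcal{I}$ with both endpoints in $\mathrm{dead}(S)$ and all inequalities with one endpoint in $\mathrm{dead}(S)$ and the other a live variable (whose value is read off $t$); and $(ii)$ $\mathcal{W}$ is a \emph{$k^2$-representative subfamily} of the family $\mathcal{F}_S(t)$ of \emph{all} such $w$ arising from realizations of $t$ in the subplan --- i.e.\ for every partial assignment $\phi$ of variables of $X$ outside $S$ with $|\mathrm{dom}(\phi)|\le k$, if some $w\in\mathcal{F}_S(t)$ uses no value forbidden for one of its coordinates by an inequality linking $\mathrm{dead}(S)$ to $\mathrm{dom}(\phi)$, then some $w\in\mathcal{W}$ does. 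Since such a $\phi$ forbids at most $\sum_{z\in\mathrm{dom}(\phi)}\deg_{\mathcal{I}}(z)\le k^2$ coordinate-value pairs, the classical bound on representative families over a uniform matroid (Bollob\'as; cf.\ the representative-set technique of~\cite{Monien85}) guarantees a valid $\mathcal{W}$ of size at most $\binom{k+k^2}{k}=2^{O(k\log k)}=:f(k)$, computable in time $\mathrm{poly}(f(k))$ from any finite superfamily.

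I would then define the three augmented operators. At a leaf, $\mathrm{dead}(S)=\emptyset$ and every tuple gets $\mathcal{W}=\{\emptyset\}$. Augmented selection passes witness sets through unchanged (and, at any operator, a tuple is discarded the moment it violates an inequality between two of its live variables). For augmented projection $\pi_U(S)$ the newly dead variables $x$ acquire the concrete value $t[x]$; for each output tuple $t'$ the source tuples $t\mapsto t'$ are processed one at a time --- each $w\in\mathcal{W}_S(t)$ is extended by the new pairs, extensions violating a freshly activated inequality are dropped, survivors are added to a running family, and that family is re-pruned to a $k^2$-representative subfamily of size $\le f(k)$ after \emph{each} source tuple (this incremental re-pruning keeps the per-projection cost linear in the size of $S$). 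For augmented join $S_1\bowtie S_2$, each output tuple $t$ comes from a \emph{unique} pair $(t_1,t_2)$, and $\mathrm{dead}(S_1)\cap\mathrm{dead}(S_2)=\emptyset$ because $\mathcal{P}_q$ computes $q$ (each query variable lies in some atom and the two subplans are disjoint), so one forms the $\le f(k)^2$ candidates $w_1\cup w_2$, drops those violating an inequality linking $\mathrm{dead}(S_1)$ to $\mathrm{dead}(S_2)$ or to a live variable, and prunes the rest to size $\le f(k)$.

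The substantive step is preservation of the invariant, and the only real subtlety is the join: a cross-inequality $x\neq x'$ with $x\in\mathrm{dead}(S_1)$, $x'\in\mathrm{dead}(S_2)$ constrains two values that are \emph{both} still being chosen, so neither child's representative set was prepared for it in isolation. The fix is to apply the two children's representative properties \emph{in sequence}. Given a test assignment $\phi$ with $|\mathrm{dom}(\phi)|\le k$ and a full-family witness $w_1^\ast\cup w_2^\ast$ compatible with $\phi$, first put $\phi_1:=\phi\cup w_2^\ast\cup(\text{live $X$-values of }t_2\text{ outside }S_1)$, which still has domain of size $\le k$, and invoke $(ii)$ at $S_1$ to replace $w_1^\ast$ by some $w_1\in\mathcal{W}_{S_1}(t_1)$ compatible with $\phi_1$; then put $\phi_2:=\phi\cup w_1\cup(\text{live $X$-values of }t_1\text{ outside }S_2)$ and invoke $(ii)$ at $S_2$ to get $w_2\in\mathcal{W}_{S_2}(t_2)$ compatible with $\phi_2$. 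Now $w_1\cup w_2$ survives the join's filter and is compatible with $\phi$, so it --- or, after pruning, another member compatible with $\phi$ --- lies in the new witness set, which re-establishes $(ii)$; $(i)$ holds by construction. The projection case is the same argument with no sibling, using that $t$ already encodes the values of the newly dead variables; the leaf case is immediate.

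Finally, correctness and timing. At the root every variable of $X$ is live (its value appears in the output tuple) or dead, so there are no ``outside'' variables, and applying $(ii)$ with $\phi=\emptyset$ shows that a surviving root tuple $t$ has a nonempty witness set iff some realization of $t$ satisfies all of $\mathcal{I}$ --- i.e.\ iff $t$ is an answer to $(q,\mathcal{I})$; hence the augmented plan outputs exactly the right tuples. For the running time, each augmented operator does at most $\mathrm{poly}(f(k))$ extra work per tuple it had to produce anyway --- using the unique provenance of join outputs and the incremental re-pruning at projections --- so its cost is $\mathrm{poly}(f(k))$ times that of the corresponding original operator; summing over the operators of $\mathcal{P}_q$ and adding an $O(|D|)$ scan of the base relations, the total is $g(q,\mathcal{I})\cdot\max(T(|q|,|D|),|D|)$ with $g(q,\mathcal{I})=\mathrm{poly}(f(k))=2^{O(k\log k)}$ depending only on $k$ (equivalently on $q$ and $\mathcal{I}$) and not on $D$. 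The $\max$ with $|D|$ is necessary because resolving the inequalities may force us to touch tuples that $\mathcal{P}_q$ never inspects, as in the footnote's example.
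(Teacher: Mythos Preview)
Your proposal is correct and establishes the theorem as stated. The approach, however, differs from the paper's in two substantive ways.

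First, the paper does \emph{not} annotate the plan in place. It first pulls all projections to the top of $\mathcal{P}_q$ to obtain $\mathcal{P}_{q,\top}=\Pi_X(\mathcal{P}_0)$, inserts the inequality selection $\sigma_{\mathcal{I}}$ just below the top projection, and then pushes the projections back down using analogues of the standard commutation rules (absorption, distribution over product, commutation with selection) in which ordinary projection is replaced by a new operator $\Pi_X^{\mathcal{H}}$. This $\mathcal{H}$-projection keeps, for each output group, a small $\mathcal{H}$-\emph{equivalent} subrelation of the input; the bipartite graph $\mathcal{H}$ records exactly which inequalities cross from the attributes being projected out to the attributes still to be seen in the rest of the plan. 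Correctness is argued via a weak plan-equivalence $\equiv_X^{\mathcal{H}}$ preserved by the three rewritten rules, rather than via your per-tuple witness invariant.

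Second, the paper's small-representative construction is not the Bollob\'as/Monien representative-set bound you invoke, but a custom tree-based algorithm that enumerates the minimally $\mathcal{H}$-forbidden tuples of a relation. This yields a bound of $e\cdot\phi(\mathcal{H})$ with $\phi(\mathcal{H})=\ell!\prod_j d_{\mathcal{H}}(y_j)$, which depends on the \emph{degrees} in the local bipartite graph $\mathcal{H}$ rather than just on $k$. The payoff is that for sparse inequality patterns the factor can be a constant: for $P^k$ with $\mathcal{I}=\{x_i\neq x_{i+2}\}$ the paper gets $\max_{\mathcal{H}}\phi(\mathcal{H})=O(1)$ and hence running time $O(k|D|)$, whereas your uniform $f(k)=\binom{k+k^2}{k}$ is always $2^{O(k\log k)}$. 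In the worst case both bounds coincide at $2^{O(k\log k)}$, so your argument suffices for the theorem as stated; the paper's finer accounting is what supports its later claims that the technique can beat color-coding on structured instances. Your two-step replacement argument at joins (freeze $w_2^\ast$ to swap $w_1^\ast$, then freeze $w_1$ to swap $w_2^\ast$) is the moral analogue of the paper's Rule-2$'$ proof, and your use of incremental re-pruning at projections is a clean way to keep the per-operator cost linear that the paper handles implicitly by bounding intermediate-relation blowup.
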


The key techniques used to prove the above theorem (Sections~\ref{sec:main} and \ref{sec:any-CQ}), 
and our other contributions in this paper (Sections~\ref{sec:gen-YP}, \ref{sec:treewidth-results}, and
\ref{sec:other}) are summarized below.
% along with their significance and comparisons. \red{rephrase?}

\begin{enumerate}[leftmargin=0cm,rightmargin=0cm]
\item \textbf{(Section~\ref{sec:main},~\ref{sec:any-CQ})}~
Our main technical contribution is a new {\em projection} operator, called $\mH$-projection. While the standard projection in relational algebra removes all other attributes for each tuple in the output, the new operator computes and retains a certain representation of the group of input tuples that contribute to each tuple in the output.
This representation is of size independent of the database and allows the updated query plan to still correctly filter out certain tuples that do not satisfy the inequalities. In Section~\ref{sec:main} we present the basic algorithmic components of this operator. In Section~\ref{sec:any-CQ}, we show how to apply this operator to transform the given query plan to another query plan that incorporates the added inequalities.  
\item  \textbf{(Section~\ref{sec:gen-YP})}~ We generalize~\autoref{thm:PY97_acyclic} to arbitrary CQs 
 with inequalities  (\ie, not necessarily acyclic) %a Boolean CQ)
	by a simple application of the color-coding technique. 
	In particular, we show (\textbf{\autoref{thm:any_cq_color}}) that any algorithm that computes a CQ $q$ on a database $D$ in time $T(|q|, |D|)$
	can be extended to an algorithm that can evaluate %any query with inequalities 
	$(q, \mI)$ in time $f(k) \cdot \log(|D|) \cdot T(|q|,|D|)$.
	While Theorem~\ref{thm:main} and Theorem~\ref{thm:any_cq_color} appear similar, 
	there are several advantages of using our algorithm over the color-coding-based technique
	which we also discuss in Section~\ref{sec:gen-YP}.
\item  \textbf{(Section~\ref{sec:treewidth-results})}~
The multiplicative factors dependent on the query in Theorem~\ref{thm:PY97_acyclic}, Theorem~\ref{thm:any_cq_color}, and (in the worst case) Theorem~\ref{thm:main} are exponential in $k$. 
In Section~\ref{sec:treewidth-results} we investigate the combined structure of the queries and inequalities that allow or forbid poly-time combined complexity. We show that, even if $q$ and $\mI$ have a simple structure, answering $(q,\mI)$ can be NP-hard in $k$ (\textbf{Proposition~\ref{thm:np-hard-acyclic}}). We also present a connection with the list coloring problem that allows us to answer certain pairings of queries with inequalities in poly-time combined complexity (\textbf{Proposition~\ref{prop:list-color}}).
\item \textbf{(Section~\ref{sec:other})}
We provide a sufficient condition for CQs, \emph{bounded fractional vertex cover}, that ensures poly-time combined complexity when evaluated with \emph{any set of inequalities $\mI$}.  Moreover, we show that families of CQs with unbounded integer vertex cover are NP-hard to evaluate in $k$ (\textbf{Theorem~\ref{thm:general_hardness}}).
	%Surprisingly, simplest queries like 
	%$F^K()  = R_1(x_1), R_2(x_2), \dots, R_k(x_k)$ become hard if we evaluate them with certain inequality patterns. 
	%Finally, we state a property for CQs with maximum arity 2 that allow them to evaluate in poly-time (combined complexity)
	%by a reduction to the list-coloring problem (\textbf{Proposition~\ref{prop:list-color}}).
\end{enumerate}

%In addition, in Section~\ref{sec:prelim}, we review some useful notions (\eg, tree decomposition, treewidth, acyclicity of a query) 
%and define query, inequality, and augmented graphs that are used in the rest of the paper;
%we conclude in Section~\ref{sec:conclusion} with directions of future research.

\section{Preliminaries}
\label{sec:prelim}
We are given a CQ $q$, a set of inequalities $\mI$, and a database instance $D$. The goal is to evaluate the query with inequality, denoted by $(q, \mI)$, on $D$.
We will use $vars(q)$ to denote the variables in the body of query $q$ and $\dom$ to denote the active domain of $D$. %, and we will write $vars(q) = \set{x_1, \dots, x_k}$.
The set of variables in the head of $q$ (\ie, the variables that appear in the output of $q$) is denoted by $head(q)$.
If $head(q) = \emptyset$, $q$ is called a {\em Boolean query}, while if $head(q) = vars(q)$, it is called a {\em full query}.
\par
The set $\mI$ contains inequalities of the form $x_i \neq x_j$, where
%\footnote{For two integers $a, b$ where $a < b$, $[a, b] = a, a+1, \cdots, b-1, b$ and $[a] = [1, a]$.} 
$x_i, x_j \in vars(q)$ such that they belong to two distinct relational atoms in the query. 
We do not consider  inequalities of the form $x_i \neq c$ for some constant $c$,  or of the form $x_i \neq x_j$ where $x_i, x_j$ only belong to the same relational atoms because
these can be preprocessed by scanning the database instance and filtering out the tuples that violate these inequalities in time $O(|\mI||D|)$. 
We will use $k$ to denote the number of variables appearing in $\mI$ ($k \leq |vars(q)| < |q|$).\\

\noindent
\textbf{Query Graph, Inequality Graph, and Augmented Graph.~~}
Given a CQ $q$ and a set of inequalities $\mI$, we define three  undirected graphs on $vars(q)$ as the set of vertices:
%that we will frequently mention:
\par
The \emph{query incidence graph} or simply the \emph{query graph},  denoted by $G^q$, of a query $q$ contains all the variables and the relational atoms in the
query as vertices; an edge exists between a variable $x$ and an atom $S$ if and only if $x$ appears in $S$.
\par
The \emph{inequality graph} $G^{\mI}$ adds an edge between $x_i, x_j \in vars(q)$ if the inequality $x_i \neq x_j$ belongs to $\mI$.
\par
The query $(q, \mI)$ can be viewed as an augmentation of $q$ with additional predicates, where for each inequality $x_i \neq x_j$ we add 
a relational atom $I_{ij}(x_i, x_j)$ to the query $q$, and add new relations $I_{ij}$ to $D$ instantiated to tuples $(a, b) \in \dom \times \dom$ such that $a \neq b$.
The \emph{augmented graph} $G^{q, \mI}$ is the query incidence graph of this augmented query.
Note that $G^{q, \mI}$ includes the edges from $G^q$, and for every edge $(x_i, x_j) \in G^{\mI}$, it includes
two edges $(x_i, I_{ij}), (x_j, I_{ij})$; examples can be found in Section~\ref{sec:treewidth-results}.\\

\noindent
\textbf{Treewidth and Acyclicity of a Query.~~}
We briefly review the definition of the treewidth of a graph and a query.  %(see, \eg \cite{ChekuriR00} , for further details).
%\dan{there are better citations than Chekuri.  Try to cite one of Gottlob's paper}

\begin{definition}[Treewidth]\label{def:td}
A \emph{tree decomposition} \cite{Robertson1984} of a graph $G(V, E)$ is a tree $T = (I,F)$, with a set $X(u) \subseteq V$
associated with each vertex $u \in I$ of the tree, such that the following conditions are satisfied:
\vspace{-0.2cm}
\begin{enumerate}
\itemsep0em
	\item For each $v \in V$, there is a $u \in I$ such that $v \in X(u)$,
  \item For all edges $(v_1, v_2) \in E$, there is a $u \in I$ with $v_1, v_2\in X(u)$,
  \item For each $v \in V$, the set $\{u \in I~:~  v \in X(u)\}$ induces a connected subtree of $T$.
\end{enumerate}
\vspace{-0.2cm}
The width of the tree decomposition $T = (I, F)$ is $\max_{u \in I} |X(u)| -1$. The \emph{treewidth} of $G$ is the
width of the tree decomposition of $G$ having the minimum width. 
\end{definition}
Chekuri and Rajaraman defined the \emph{treewidth of a query} $q$ as the treewidth of the query incidence graph $G^q$ \cite{ChekuriR00}.
%\emph{Acyclicity} of a query is defined in terms of the \emph{GYO-reduction} of its hypergraph \cite{Graham79, YO79}: 
A query can be viewed as a \emph{hypergraph} 
where every hyperedge corresponds to an atom in the query and comprises the variables as vertices that 
belong to the relational atom. The  \emph{GYO-reduction} \cite{Graham79, YO79}
of a query repeatedly removes \emph{ears} from the query hypergraph (hyperedges having at least one variable that does not belong to any other hyperedge) until no further ears exist.  A query is \emph{acyclic} if its GYO-reduction is the empty hypergraph, otherwise it is cyclic. For example, the query 
$P^k(~) = R_1(x_1, x_2), R_2(x_2, x_3), \dots, R_k(x_k, x_{k+1})$ is acyclic, whereas
the query $C^k(~) = R_1(x_1, x_2), R_2(x_2, x_3), \dots, R_k(x_{k}, x_{1})$ is cyclic. 
\par
There is another notion of width of a query called \emph{querywidth} $qw$ defined in terms of \emph{query decomposition} such that the decomposition tree
has relational atoms from the query instead of variables \cite{ChekuriR00}; The relation between the querywidth $qw$ and treewidth $tw$
of a query is given by the inequality $tw/a \leq qw \leq tw+1$, where $a$ is the maximum arity of an atom in $q$.
A query is acyclic if and only if its querywidth is 1; the treewidth of an acyclic query can be $> 1$ \cite{ChekuriR00}. 
The notion of \emph{hypertreewidth} has been defined
by Gottlob \ea\ in \cite{Gottlob1999}. A query can be evaluated in poly-time combined complexity if its treewidth, querywidth, or hypertreewidth is bounded 
\cite{Yan81, ChekuriR00, Gottlob1999, Kolaitis1998, Flum2002}. 

%\begin{definition}
%A query is \emph{acyclic} if its GYO-reduction is the empty
%hypergraph, otherwise it is cyclic.
%\end{definition} 
%\begin{example}
%The query $P^k(~) = R_1(x_1, x_2), R_2(x_2, x_3), \dots, R_k(x_k, x_{k+1})$ is acyclic, whereas
%the query $C^k(~) = R_1(x_1, x_2), R_2(x_2, x_3), \dots, R_k(x_{k}, x_{1})$ is cyclic. 
%\end{example}

\cut{
There is another notion of width of a query called \emph{querywidth} $qw$ defined in terms of \emph{query decomposition} such that the decomposition tree
has relational atoms from the query instead of variables \cite{ChekuriR00}; The relation between the querywidth $qw$ and treewidth $tw$
of a query is given by the inequality $tw/a \leq qw \leq tw+1$, where $a$ is the maximum arity of an atom in $q$.
A query is acyclic if and only if its querywidth is 1; the treewidth of an acyclic query can be $> 1$ \cite{ChekuriR00}. 
The notion of \emph{hypertreewidth} has been defined
by Gottlob \ea\ in \cite{Gottlob1999}. A query can be evaluated in poly-time combined complexity if its treewidth, querywidth, or hypertreewidth is bounded 
\cite{Yan81, ChekuriR00, Gottlob1999, Kolaitis1998, Flum2002}. %\footnote{
We have chosen to use tree decomposition and treewidth of 
a query (as opposed to query decomposition and querywidth) as we 
will investigate the structure of the input query and inequalities %, the augmented graph, and $G^{\mI}$ as a graph
as graphs later in Section~\ref{sec:treewidth-results}.
}

\section{Main Techniques}
\label{sec:main}

In this section, we present the main techniques used to prove Theorem~\ref{thm:main} 
with the help of a simple query $q_2$
%the key component for evaluating queries with inequalities: an algorithm 
that computes the cross product of two relations and projects onto the empty set. 
In particular, we consider the query $(q_2, \mI)$ with an arbitrary set of inequalities $\mI$, where
$q_2(~) = R(x_1, \dots, x_m), S(y_1, \dots, y_{\ell}).$
A na\"{\i}ve way to evaluate the query $(q_2, \mI)$ is to iterate over all pairs of tuples from $R$ and $S$, and check if any such pair satisfies the inequalities in $\mI$. This algorithm runs in time $O(m \ell |R| |S|)$. We will show instead how to evaluate $(q_2, \mI)$ in time $f(q_2, \mI) (|R|+|S|)$ for some function $f$ that is independent of the relations $R$ and $S$.

The key idea is to compress the information that we need from $R$ to evaluate the inequalities by computing a representation $R'$ of $R$ of such that the size of $R'$ only depends on $\mI$ and not on $R$. Further, we must be able to compute $R'$ in time $O(f'(\mI) |R|)$. Then, instead of iterating over the pairs of tuples from $R, S$, we can iterate over the pairs from $R'$ and $S$, which can be done in time $f''(q_2, \mI) |S|$. 
The challenge is to show that such a representation $R'$ exists and that we can compute it efficiently.

We now formalize the above intuition.
Let $X = \{x_1, \cdots, x_m\}$, $Y = \{y_1, \cdots, y_{\ell}\}$. Let $\mH = G^{\mI}$ denote the inequality graph;
%For the sake of simplicity, we denote with $\mH$ the inequality graph for $\mI$. S
since $q_2$ has only two relations, $\mH$ is a bipartite graph on $X$ and $Y$.
If a tuple $t$ from $S$ satisfies the inequalities in $\mI$ when paired with at least one tuple in $R$, we say that $t$ is $\mH$-accepted by $R$, and it contributes to the answer of $(q_2, \mI)$.
For a variable $x_i$ and a tuple $t$, let $t[x_i]$ 
denotes the value of the attribute of $t$ that corresponds to variable $x_i$.
%
%
%%%%%%%%% DEFINITION: H-ACCEPTED %%%%%%%%%%%%%%%%%%%%%%
\begin{definition}[$\mH$-accepted Tuples]
Let $\mH = (X,Y,E)$ be a bipartite graph. We say that a tuple $t$ over $Y$ is {\em $\mH$-accepted} by a relation $R$  if there exists some tuple $t_R \in R$ such that for every $(x_i,y_j) \in E$, we have $t_R[x_i] \neq t[y_j]$.
\end{definition}
Notice that $(q_2, \mI)$ is true if and only if there exists a tuple 
$t_S \in S$ that is $\mH$-accepted by $R$. 

\begin{example}[Running Example]\label{eg:running}
Let us define $\mH_0 = (X,Y,E)$ with $X = \set{x_1, x_2}$, $Y = \set{y_1, y_2, y_3}$ and 
$ E = \set{(x_1, y_1), (x_1, y_2), (x_2, y_2), (x_2, y_3)}$
(see Figure~\ref{fig:example}(a)) 
and consider the instance for $R$ as depicted in~\autoref{fig:example}(b). This setting will be used as our running example.

Observe that the tuple $t = (2,1,3)$ is $\mH_0$-accepted by $R$. Indeed consider the tuple $t' = (3,2)$ in $R$: it is easy to check that all inequalities are satisfied by $t,t'$. In contrast, the tuple $(2,1,2)$ is not $\mH_0$-accepted by $R$.
\end{example}

\begin{figure}[tb]
  \centering
  \subfloat[The bipartite graph $\mH_0$ ]{ %{0.4\textwidth}
  \begin{tabular}[b]{c}
     \includegraphics[width=0.25\textwidth]{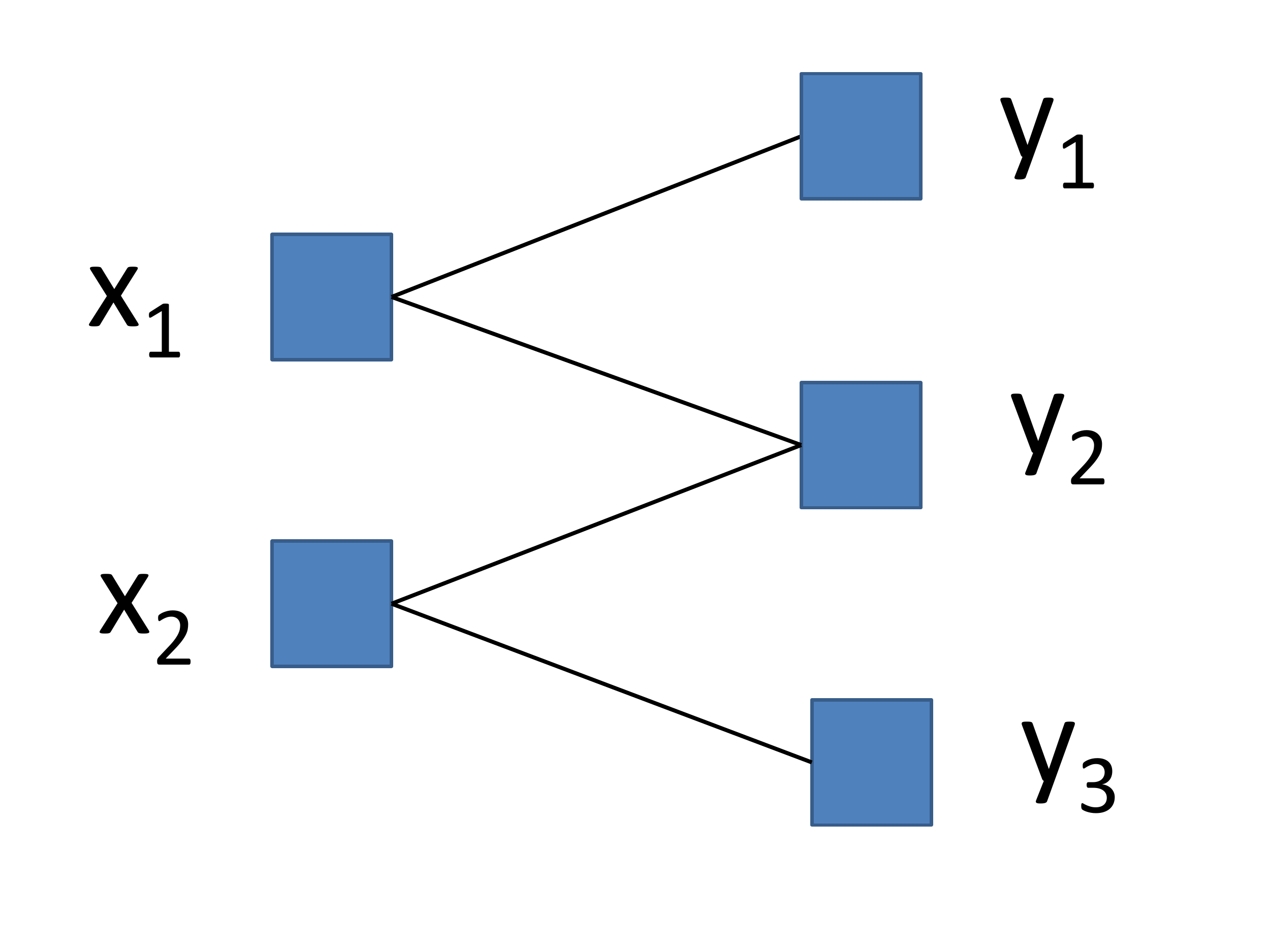}
     \end{tabular}
  }
  \qquad \qquad
  \subfloat[The instance of $R(x_1, x_2)$]{%{0.4\textwidth}
  \begin{tabular}[b]{ccccc}
%   $R =$ & $ \{(1,1)$, & $(1,2)$, &  $(2,3)$, & $(2,1)$, \\
%      & $(3,2)$, & $(1,4)$, & $(5,2)$, & $(2,2)$, \\
%      & $(2,4)$, & $(10,2)$, & $(1,8)  \}$ &  \\ \\ \\
   $R =$ & $ \{(1,1)$, & $(1,2)$, &  $(1,4)$, & $(1,8)$, \\
              & $(2,1)$,  & $(2,2)$, & $(2,3)$, & $(2,4)$,\\
              & $(3,2)$, &  $(5,2)$, &  $(10,2)  \}$ &  \\ \\ \\
\end{tabular}}

\caption{{The running example (Example~\ref{eg:running}) for Section~\ref{sec:main}.}}
\vspace{-0.5cm}
\label{fig:example}
\end{figure}

\begin{definition}[$\mH$-Equivalence]
Let $\mathcal{H} = (X,Y,E)$ be a bipartite graph. Two relations $R_1, R_2$ of arity $m = |X|$ are {\em $\mH$-equivalent} if for any tuple $t$ of arity $\ell = |Y|$, the tuple $t$ is $\mH$-accepted by $R_1$  if and only if $t$ is $\mH$-accepted by $R_2$. 
\end{definition}

$\mH$-equivalent relations form an equivalence class comprising instances of the same arity $m$. 
The main result in this section shows that for a given $R$, an $\mH$-equivalent instance $R' \subseteq R$ of size independent of $R$ can be efficiently constructed.
%
%
%
%%%%%%%%% MAIN THEOREM %%%%%%%%%%%%%%%%%%%%%%%%%%%
\begin{theorem}\label{lem:main-lem} %[Main Theorem]
Let $\mathcal{H} = (X,Y,E)$ be a bipartite graph ($|Y| = \ell$) and $R$ be a relation of arity $m = |X|$.
Let $\phi(\mH) =  \ell!  \prod_{j \in [\ell]} d_{\mH}(y_j) $, where $d_{\mH}(v)$ is the degree of a vertex $v$ in $\mH$.
There exists an instance $R' \subseteq R$ such that:
\begin{packed_enum} 
\item $R'$ is  $\mH$-equivalent with $R$
\item  $|R'| \leq e \cdot \phi(\mH)$ 
\item $R'$ can be computed in time $O(\phi(\mH) |R|)$.
\end{packed_enum}
\end{theorem}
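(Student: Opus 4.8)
The plan is to recast the statement as a purely combinatorial claim about transversals of a bounded set system, and then prove that claim by induction on $\ell = |Y|$.

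\emph{Reformulation.} For $t_R \in R$ and $j \in [\ell]$, write $B_j(t_R) = \{\, t_R[x_i] : (x_i,y_j) \in E \,\}$ for the set of values that $t_R$ forbids at coordinate $y_j$; note $|B_j(t_R)| \le d_{\mathcal{H}}(y_j)$. A tuple $t$ over $Y$ is $\mathcal{H}$-accepted by a relation exactly when some member $t_R$ satisfies $t[y_j] \notin B_j(t_R)$ for every $j$, so the whole problem depends on $R$ only through the family $\{\,(B_1(t_R),\dots,B_\ell(t_R)) : t_R \in R\,\}$ of $\ell$-tuples of small sets, and $\mathcal{H}$-equivalence means inducing the same collection of ``avoidable'' transversals. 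We may assume $R \ne \emptyset$ and $d_{\mathcal{H}}(y_j) \ge 1$ for all $j$ (an isolated $y_j$ imposes no constraint, and this case is anyway excluded in the intended use, where every variable of an inequality is incident to an edge).

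\emph{Induction on $\ell$.} For $\ell = 0$ we take $R'$ to be any single tuple of $R$. For the inductive step single out $y_\ell$, set $d = d_{\mathcal{H}}(y_\ell)$, and let $\mathcal{H}^-$ be $\mathcal{H}$ with $y_\ell$ and its incident edges removed. Group the tuples $t$ over $Y$ that must be ``covered'' by the value $a = t[y_\ell]$, and use the reduction: for fixed $a$, a tuple $t$ with $t[y_\ell] = a$ is $\mathcal{H}$-accepted by $R$ iff $t$ restricted to $Y \setminus \{y_\ell\}$ is $\mathcal{H}^-$-accepted by the sub-relation $R_a = \{\, t_R \in R : a \notin B_\ell(t_R)\,\}$. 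The heart of the argument is to isolate a \emph{small} set $V$ of ``relevant'' values of $a$ — those for which $R_a$ behaves differently, on the $\mathcal{H}^-$ problem, from a fixed reference relation — and to show $|V|$ is bounded by a quantity depending only on $\ell$ and $d$ (obtained by at most $\ell$ rounds of refinement, each adding at most $d$ new values), rather than by anything that grows with $R$; this generalises the one-coordinate observation that $\bigcap_{t_R \in R} B_\ell(t_R)$ has size at most $d$. For each $a \in V$ we recurse on $(\mathcal{H}^-, R_a)$; for all other values a single recursively-built $\mathcal{H}^-$-equivalent subset of $R$ suffices, since when $a \notin V$ the witness produced for the $\mathcal{H}^-$ problem automatically also meets the coordinate-$y_\ell$ constraint; finally we add at most $\prod_{j} d_{\mathcal{H}}(y_j)$ further tuples to patch the degenerate configurations. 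Unioning these sets and applying the inductive size bound yields a recurrence of the form $f(\ell) \le \ell \cdot d \cdot f(\ell-1) + \prod_{j \le \ell} d_{\mathcal{H}}(y_j)$, which solves to $f(\ell) = \lfloor e\,\ell! \rfloor \prod_{j} d_{\mathcal{H}}(y_j) \le e\,\phi(\mathcal{H})$, giving parts (1) and (2).

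\emph{Running time and main obstacle.} The construction is directly algorithmic: bucketing the tuples of $R$ by the values they place on the attributes incident to $y_\ell$ produces all the needed $R_a$'s and the set $V$ in a single pass, and each level of the recursion does $O(\phi(\mathcal{H}))$ work per surviving tuple, so the total is $O(\phi(\mathcal{H})\,|R|)$, which is part (3). I expect the delicate point to be exactly the bookkeeping inside the inductive step: proving that the set of ``relevant'' values at coordinate $y_\ell$ stays bounded by a function of $\ell$ and $d$ alone, so that the recursion is essentially linear rather than blowing up multiplicatively at every level. This is what pins down the precise form of $\phi$ and the $\ell!$ factor, and it is also what one must get right to keep the running time $O(\phi(\mathcal{H})\,|R|)$.
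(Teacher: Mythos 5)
Your reformulation via the forbidden-value sets $B_j(t_R)$ and the reduction ``fix $t[y_\ell]=a$, then $\mH$-acceptance by $R$ equals $\mH^-$-acceptance by $R_a$'' are both correct, but the proof has a genuine gap exactly where you flag it: the existence of a set $V$ of ``relevant'' values at the distinguished coordinate whose size is bounded by roughly $\ell\cdot d$ (``$\ell$ rounds of refinement, each adding at most $d$ values'') is asserted, never argued, and it is the load-bearing step. With the natural choice — recursively build one $\mH^-$-equivalent $S\subseteq R$ and let $V=\bigcup_{s\in S}B_\ell(s)$, which is what your ``witness for the $\mH^-$ problem also meets the $y_\ell$ constraint'' argument actually supports — you only get $|V|\le d_{\mH}(y_\ell)\cdot f(\ell-1)$, so the recurrence becomes $f(\ell)\le f(\ell-1)\bigl(1+d\, f(\ell-1)\bigr)$, which is doubly exponential in $\ell$, not $e\,\ell!\prod_j d_{\mH}(y_j)$. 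To make $|V|$ depend only on $\ell$ and $d$ and not on $f(\ell-1)$ you would need $S$ to have special structure at coordinate $y_\ell$, and nothing in the proposal produces that; the ``patch of size $\prod_j d_{\mH}(y_j)$ for degenerate configurations'' is likewise unspecified. The running-time claim inherits the same problem, since the recursion over $a\in V$ is what must be kept from re-scanning $R$ too many times.

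The reason the $\ell!\prod_j d_{\mH}(y_j)$ bound is attainable is that one should not branch on a \emph{fixed} coordinate $y_\ell$, but on the coordinate at which a chosen reference tuple blocks the target: pick any $t_1\in R$; a tuple $t$ not accepted by $t_1$ must satisfy $t[y_j]\in B_j(t_1)$ for some $j$, so it suffices to branch over the at most $\sum_j d_{\mH}(y_j)$ edges $(x_i,y_j)$ of $\mH$, fix $t[y_j]=t_1[x_i]$, delete $y_j$, and recurse on the tuples of $R$ compatible with that value. Each root-to-leaf branch fixes distinct $Y$-vertices, so the tree has at most $\ell!\prod_j d_{\mH}(y_j)$ leaves and at most $e\cdot\phi(\mH)$ nodes, and one reference tuple per node gives the representative set. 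This is precisely what the paper does with the tree $T_{\mH}(R)$: it is built in a single scan of $R$ by expanding the $\bot^*$ leaves (Lemma~\ref{lem:tree-props}), which simultaneously yields the size bound of Lemma~\ref{lem:semi-main-lem}, the equivalence of $\mE_{\mH}(R)$ with $R$ (Lemma~\ref{lem:R_equiv}), and the $O(\phi(\mH)\,|R|)$ running time. If you want to salvage your induction, replace the fixed coordinate $y_\ell$ by this ``branch on the blocked edge of a reference tuple'' step; then your own recurrence $f(\ell)\le 1+\sum_j d_{\mH}(y_j)\, f(\ell-1)$ goes through and solves to at most $e\cdot\phi(\mH)$.
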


To describe how the algorithm that constructs $R'$ works, we need to introduce another notion that describes the tuples of arity $\ell$ that are \emph{not} $\mH$-accepted by $R$.
Let $\bot$ be a value that does not appear in the active domain $\dom$.
%
%
%%%%%%%%% DEFINITION: H-FORBIDDEN %%%%%%%%%%%%%%%%%%%%%%
\begin{definition}[$\mH$-Forbidden Tuples]
Let $\mathcal{H} = (X,Y,E)$ be a bipartite graph and $R$ be a relation of arity $m = |X|$. A tuple $t$ over $Y$ with values in $\dom \cup \set{\bot}$ is
{\em $\mH$-forbidden} for $R$ if for any tuple $t_R \in R$ there exist $y_j \in Y$ and $(x_i,y_j) \in E$ such that $t[y_j] = t_R[x_i]$.
\end{definition}

\begin{example}[Continued]
The reader can verify from \autoref{fig:example} that tuples of the form $(1,2,x)$, where $x$ can be any value, are $\mH_0$-forbidden for $R$. Furthermore, notice that the tuple $(1,2,\bot)$ is also $\mH_0$-forbidden (in our construction $(1,2,\bot)$ being $\mH_0$-forbidden implies that any tuple of the form $(1,2,x)$ is $\mH_0$-forbidden). 
\end{example}

Next we formalize the intuition of the above example. We say that a tuple $t_1$ defined over  $Y$ {\em subsumes} another tuple $t_2$ defined over $Y$ if for any $y_j \in Y$, either $t_1[y_j] = \bot$ or $ t_1[y_j] = t_2[y_j]$. 
Observe that if $t_1$ subsumes $t_2$ and $t_1$ is $\mH$-forbidden, $t_2$ must be $\mH$-forbidden as well. A tuple is {\em minimally $\mH$-forbidden} if it is $\mH$-forbidden and is not subsumed by any other $\mH$-forbidden tuple. In our example, $(1,2,1)$ is subsumed by $(1,2,\bot)$, so it is not minimally $\mH_0$-forbidden, but  the tuple $(1,2,\bot)$ is.
Lemma~\ref{lem:semi-main-lem} stated below will be used to prove~\autoref{lem:main-lem}:
\begin{lemma}\label{lem:semi-main-lem}
Let $\mH = (X,Y,E)$ be a bipartite graph, and $R$ be a relation defined on $X$. 
Then, the set of all minimally $\mH$-forbidden tuples of $R$ has size at most  $\phi(\mH) = \ell!  \prod_{j \in [\ell]} d_{\mH}(y_j) $ and it can be computed in time $O(\phi(\mH) |R|)$.
\end{lemma}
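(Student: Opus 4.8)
## Proof Proposal for Lemma~\ref{lem:semi-main-lem}

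The plan is to characterize minimally $\mH$-forbidden tuples structurally and then bound their number and the cost of enumerating them. The crucial observation is that a tuple $t$ over $Y$ is $\mH$-forbidden for $R$ exactly when every tuple $t_R \in R$ is ``killed'' by some coordinate: there is a $y_j$ and an edge $(x_i, y_j) \in E$ with $t[y_j] = t_R[x_i]$. So think of $R$ as a set of ``threats,'' and a forbidden tuple as a choice of values that, for each threat $t_R$, hits at least one of the entries $t_R[x_i]$ along an edge incident to some $y_j$. To keep a forbidden tuple \emph{minimal}, the value $t[y_j]$ for each coordinate $y_j$ must be forced: either $t[y_j] = \bot$ (used for no killing), or $t[y_j]$ equals some value $v$ that is actually needed to kill at least one surviving tuple via an edge $(x_i, y_j)$.

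First I would set up a ``peeling'' process over the coordinates $y_1, \dots, y_\ell$ (in a fixed but arbitrary order; the $\ell!$ factor will come from summing over orders, or equivalently over which coordinate handles which ``first uncovered'' tuple). Formally, I would show that every minimally $\mH$-forbidden tuple $t$ arises as follows: process $y_1$, then $y_2$, etc.; when processing $y_j$, the ``live'' set of tuples of $R$ is those not yet killed by coordinates $y_1, \dots, y_{j-1}$; either we set $t[y_j] = \bot$ (and the live set is unchanged), or we pick a value $v$ among $\{t_R[x_i] : t_R \text{ live}, (x_i, y_j) \in E\}$ and set $t[y_j] = v$, which kills all live $t_R$ with $t_R[x_i] = v$ for some edge $(x_i, y_j) \in E$. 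At the end, all tuples of $R$ must be killed. The number of distinct values available for $y_j$ at its turn is at most the number of distinct values appearing in positions $\{x_i : (x_i, y_j) \in E\}$ over the current live set; naively this is $|R|$, which is too weak, so the counting must be sharper.

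The key to the $\prod_j d_{\mH}(y_j)$ bound — and the main obstacle — is to argue that along any single branch of the peeling, the number of ``genuine'' (non-$\bot$) choices for $y_j$ that can lead to a minimal forbidden tuple is bounded by $d_{\mH}(y_j)$, not by $|R|$. The intuition is a sunflower/representative-set style argument: if there were more than $d_{\mH}(y_j)$ incomparable ways to extend, one of them would be subsumed by another, contradicting minimality; more precisely, a minimal forbidden tuple assigns $y_j$ a value only if that value is ``witnessed'' by a tuple of $R$ that is killed \emph{at coordinate $y_j$ and nowhere earlier}, and the edges incident to $y_j$ (there are $d_{\mH}(y_j)$ of them) can support only $d_{\mH}(y_j)$ such essentially-different witnesses. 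I would make this rigorous by an exchange argument on the witness tuples. Multiplying over $j \in [\ell]$ gives $\prod_j d_{\mH}(y_j)$ per ordering, and the $\ell!$ accounts for the choice of processing order needed to capture all minimal tuples (each minimal forbidden tuple is generated by at least one order), yielding the bound $\phi(\mH) = \ell! \prod_{j\in[\ell]} d_{\mH}(y_j)$ on the total count.

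For the running-time claim, I would implement the peeling as a recursion that, at each node, scans the current live sublist of $R$ once to collect the candidate values for the current coordinate; the recursion tree has at most $\phi(\mH)$ leaves and depth $\ell$, and the work along any root-to-leaf path is $O(|R|)$ for the scans plus lower-order bookkeeping, so the total is $O(\phi(\mH)\,|R|)$ after charging the scan at each level to its subtree. A final de-duplication pass over the at most $e \cdot \phi(\mH)$ candidates produced (removing subsumed tuples) keeps the output minimal and stays within the same time bound. I expect the delicate part to be the exact statement and proof of the per-coordinate $d_{\mH}(y_j)$ bound together with correctly justifying why summing over orderings (the $\ell!$) captures \emph{all} minimal forbidden tuples without double-counting beyond the stated bound.
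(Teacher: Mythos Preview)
Your proposal has a genuine gap at exactly the point you yourself flag as ``the main obstacle.'' Processing the coordinates $y_1,\dots,y_\ell$ in a fixed order and then trying to argue that only $d_{\mH}(y_j)$ \emph{values} matter at step $j$ does not work as stated: for a fixed $y_j$, the set $\{t_R[x_i] : t_R \text{ live},\ (x_i,y_j)\in E\}$ can contain arbitrarily many distinct values, and minimality alone does not cut this down to $d_{\mH}(y_j)$. Your proposed ``exchange/sunflower'' fix is not spelled out, and I do not see a direct way to make it go through without essentially reinventing the paper's construction.

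The paper's argument is organized differently and this difference is the whole point. It does \emph{not} iterate over coordinates of $Y$; it iterates over the tuples of $R$ in some order $t_1,t_2,\dots$ and grows a labeled tree $T_{\mH}(R)$. Each leaf with label $\bot^*$ encodes a currently $\mH$-forbidden tuple; when the next $t_a$ is read, a leaf $v$ that is not already killed by $t_a$ is expanded by adding, for every edge $(x_i,y_j)\in E$ with $y_j$ not yet used on the root-to-$v$ path, a child whose edge label is $(y_j,\,t_a[x_i])$. The crucial consequence is that the branching is over \emph{edges of $\mH$}, not over values: along any root-to-leaf path each $y_j$ appears at most once and is paired with one of its $d_{\mH}(y_j)$ neighbors, so the number of leaves is at most $\ell!\prod_j d_{\mH}(y_j)=\phi(\mH)$. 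An easy induction shows every minimally $\mH$-forbidden tuple is encoded at some $\bot^*$-leaf, and the running time is $O(\phi(\mH)\,|R|)$ since each of the at most $\phi(\mH)$ leaves is touched once per scanned tuple. In short, the $d_{\mH}(y_j)$ factor comes for free because the \emph{value} placed at $y_j$ is determined by the specific tuple $t_a$ that triggers the expansion; your coordinate-first peeling loses this determinism, which is why you are left needing an unproved bound on values.
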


To prove the above lemma, we present an algorithm that encodes all the minimally $\mH$-forbidden tuples of $R$ in a rooted tree $T_{\mH}(R)$. The tree has labels for both the nodes and the edges. More precisely, the label $L(v)$ of some node $v$ is either a tuple in $R$ or a special symbol $\bot^*$ (only the leaves can have label $\bot^*$), while the label of an edge of the tree is a pair of the form $(y_j, a)$, where $y_j \in Y$ and $a \in \dom$. The labels of the edges are used to construct a set of $\mH$-forbidden tuples that includes the set of all minimally $\mH$-forbidden tuples as follows:

\emph{For each leaf node $v$ with label $L(v) = \bot^*$, let $(y_{j_1}, a_{j_1}), \dots, (y_{j_m}, a_{j_m})$ be the  edge labels in the order they appear from the root to the leaf. Then, the tuple $\tup(v)$ defined on $Y$ as follows
is an $\mH$-forbidden tuple (but not necessarily minimally $\mH$-forbidden)}:
$$ \tup(v)[y_j] = \begin{cases}
   a_j & \text{if } j \in \set{j_1, \dots, j_m} \\
   \bot       & \text{otherwise } 
\end{cases}$$

\begin{figure}[t]
\centering
\resizebox{0.9\textwidth}{!}{\input{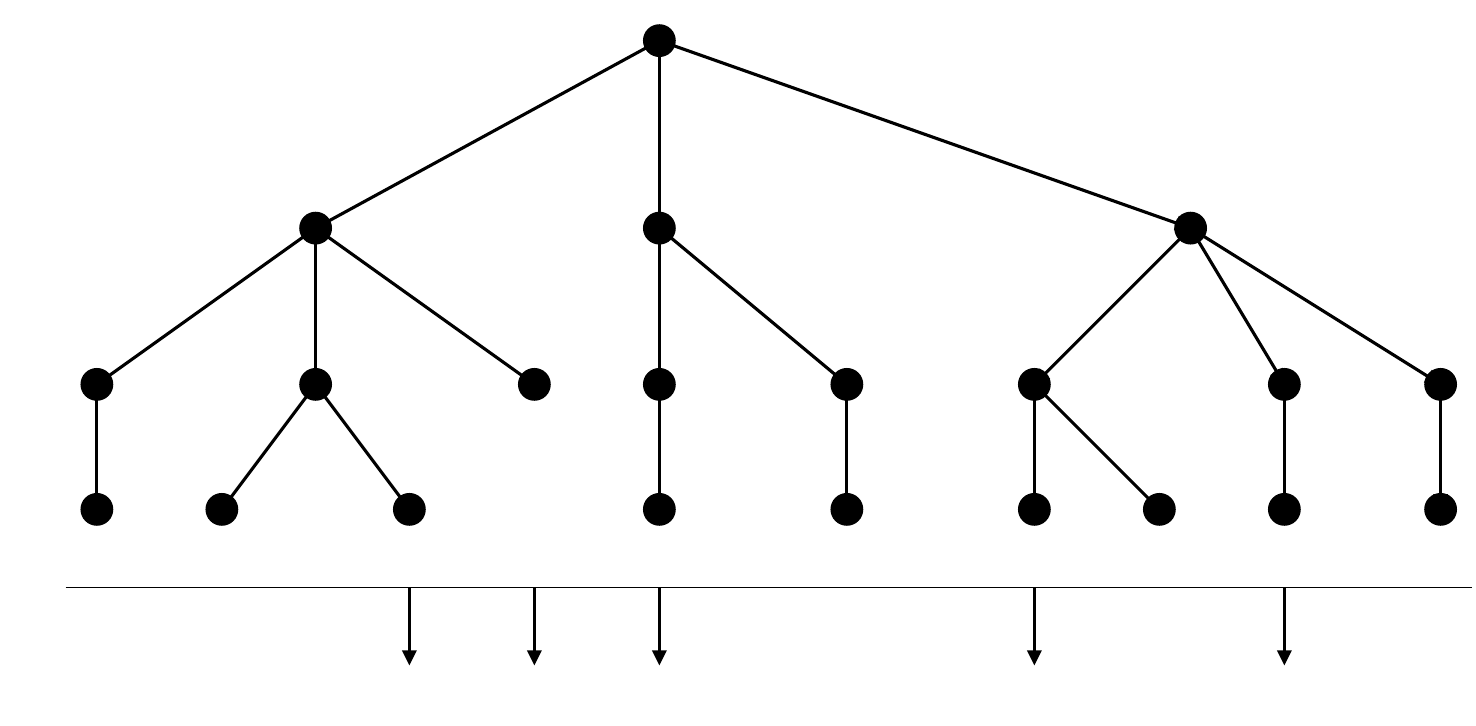_t}}
\caption{The tree $T_{\mH}(R)$ of the running example. The diagram also presents how the $\mH_0$-forbidden tuples are encoded by the tree.}
\vspace{-0.4cm}
\label{fig:example_tree}
\end{figure}

\noindent \textbf{Construction of $T_{\mH}(R)$.}
We construct $T_{\mH}(R)$ inductively by scanning through the tuples of $R$ in an arbitrary order. 
As we read the next tuple $t$ from $R$, we need to ensure that the $\mH$-forbidden tuples that have been so far encoded by the tree are not $\mH$-accepted by $t$: we achieve this by expanding some of the leaves and adding new edges and nodes to the tree. 
Therefore, after the algorithm has consumed a subset $R'' \subseteq R$, the partially constructed tree will be $T_{\mH}(R'')$.

For the base of the induction, where $R'' = \emptyset$, we define $T_{\mH}(\emptyset)$ as a tree that contains a single node (the root $r$) with label $L(r) = \bot^*$. 

For the inductive step, let $T_{\mH}(R'')$ be the current tree and let $t \in R$ be the next scanned tuple.  
The algorithm processes (in arbitrary order) all the leaf nodes $v$ of the tree with $L(v) = \bot^*$. 
Let $(y_{j_1}, a_{j_1}), \dots, (y_{j_p}, a_{j_p})$ be the edge labels in the order they appear on the path from root $r$ to $v$. 
We distinguish two cases (for tuple $t$ and a fixed leaf node $v$):
\begin{enumerate}
\item There exists $j \in \{j_1, \dots, j_p \}$ and edge $(x_{i}, y_{j}) \in E$ such that $t[x_{i}] = a_{j}$. 
In this case, $\tup(v)$ will be $\mH$-forbidden in $R'' \cup \{t \}$; therefore, nothing needs to be done for this $v$.
\item Otherwise (\ie, there is no such $j$), $\tup(v)$ is not a $\mH$-forbidden tuple for $R'' \cup \{t\}$.
We set $L(v) = t$ (therefore, we never reassign the label of a node that has already been assigned to some tuple in $R$). 
%Later in Lemma~\ref{lem:main-lem}, we will see that $t$ remains in the smaller representative set of $R$ as it can be used to verify the inequalities. 
There are two cases:
\begin{enumerate}
\item If $p = \ell$,  we cannot expand further from $v$ (and will not expand in the future because now $L(v) \neq \bot^*$), since all $y_j$-s have been already set. 
\item If $p < \ell$, we expand the tree at node $v$. For every edge $(x_i,y_j) \in E$ such that $j \notin \{j_1, \dots, j_p \}$, we add a fresh  node $v^{i,j}$ with $L(v^{i,j}) = \bot^*$ and an edge $(v,v^{i,j})$ with label $(y_{j}, t[x_i])$. Notice that the tuples $\tup(v^{i,j})$ will be now $\mH$-forbidden in $R'' \cup \{t\}$.
 \end{enumerate}
\end{enumerate}
The algorithm stops when either (a) all the tuples from $R$ are scanned or (b) there exists no leaf node with label $\bot^*$. 

\begin{example}[Continued]
We now illustrate the steps of the algorithm through the running example. After reading the first tuple, $t_1 = (1,1)$, the algorithm expands the root node $r$ to three children (for $y_1, y_2, y_3$), labels $L(r) = (1, 1)$ and labels the new edges as $(y_1, 1), (y_2, 1), (y_3, 1)$ and the new three leaves as $\bot^*$.

Suppose the second tuple $t_2 = (1,2)$ is read next. 
First consider the leaf node with label $\bot^*$ that is reached from the root through the edge $(y_1,1)$. At this point, the node represents the tuple $(1,\bot,\bot)$. Observe that are in case (1) of the algorithm, and so the node is not expanded ($t_2[x_1] = 1$ and $m=1 < 3 = \ell$). 
Consider now the third leaf node with label $\bot^*$, reached through the edge $(y_3,1)$. We are now in case (2), and we have to expand the node. The available edges (since we have already assigned a value to $y_3$) are $(x_1, y_1), (x_1, y_2), (x_2, y_2)$. Hence, the node is labeled  $(1,2)$, and expands into three children, one for each of the above edges. These edges are labeled by $(y_1, 1), (y_2, 1), (y_2, 2)$ respectively; then the algorithm continues and at the end the tree in Figure~\ref{fig:example_tree}
is obtained.
\end{example}

The $\mH$-forbidden tuples encoded by the tree are not necessarily minimally $\mH$-forbidden. 
However, for every minimally $\mH$-forbidden tuple there exists a node in the tree that encodes it.
In the running example, we find only two minimally $\mH_0$-forbidden tuples for $R$: $(1,2,\bot)$ and $(2,1,2)$. Furthermore, the constructed tree is not unique for $R$ and depends on the order in which the tuples in $R$ are scanned. The following lemma sums up the properties of the tree construction, and directly implies Lemma~\ref{lem:semi-main-lem}.

\begin{lemma}\label{lem:tree-props}
$T_{\mH}(R)$ satisfies the following properties:
\begin{enumerate}
\item The number of leaves is at most $\phi(\mH) = \ell!  \prod_{j \in [\ell]} d_{\mH}(y_j)$. 
\item Every leaf of $T_{\mH}(R)$ with label $\bot^*$ encodes a $\mH$-forbidden tuple.
\item Every minimally $\mH$-forbidden tuple is encoded by some leaf of the tree with label $\bot^*$.
\end{enumerate}
\end{lemma}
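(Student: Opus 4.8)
The plan is to establish the three properties essentially independently, relying on one structural fact about $T_{\mH}(R)$ throughout: each node is uniquely identified by the sequence $\sigma$ of edges of $E$ along its root-to-node path, since the child $v^{i,j}$ created from a node $v$ is tied to the unique edge $(x_i,y_j)\in E$ that spawned it, whose $Y$-endpoint $y_j$ was unset at $v$; hence the $Y$-endpoints occurring in $\sigma$ are pairwise distinct and the depth of any node is at most $\ell$. I will also use repeatedly that, since $\bot\notin\dom$, any equality $t[y_{j'}]=t_R[x_{i'}]$ with $(x_{i'},y_{j'})\in E$ (a ``witness'' that $t_R$ forbids $t$ at coordinate $j'$) forces $t[y_{j'}]\neq\bot$; and we may assume $d_{\mH}(y_j)\ge 1$ for every $j$ (an unconstrained $y$-variable can be discarded), so $\phi(\mH)\ge 1$. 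For Property~1, I would let $\Sigma$ be the set of length-$\ell$ sequences of edges of $E$ whose $Y$-endpoints are pairwise distinct; choosing the order in which the $\ell$ distinct endpoints appear ($\ell!$ ways) and an incident edge for each endpoint $y_j$ gives $|\Sigma|=\ell!\prod_j d_{\mH}(y_j)=\phi(\mH)$. Map a leaf $v$ with edge-path $\sigma_v$ of length $p$ to the element of $\Sigma$ obtained by appending, for each missing $Y$-endpoint taken in increasing index order, that endpoint's least-indexed incident edge. This map is injective: if distinct leaves $v,w$ with $\mathrm{depth}(v)\le\mathrm{depth}(w)$ had equal images, then $\sigma_v$ (the length-$\mathrm{depth}(v)$ prefix of the common image) would also be a prefix of $\sigma_w$, making $v$ a proper ancestor of $w$, contradicting that $v$ is a leaf. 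Hence the number of leaves is at most $|\Sigma|=\phi(\mH)$.

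For Property~2, I would prove by induction on the scanned prefix $R''$ the invariant ``every $\bot^*$-labeled leaf $v$ of $T_{\mH}(R'')$ has $\tup(v)$ $\mH$-forbidden for $R''$''. The base case $R''=\emptyset$ holds since the only leaf is the root with $\tup(\mathrm{root})=(\bot,\dots,\bot)$, vacuously $\mH$-forbidden. For the step, scan $t$; a $\bot^*$-leaf of the new tree is either (i)~a surviving $\bot^*$-leaf $v$, i.e.\ one treated in case~1, where the case-1 condition supplies a coordinate witnessing that $t$ forbids $\tup(v)$, while $R''$ forbids it by hypothesis; or (ii)~a fresh child $w=u^{i,j}$ of a node $u$ that received $L(u)=t$ in this step, where $\tup(w)$ agrees with $\tup(u)$ except that the previously-unset coordinate $j$ is set to $t[x_i]$, so $t$ forbids $\tup(w)$ via $y_j$, and for each $t_R\in R''$ a witness that $t_R$ forbids $\tup(u)$ (which exists since $u$ was a $\bot^*$-leaf of $T_{\mH}(R'')$, by hypothesis) sits on a non-$\bot$ coordinate of $\tup(u)$, hence is $\neq j$ and still witnesses that $t_R$ forbids $\tup(w)$. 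Either way the tuple is $\mH$-forbidden for $R''\cup\{t\}$. If the algorithm halts with no $\bot^*$-leaf the claim is vacuous; otherwise it scans all of $R$ and the invariant at $R''=R$ is Property~2.

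For Property~3, I would fix a minimally $\mH$-forbidden tuple $t^\ast$ of $R$ and prove by induction on $R''$ the invariant ``$T_{\mH}(R'')$ has a $\bot^*$-leaf $v$ with $\tup(v)$ subsuming $t^\ast$'' (base case: the root, since $(\bot,\dots,\bot)$ subsumes everything). For the step, scan $t$, take the $\bot^*$-leaf $v$ from the hypothesis, and --- as $t^\ast$ is $\mH$-forbidden and $t\in R$ --- pick $(x_{i^\ast},y_{j^\ast})\in E$ with $t^\ast[y_{j^\ast}]=t[x_{i^\ast}]$. If $v$ is treated in case~1 it is unchanged and still subsumes $t^\ast$. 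Otherwise $v$ is treated in case~2; then $j^\ast$ must be unset in $\tup(v)$, for if it were set then $\tup(v)[y_{j^\ast}]=t^\ast[y_{j^\ast}]=t[x_{i^\ast}]$ would place $v$ in case~1. Thus $\tup(v)$ has an unset coordinate, the path to $v$ has length $<\ell$, and case~2(b) applies; among the new children is $w=v^{i^\ast,j^\ast}$, for which $\tup(w)$ equals $\tup(v)$ with coordinate $j^\ast$ set to $t[x_{i^\ast}]=t^\ast[y_{j^\ast}]$, so $\tup(w)$ still subsumes $t^\ast$ and $w$ is a $\bot^*$-leaf. In particular, while a minimally $\mH$-forbidden tuple exists the algorithm never halts with no $\bot^*$-leaf, so it scans all of $R$; then some $\bot^*$-leaf $v$ has $\tup(v)$ subsuming $t^\ast$, and $\tup(v)$ is $\mH$-forbidden for $R$ by Property~2. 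Since $t^\ast$ is minimally $\mH$-forbidden and $\tup(v)$ is an $\mH$-forbidden tuple subsuming $t^\ast$, minimality forces $\tup(v)=t^\ast$, so $t^\ast$ is encoded by a $\bot^*$-leaf.

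I expect Property~1 to be the main obstacle: pinning the leaf count to exactly $\phi(\mH)$ --- rather than the weaker $e\cdot\phi(\mH)$ bound one gets by counting all nodes of the tree --- depends on the injectivity of the canonical extension of leaf paths to length-$\ell$ edge sequences, which in turn rests on the ``shared prefix $\Rightarrow$ proper ancestor $\Rightarrow$ not a leaf'' dichotomy. The inductions for Properties~2 and~3 are routine modulo careful bookkeeping that forbidding-witnesses occupy non-$\bot$ coordinates, together with the key observation in Property~3 that being in case~2 forces the witness coordinate $j^\ast$ to be unset --- exactly what lets the tree commit one more coordinate of $t^\ast$.
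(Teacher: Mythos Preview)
Your proof is correct and takes essentially the same approach as the paper: the inductions for (2) and (3) mirror the paper's (your separating the ``$\mH$-forbidden'' clause out of the Property~3 invariant and invoking Property~2 at the end is a minor cosmetic simplification of the paper's combined invariant). For (1), your explicit injection from leaves into length-$\ell$ edge sequences with distinct $Y$-endpoints via canonical prefix-extension is a welcome tightening of the paper's terse argument, which asserts the bound $\phi(\mH)$ directly from the observation that each root-to-leaf path picks an ordered subset of $Y$ together with incident edges, without spelling out why leaves at depth $<\ell$ do not over-count; your ``shared prefix $\Rightarrow$ ancestor $\Rightarrow$ not a leaf'' step makes this precise.
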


\begin{proof}
We start by showing item (1). The first observation is that the depth of the tree is at most $\ell$. Indeed, consider any path from the root to a leaf, and let $(y_{j_1}, a_{j_1}), \dots, (y_{j_m}, a_{j_m})$ be the labels of the edges. By the construction in step (2), all $j_a$ are pairwise disjoint, and so we can have at most $\ell$ such labels in the path. Notice additionally that each such path visits a subset of the nodes in $Y$ in some order, and maps each node maps it to one of its neighbors in $X$. This implies that the number of leaves in $T_{\mH}(R)$ can be at most $\phi(\mH) = \ell ! \prod_{j \in [\ell]} d_{\mH}(y_j)$. \\

Item (2) is straightforward and follows by the fact that only the expansion step (2) of the algorithm can assign the label $\bot^*$ to a node. \\

Finally, we prove item (3). Let $t$ be a minimally $\mH$-forbidden tuple. We will show that the algorithm will produce $t$ at some leaf of the tree. Our argument will trace $t$ along a path from the root of $T_{\mH}(R)$ to the appropriate leaf.

Consider the tuples of $R$ in the order visited by the algorithm: $t_1, t_2, \dots$. We will show the following inductive statement: for each tuple $t_a$, there exists a leaf node $v_a$ in the tree with label $\bot^*$ such that $\tup(v_a)$ is $\mH$-forbidden for $\{t_1, \dots, t_a\}$ and subsumes $t$.
%, and (b) subsumes $t_a$. 
This statement suffices to prove (3), since at the point where $a = |R| = m$ (\ie, all the tuples in $R$ have been scanned), 
$\tup(v_a)$ must equal $t$ (otherwise $t$ is not minimal), and also $L(v_a) = \bot^*$.

The statement vacuously holds before no tuples from $R$ have been scanned for the root node that encodes $(\bot, \cdots, \bot)$, and forms the basis of the induction. Now, suppose that we are at some tuple $t_a$ and node $v_a$ where the inductive statement holds. Let $t_{a+1}$ be the next tuple in the order. If the algorithm falls into case (1), then $v_{a+1} = v_a$ and $\tup(v_{a+1}) = \tup(v_a)$. Since $\tup(v_a)$ is $\mH$-forbidden for $\set{t_1, \dots, t_a}$ and subsumes $t$, it will be $\mH$-forbidden for $\set{t_1, \dots, t_{a+1}}$ as well, and still subsume $t$. Further, the label of $v_{a+1} = v_{a}$ remains $\bot^*$.

Now suppose we fall into case (2) and $t_{a+1}$ is read. Let $y_{j_1}, \dots, y_{j_p}$ be the variables set so far in $\tup(v_a)$ where $v_a$ is labeled $\bot^*$.
First note that we cannot fall into case (2a), \ie $p < \ell$. 
Indeed, if $p = \ell$ and $t_{a+1}$ satisfies all inequalities with $\tup(v_a)$, then $\tup(v_a)$ is not $\mH$-forbidden.
Since all the positions of $\tup(v_a)$ have been set and $\tup(v_a)$ subsumes $t$, it must hold that $\tup(v_a) = t$.
It follows that $t$ is not $\mH$-forbidden, which contradicts the fact that $t$ remains $\mH$-forbidden after all tuples in $R$ are read.

Therefore, we are in case (2b), and for all $s \in [p]$, there exists some $x_i \in E(\mH)$, $t_{a+1}[x_i] \neq \tup(v_a)[y_{j_b}]$.
When we add $t_{a+1}$, $t$ remains $\mH$-forbidden. Further, 
$\tup(v_a)$ subsumes $t$. 
Therefore there must be some $j \notin \set{j_1, \dots, j_m}$ and $(x_i,y_j) \in E(\mH)$ such that $t_{a+1}[x_i] = t[y_j] = \tup(v_a)[y_j] \neq \bot$. By construction, the algorithm will choose $(x_i, y_j)$ at step (2) to expand $v_a$ and create a child $v_{a+1}$ that connects with an edge $(y_j, t[y_j])$. Note that, $v(t_{a+1})$ still subsumes $t$, is $\mH$-forbidden for the tuples $t_1, \cdots, t_{a+1}$, and has label $\bot^*$, which proves the induction hypothesis for $t_{a+1}$. 
\end{proof}

For our running example, $\phi(\mH_0) = 3! \cdot (1 \cdot 2 \cdot 1) = 12$, whereas the tree $T_{\mH_0}(R)$ has only 10 leaves. We should note here that the bound $\phi(\mH)$ is tight, \ie
there exists an instance for which the number of minimally $\mH$-forbidden tuples is exactly $\phi(\mH)$. 
\footnote{For example, for $\mH_0$ consider the instance $\set{(1,2), (3,4), (5,6)}$. The reader can check that the resulting tree has 12 leaves with label $\bot^*$, and that every leaf leads to a different minimally $\mH$-forbidden tuple.}

We now discuss how we can use the tree $T_{\mH}(R)$ to find a small $\mH$-equivalent relation to $R$. It turns out that the connection is immediate: it suffices to collect the labels of all the nodes (not only leaves) of the tree  $T_{\mH}(R)$ that are not $\bot^*$. More formally:
 \begin{equation}
\mE_{\mH}(R) = \setof{L(v)}{v \in T_{\mH}(R), L(v) \neq \bot^*} \label{equn:equiv}
\end{equation}
We can now show the following result, which completes the proof of \autoref{lem:main-lem}:
\begin{lemma}\label{lem:R_equiv}
The set  $\mE_{\mH}(R)$ is $\mH$-equivalent to $R$ and has size $|\mE_{\mH}(R)| \leq e \cdot  \phi(\mH)$.
\end{lemma}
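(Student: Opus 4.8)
The plan is to prove the two claims of Lemma~\ref{lem:R_equiv} separately, with the size bound coming essentially for free from the tree structure and the $\mH$-equivalence being the real work.

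For the size bound, observe that $\mE_{\mH}(R)$ collects labels $L(v)$ of internal nodes of $T_{\mH}(R)$, i.e.\ all nodes with label $\neq \bot^*$. Every such node has been assigned a tuple of $R$ in case (2) of the construction, and once assigned it is never reassigned; moreover each such node is an internal node, so it has at least one child (it was expanded). Since the tree has depth at most $\ell$ and, by item (1) of Lemma~\ref{lem:tree-props}, at most $\phi(\mH)$ leaves, a standard counting argument on a rooted tree bounds the number of internal nodes. Concretely, I would argue level by level: the number of nodes at depth $d$ is at most the number of leaves times the branching possibilities collapsing, but more cleanly, since each internal node has $\geq 1$ child and the tree has $\leq \phi(\mH)$ leaves, the total number of nodes across all $\ell+1$ levels is at most $\phi(\mH)\sum_{i=0}^{\ell}\tfrac{1}{?}$—here the clean bound is to note that at level $d$ there are at most $\phi(\mH)/\big(\prod \text{(remaining degree factors)}\big)$ nodes, and summing $\phi(\mH)\sum_{d\ge 0}\frac{1}{d!}\le e\cdot\phi(\mH)$ gives the claimed $|\mE_{\mH}(R)|\le e\cdot\phi(\mH)$. (The $e$ is exactly the tell-tale sign that the intended bound is $\phi(\mH)\sum_{d=0}^{\ell}1/d! < e\,\phi(\mH)$, using that a path of length $d$ that has visited $d$ of the $y_j$'s in order contributes a factor $1/d!$ relative to the leaf count.)

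For $\mH$-equivalence, since $\mE_{\mH}(R)\subseteq R$, one direction is trivial: if a tuple $t$ over $Y$ is $\mH$-accepted by $\mE_{\mH}(R)$, the witnessing tuple lies in $R$, so $t$ is $\mH$-accepted by $R$. The substantive direction is the converse: if $t$ is $\mH$-accepted by $R$, then $t$ is $\mH$-accepted by $\mE_{\mH}(R)$. I would prove the contrapositive: suppose $t$ is \emph{not} $\mH$-accepted by $\mE_{\mH}(R)$, i.e.\ $t$ is $\mH$-forbidden for $\mE_{\mH}(R)$; I must show $t$ is $\mH$-forbidden for all of $R$. Then some minimally $\mH$-forbidden tuple $t'$ for $\mE_{\mH}(R)$ subsumes $t$. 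The key claim is that any minimally $\mH$-forbidden tuple for $\mE_{\mH}(R)$ is in fact $\mH$-forbidden for $R$ — equivalently, that $\mE_{\mH}(R)$ and $R$ have the same $\mH$-forbidden tuples. To see this I would run the \emph{same} tree-construction argument (the proof of Lemma~\ref{lem:tree-props}(3)) but now tracking a tuple $t'$ that is minimally $\mH$-forbidden for $\mE_{\mH}(R)$: at each step where the algorithm (processing $R$ in order) reaches the relevant $\bot^*$-leaf $v_a$, the tuple at that node that caused a conflict was \emph{recorded as a label} and hence belongs to $\mE_{\mH}(R)$; so any conflict that $R$'s tuples create against the traced path is also witnessed by a tuple in $\mE_{\mH}(R)$. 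Thus the trace of $t'$ through $T_{\mH}(R)$ behaves identically whether we think of the data as $R$ or as $\mE_{\mH}(R)$, and we conclude $t'$ (and hence $t$) is $\mH$-forbidden for $R$.

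The main obstacle I anticipate is making the previous paragraph's "same conflicts" argument airtight: I need to verify that \emph{every} tuple of $R$ that ever contributes an edge-label or causes a case-(1) conflict along a root-to-leaf path does get stored in some $L(v)$. This is where the definition \eqref{equn:equiv} — collecting \emph{all} non-$\bot^*$ node labels, not just leaf labels — is essential, since a tuple $t_{a+1}$ that conflicts at an internal node $v_a$ gets stored as $L(v_a)$ precisely when the algorithm is in case (2) at $v_a$; case-(1) conflicts, by contrast, need the tuple that originally placed the offending edge-label, which was stored when that edge was created. I would state this as a small auxiliary lemma: for every $\bot^*$-leaf $v$ of $T_{\mH}(R)$ and every edge-label $(y_j,a)$ on the root-to-$v$ path, there is a tuple $t_R\in\mE_{\mH}(R)$ with $t_R[x_i]=a$ for the edge $(x_i,y_j)\in E$ used to create that tree-edge — and this follows directly from step (2b), which labels the parent node with exactly that $t_R$. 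With this in hand, replacing $R$ by $\mE_{\mH}(R)$ cannot create any new $\mH$-accepted tuple, closing the argument.
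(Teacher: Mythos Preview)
Your proposal is essentially correct, but both parts take a noticeably more laborious route than the paper.

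\textbf{Equivalence.} The paper's argument is a one-liner you missed: any tuple of $R$ that never appears as a node label must fall into case~(1) at \emph{every} $\bot^*$-leaf when it is scanned (otherwise it would become a label in case~(2)), so it neither expands a node nor adds a label. Hence if you rerun the construction on $\mE_{\mH}(R)$ in the same scan order you get the identical tree $T_{\mH}(\mE_{\mH}(R))=T_{\mH}(R)$, which immediately gives the same minimally $\mH$-forbidden tuples and thus $\mH$-equivalence. Your trace argument also works---the crucial point being that in case~(2) the scanned tuple $t_{a+1}$ becomes $L(v_a)$ and therefore lies in $\mE_{\mH}(R)$, so ``$t'$ forbidden for $\mE_{\mH}(R)$'' already forces a conflict with $t_{a+1}$ off the current path, allowing the trace to continue. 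But your final paragraph is slightly off: you do \emph{not} need every case-(1) tuple to be stored; a case-(1) conflict at $v_a$ is with a value on the path, and \emph{that} value was placed there by the (stored) tuple labeling the parent node. So your auxiliary lemma is true but unnecessary, and the worry it addresses is not the real obstacle.

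\textbf{Size.} The paper bounds the total number of tree nodes directly: for each of the $\prod_j d_{\mH}(y_j)$ choices of a map $\sigma\colon y_j\mapsto$ neighbor, the subtree consisting of root-to-node paths whose edges all agree with $\sigma$ has at most $\sum_{i=0}^{\ell}\ell!/i!\le e\cdot\ell!$ nodes (the root has $\le\ell$ children, each has $\le\ell-1$, etc.); summing over all $\sigma$ gives $e\cdot\phi(\mH)$. Your level-by-level idea is also valid---the correct bound is $N_d\le \phi(\mH)/(\ell-d)!$ (not $1/d!$ as you wrote, though the sum is the same)---but as written it is too sketchy, and the intermediate expression ``$\phi(\mH)/(\prod\text{remaining degree factors})$'' does not parse since the ``remaining'' factors depend on which $y_j$'s were visited.
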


\begin{proof}
The proof of $\mH$-equivalence is based on the observation that if $T_{\mH}(R) = T_{\mH}(R')$, then $R, R'$ must be $\mH$-equivalent. Indeed, both trees will have the same minimally $\mH$-forbidden tuples, and therefore the set of tuples that are $\mH$-accepted will be same.

To see that $T_{\mH}(R) = T_{\mH}(\mE_{\mH}(R))$, consider $R$ and suppose that we remove some tuple $t$ that does not appear at any label of the tree (and therefore the resulting instance equals $\mE_{\mH}(R)$). If we keep the same order of scanned tuples when constructing both trees, the exact same tree will be produced (since $t$ will not expand any node or add any label).\\

To prove the size bound, we have to give a bound on the number of nodes in the tree, $|V(T_{\mH}(R))|$. For every possible mapping of nodes $y_j$ to one of its neighbors in $\mH$ (there are $\prod_{j \in [\ell]} d_{\mH}(y_j)$ such mappings), consider the subtree of $T_{\mH}(R)$ that contains only the paths from root to leaves where all the edges agree with the mapping (remember that each node creates a child corresponding to an edge $(x_i,y_j)$ of $\mH$); we will first count the nodes of such a subtree. 
This is because the root node can have at most $\ell$ children corresponding to $\leq \ell$ edges in the mapping.
Each child of root can have at most $\ell-1$ children as one of the edges in the mapping has been used in the first level.
Therefore, this subtree will be of size at most 
$$\ell + \ell(\ell-1) + \dots + \ell! = \sum_{i=0}^{\ell} \frac{\ell!}{i!} = \ell! \sum_{i=0}^{\ell} \frac{1}{i!} \leq e \cdot \ell!$$

Since the union of these subtrees will cover all the nodes of $T_{\mH}(R)$, we obtain that  the $e \cdot \phi(\mH)$ is an upper bound for the size of the tree.
\end{proof}

\begin{example}[Continued]
For our running example, the  small $\mH_0$-equivalent relation will be:
$\mE_{\mH_0}(R) = \set{(1,1), (1,2), (1,4), (2,1), (2,3), (3,2), (5,2)}$.
In other words, the tuples $(1,8), (2,2), (2,4), (10,2)$ are redundant and can be removed without %any loss of generality.
affecting the answer to the query $(q_2, \mI)$.
\end{example}

Although the set of minimally $\mH$-forbidden tuples is the same irrespective of the order by which the algorithm scans the tuples, the relation $\mE_{\mH}(R)$ depends on this order. It is an open problem to find the smallest possible $\mH$-equivalent relation
for $R$.

\section{Query Plans for Inequalities}
\label{sec:any-CQ}

In this section, we use the techniques presented in the previous section as building blocks
and prove \autoref{thm:main}. 
A \emph{Select-Project-Join (SPJ) query plan} refers to a relational algebra expression that uses only selection ($\sigma$), projection ($\Pi$), and join ($\Join$) operators.
%\red{define query plan as RA expressions, equivalent to unoredered trees, Rev4}
Let $\mP_q$ be any SPJ query plan that computes a CQ $q$ (without inequalities)
on a database instance $D$ in time $T(|q|,|D|)$. 
We will show how to transform $\mP_q$ into a plan $\mP_{q,\mI}$ that computes $(q, \mI)$
in time $g(q, \mI) \cdot \max(T(|q|, |D|)$. 
Without loss of generality, we assume that all the relation names and attributes in the base and derived relations (at intermediate steps in the plan) are distinct.  
Our running example for this section is given below:

%%%%%%%%%%%%%% RUNNING EXAMPLE %%%%%%%%%%%%%%%%%%%%%%% 
\begin{example}\label{ex:running_plan}
Consider the query $(q_0,\mI)$, and the query plan $\mP_{q_0}$ that computes $q_0$:
\begin{align*} 
q_0(w) & = R(x,y,`a`), S(y,z), T(z,w), \quad \mI = \set{x \neq z, y \neq w, x \neq w} \\
\mP_{q_0} & = \Pi_{D} ( \sigma_{E = `a`}(\Pi_{C,E}(R(A,B,E) \Join_{B=B'} S(B',C) )) \Join_{C=C'} T(C',D))
\end{align*}
The query plan $\mP_{q_0} $ is depicted in \autoref{fig:relational_plan}.
\end{example}

\begin{figure}
\centering
\resizebox{1.0\textwidth}{!}{\input{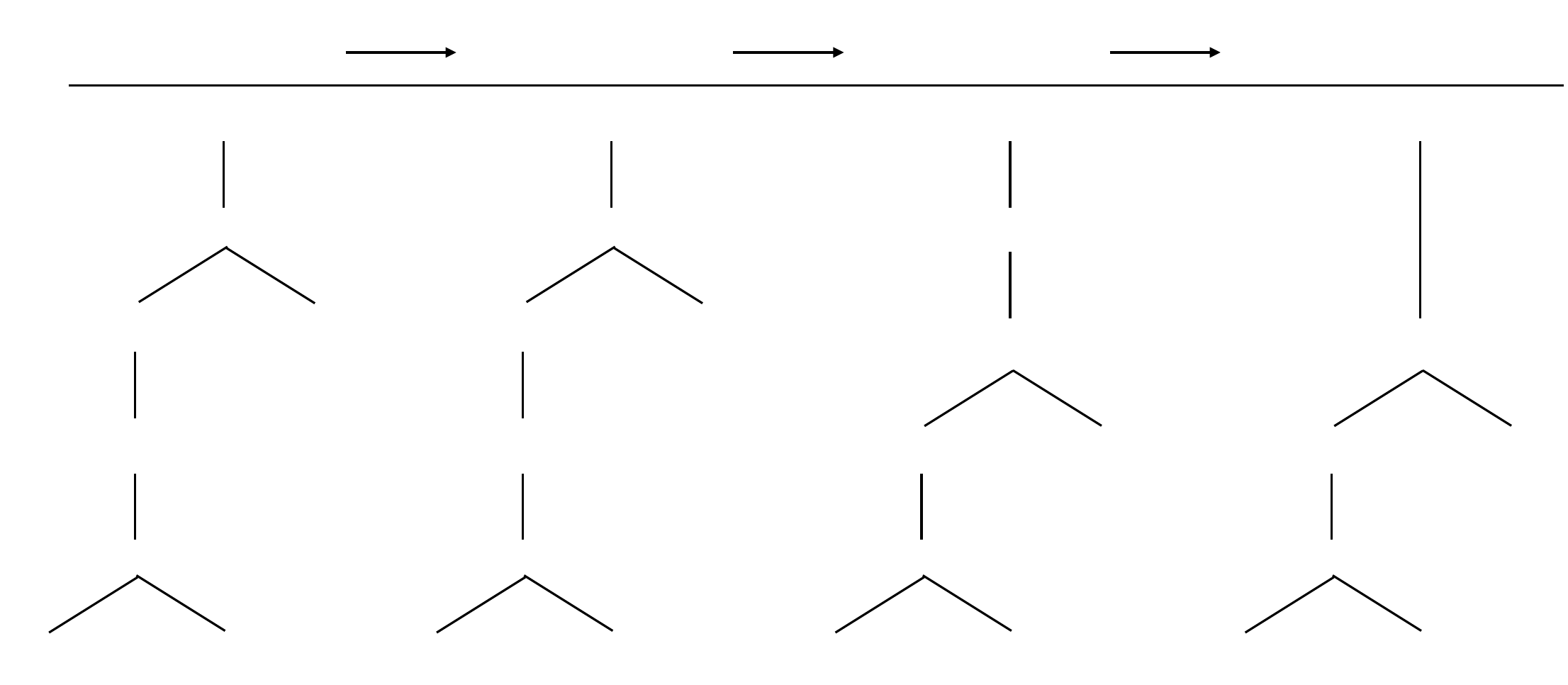_t}}
\caption{The relational plan $\mP_{q_0}$ for Example~\ref{ex:running_plan}, and the transformation to the plan $\mP_{q_0, \top}$.}
\vspace{-0.4cm}
\label{fig:relational_plan}
\end{figure}

Clearly, this plan by itself does not work for $(q_0,\mI)$ as it is losing information that is essential
to evaluate the inequlities, \eg,  $B ( = B')$ is being projected out and it is used later in the inequality 
$x \neq w$ with the attribute $C$ of $T$. 
To overcome this problem while keeping the same structure of the plan, we define a new projection operator that allows us to perform valid algebraic transformations, even in the presence of inequalities. 
Let $\att{R}$ be  the set of attributes that appear in a base or derived relation $R$; a query plan 
or sub-plan $\mP$ is a derived relation with attributes $\att{\mP}$. If $X \subseteq \att{R}$, let $\bar{X}^{R} = \att{R} \setminus X$.
% and let  be the set of attributes in the query plan (or a sub-plan) $\mP$.

\begin{definition}[$\mH$-Projection]
Let $R$ be a base or a derived relation in $\mP$. Let $X \subseteq \att{R}$ and $\mH = (\bar{X}^R,\att{\mP} \setminus \att{R}, E)$ be a bipartite graph. Then, the $\mH$-projection of $R$ on $X$, denoted $\Pi^{\mH}_{X}(R)$, is defined as
\begin{align}
\Pi^{\mH}_{X}(R) = \bigcup_{\ba \in \Pi_{X}(R)} \mE_{\mH}(\sigma_{X = \ba}(R))
\end{align}
where $\mE_{\mH}$ denotes an $\mH$-equivalent subrelation as defined and constructed in equation ~(\ref{equn:equiv}).
\end{definition}
Intuitively, $\mH$ contains the inequalities between the attributes in $\bar{X}^R$ (that are being projected out) and the attributes of the rest of the query plan. 
The operator $\Pi_X^{\mH}$ first groups the tuples from $R$ according to the values of the $X$-attributes, but then instead of projecting out the values of the attributes in $\bar{X}^R$ for each such group, it computes a small $\mH$-equivalent subrelation according to the graph $\mH$. 

\begin{observation}
The $\mH$-projection of a relation $R$ on $X$ satisfies the following properties:
\begin{enumerate}
\item $\Pi_{X}(R) = \Pi_{X}(\Pi^{\mH}_{X}(R))$
\item $|\Pi^{\mH}_{X}(R)| \leq e \cdot \phi(\mH) \cdot |\Pi_{X}(R)|$ (ref. Lemma~\ref{lem:main-lem})
%If additionally $E(\mH) = \emptyset$, $|\Pi^{\mH}_{X}(R)| = |\Pi_{X}(R)|$.
\end{enumerate}
\end{observation}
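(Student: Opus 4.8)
The plan is to reduce both items to Lemma~\ref{lem:R_equiv}, applied separately to each block of $R$ that is constant on $X$. Fix $R$ and $X \subseteq \att{R}$, and for each $\ba \in \Pi_X(R)$ put $R_\ba = \sigma_{X = \ba}(R)$. The family $\{R_\ba : \ba \in \Pi_X(R)\}$ partitions $R$, and by definition of the operator $\Pi^{\mH}_X(R) = \bigcup_{\ba \in \Pi_X(R)} \mE_{\mH}(R_\ba)$. Since $\mE_{\mH}(R_\ba) \subseteq R_\ba$, every tuple in $\mE_{\mH}(R_\ba)$ has $X$-value $\ba$; hence the sets $\mE_{\mH}(R_\ba)$ are pairwise disjoint and $\Pi_X(\mE_{\mH}(R_\ba)) \subseteq \{\ba\}$, with equality exactly when $\mE_{\mH}(R_\ba) \neq \emptyset$. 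Item (2) now follows directly: by Lemma~\ref{lem:R_equiv} each $|\mE_{\mH}(R_\ba)| \le e \cdot \phi(\mH)$, so by disjointness
\[
  |\Pi^{\mH}_X(R)| \;=\; \sum_{\ba \in \Pi_X(R)} |\mE_{\mH}(R_\ba)| \;\le\; e \cdot \phi(\mH) \cdot |\Pi_X(R)| .
\]

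For item (1), the inclusion $\Pi_X(\Pi^{\mH}_X(R)) \subseteq \Pi_X(R)$ is immediate from $\Pi^{\mH}_X(R) \subseteq R$. For the reverse inclusion it suffices to show that $\mE_{\mH}(R_\ba) \neq \emptyset$ for every $\ba \in \Pi_X(R)$ (and $R_\ba \neq \emptyset$ by the choice of $\ba$): then $\Pi_X(\mE_{\mH}(R_\ba)) = \{\ba\}$, and taking the union over $\ba$ gives $\Pi_X(\Pi^{\mH}_X(R)) \supseteq \Pi_X(R)$. Non-emptiness is read off the construction of $T_{\mH}(R_\ba)$: when the first tuple $t$ of $R_\ba$ is scanned, the only leaf with label $\bot^*$ is the root $r$, the root-to-$r$ path carries no edge labels, so the condition defining case (1) of the construction fails and the algorithm falls into case (2), which sets $L(r) = t \neq \bot^*$. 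As a node label other than $\bot^*$ is never reassigned, $t$ remains a label of $T_{\mH}(R_\ba)$ and hence $t \in \mE_{\mH}(R_\ba)$ at termination.

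I do not expect a genuine obstacle: both items are bookkeeping on top of Lemma~\ref{lem:R_equiv}. The one point that must not be skipped is the non-emptiness of $\mE_{\mH}$ on a non-empty input, since this is precisely what makes $\Pi^{\mH}_X$ behave like a projection in item (1); it is also the only place where correctness appeals to how $T_{\mH}$ is constructed rather than to its already-established size and equivalence guarantees.
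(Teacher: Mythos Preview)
Your proof is correct. The paper does not actually prove this Observation; it is stated as self-evident with only a pointer to the size bound (Theorem~\ref{lem:main-lem} / Lemma~\ref{lem:R_equiv}) for item~(2). Your argument is exactly the natural verification the paper omits: partition $R$ into the blocks $R_\ba$, use $\mE_{\mH}(R_\ba) \subseteq R_\ba$ for disjointness and the containment direction, and sum the per-block size bound for item~(2).

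The one substantive step you supply beyond what the paper writes is the non-emptiness of $\mE_{\mH}(R_\ba)$ for non-empty $R_\ba$, and your trace through the tree construction (the root must fall into case~(2) on the first scanned tuple and receive a non-$\bot^*$ label that is never overwritten) is correct. An equally short alternative that avoids revisiting the construction is to use $\mH$-equivalence directly: if $R_\ba \neq \emptyset$, then any tuple over $Y$ whose values are all fresh (outside the values appearing in $R_\ba$) is $\mH$-accepted by $R_\ba$, hence also by $\mE_{\mH}(R_\ba)$, which forces $\mE_{\mH}(R_\ba) \neq \emptyset$. Either route is fine; yours is the more concrete one.
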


\noindent \textbf{First step.~} To construct the plan $\mP_{q,\mI}$ from $\mP_q$, we first create an equivalent query plan $\mP_{q,\top}$ by pulling all the projections in $\mP_q$ to the top of the plan. 
%To maintain the equivalence, we only need to use 
The equivalence of $\mP_q$ and $\mP_{q,\top}$ is maintained by the following 
standard algebraic rules regarding projections:

\begin{description}
\item[(Rule-1) Absorption:] If $X \subseteq Y$, then $\Pi_X (R) = \Pi_X (\Pi_Y (R))$.
\item[(Rule-2) Distribution:] If $X_1 \subseteq \att{R_1}$ and $X_2 = \att{R_2}$, then $\Pi_{X_1 \cup X_2}(R_1 \times R_2) = \Pi_{X_1}(R_1) \times R_2$.
\item[(Rule-3) Commutativity with Selection:] If the selection condition $\theta$ is over a subset of $X$, then  $\sigma_{\theta} (\Pi_X (R)) = \Pi_X (\sigma_{\theta}(R) )$.
\end{description}

\autoref{fig:relational_plan} depicts how each rule is applied in our running example to transform the initial query plan $\mP_{q_0}$ to $\mP_{q_0, \top}$, where the only projection occurs in the top of the query plan. Observe that to distribute a projection over a join $R_1 \Join_{A_1=A_2} R_2$ (and not a cartesian product), we can write it as $\sigma_{A_1 = A_2}(R_1 \times R_2)$, use both (Rule-2) and (Rule-3) to push the projection, and then write it back in the form as $R_1 \Join_{A_1 = A_2} R_2$.
\par
The plan $\mP_{q,\top}$ will be of the form $\mP_{q,\top} = \Pi_X(\mP_0)$, where $\mP_0$ is a query plan that contains only selections and joins. Notice that the plan $\Pi_X (\sigma_{\mI}(\mP_0) )$ correctly computes $(q,\mI)$, since it applies the inequalities before projecting out any attributes.\footnote{From here on we let $\mI$ denote inequalities on attributes and not variables.}
However, the running time is not comparable with that of $\mP_q$ since the structures of the plans 
$\mP_q$ and $\Pi_X (\sigma_{\mI}(\mP_0) )$ are very different. 
 To achieve comparable running time, we modify $\Pi_X (\sigma_{\mI}(\mP_0) )$  by applying the corresponding rules of (Rule-1), (Rule-2), (Rule-3) for $\mH$-projection in the reverse order.\\

\noindent \textbf{Second step.~} To convert projections to $\mH$-projections, first, we replace $\Pi_X$ with $\Pi_X^{\mH_0}$, where $\mH_0 = (\att{\mP_0} \setminus X, \emptyset, \emptyset)$. 
Notice that $\Pi_X^{\mH_0}$ is essentially like $\Pi_X$, but instead of removing the attributes that are not in $X$, the operator keeps an arbitrary witness. Thus, if we compute $\Pi_X^{\mH_0} (\sigma_{\mI}(\mP_0))$, we not only get all tuples $t$ in $(q,\mI)$, but for every such tuple we obtain a tuple $t'$ from $(q^f, \mI)$ such that $t = t'[X]$. 
For our running example, $X = \{D\}$, and therefore, $\mH_0 = (\{A,B,B',C,C',E\}, \emptyset, \emptyset)$
(see the rightmost plan in \autoref{fig:plan_transform}).\\

\noindent \textbf{Third step.~} We next present the %equivalent rewriting of the 
rules for $\mH$-projections to convert $\Pi_X^{\mH_0} (\sigma_{\mI}(\mP_0))$
to the desired plan $\mP_{q, \mI}$.  To show that the rules are algebraically correct, we need a weaker version of plan equivalence.

\begin{definition}[Plan Equivalence]
Two plans $\mP_1, \mP_2$ are equivalent under $\Pi_X^{\mH}$, denoted $\mP_1 \equiv_X^{\mH} \mP_2 $, if for every tuple $\ba$, $\mE_{\mH}(\sigma_{X=\ba}(\mP_1))$ and $\mE_{\mH}(\sigma_{X=\ba}(\mP_2))$ are $\mH$-equivalent.
\end{definition}

In other words, we do not need to have the same values of the attributes 
that are being projected out by $\Pi_X$
in the small sub-relations $\mE_{\mH}$.
 We write $\mI[X_1,X_2] \subseteq \mI$ to denote %the subset of inequalities from $\mI$ 
 %that contain the one attribute from $X_1$ and the other from $X_2$. 
the inequalities between attributes in subsets $X_1$ and $X_2$.   
 For convenience, we also write $\mI[X,X] = \mI[X]$. We 
 use $E[X_1, X_2]$ in a similar fashion, where $E$ is the set of edges in a bipartite graph. 
%For convenience, denote 
Let $\mathbf{A} = \att{\mP_0} $.
We apply the transformation rules for a sub-plan that is of the form $\Pi_X^{\mH} (\sigma_{\mI} (S))$, where $\mI$ is defined on $\att{S}$ and $\mH = (\bar{X}^S, \mathbf{A}, E)$.~\footnote{For the sake 
of simplicity, we do not write the bipartite graph as $\mH = (\bar{X}^S, \mathbf{A} \setminus \att{S}, E)$. However, the transformation rules ensure that the edges $E$ in the bipartite graph are always between
$\bar{X}^S$ and  $\mathbf{A} \setminus \att{S}$.}
The rules are: \\

%%%%%%%%%%% RULE-1 %%%%%%%%%%%%%%%%%%%%%%%%
\noindent \textbf{(Rule-1').} If $X \subseteq Y$ and $\mH' = (\bar{Y}^S, \mathbf{A}, E[\bar{Y}^S, \mathbf{A}])$, then
$$\Pi_X^\mH (\sigma_{\mI}(S))~~ \equiv_X^{\mH}~~ \Pi_X^{\mH} (\Pi_Y^{\mH'} (\sigma_{\mI}(S)))$$ 
In the running example, we have $X = \{D\}$, $Y = \{C,C',D,E\}$, and $\att{S} = \mathbf{A} = \{A, B, B', C, C', D, E\}$. 
The new bipartite graph for Rule-1' in \autoref{fig:plan_transform} (corresponding to Rule-1 in Figure~\ref{fig:relational_plan}) %we have to define is 
is $\mH_1 = (\{A,B,B'\}, \mathbf{A}, \emptyset)$.\\
%. The application of this rule is depicted in \autoref{fig:plan_transform}. \\

%%%%%%%%%%% RULE-2 %%%%%%%%%%%%%%%%%%%%%%%%
\noindent \textbf{(Rule-2').} Let $S = R_1 \times R_2$, and $X = X_1 \cup Z_2$, where $X_1 \subseteq \att{R_1} = Z_1$ and $Z_2 = \att{R_2}$. If we define
$\mH'  = (Z_1 \setminus X_1,~ \mathbf{A},~ E[Z_1 \setminus X_1, \mathbf{A}] \cup \mI[Z_1 \setminus X_1, Z_2])$, 
then
$$\Pi_{X_1 \cup Z_2}^\mH (\sigma_{\mI}(R_1 \times R_2))~~ \equiv_X^{\mH}~~ \sigma_{\mI \setminus \mI[Z_1]} (\Pi_{X_1}^{\mH'}(\sigma_{\mI[Z_1]} (R_1)) \times R_2)$$  
This rule adds new edges to the bipartite graph (which is initially empty) from the set of inequalities $\mI$.
In the running example, we have $X_1 = \{C,E\} \subseteq \{A,B,B',C,E\} = Z_1$ and $Z_2 = \{C',D\}$. Since $E(\mH_1) = \emptyset$, to construct the edge set of the new bipartite graph $\mH_2$, we need to find the inequalities that have one attribute in $Z_1 \setminus X_1 = \{A,B,B'\}$ and the other in $Z_2 = \{C',D\}$: these are $A \neq D$ and $B \neq D$. Hence, $\mH_2 = (\{A,B,B'\}, \mathbf{A}, \{(A,D), (B,D)\})$, and the application of the rule is depicted in \autoref{fig:plan_transform}. \\

%%%%%%%%%%% RULE-3 %%%%%%%%%%%%%%%%%%%%%%%%
\noindent \textbf{(Rule-3').} If $\theta$ is defined over a subset of $X$, and $S = \sigma_{\theta}(R)$:
$$\Pi_X^\mH (\sigma_{\mI}(\sigma_{\theta}(R) ))~~ \equiv_X^{\mH}~~ \sigma_{\theta} (\Pi_X^\mH (\sigma_{\mI} (R)))  $$
In the running example, we move the selection operator $\sigma_{E=`a`}$ before the projection operator $\Pi_{C,E}^{\mH_2}$ as the last step of the transformation.

\begin{figure}
\centering
\resizebox{1.0\textwidth}{!}{\input{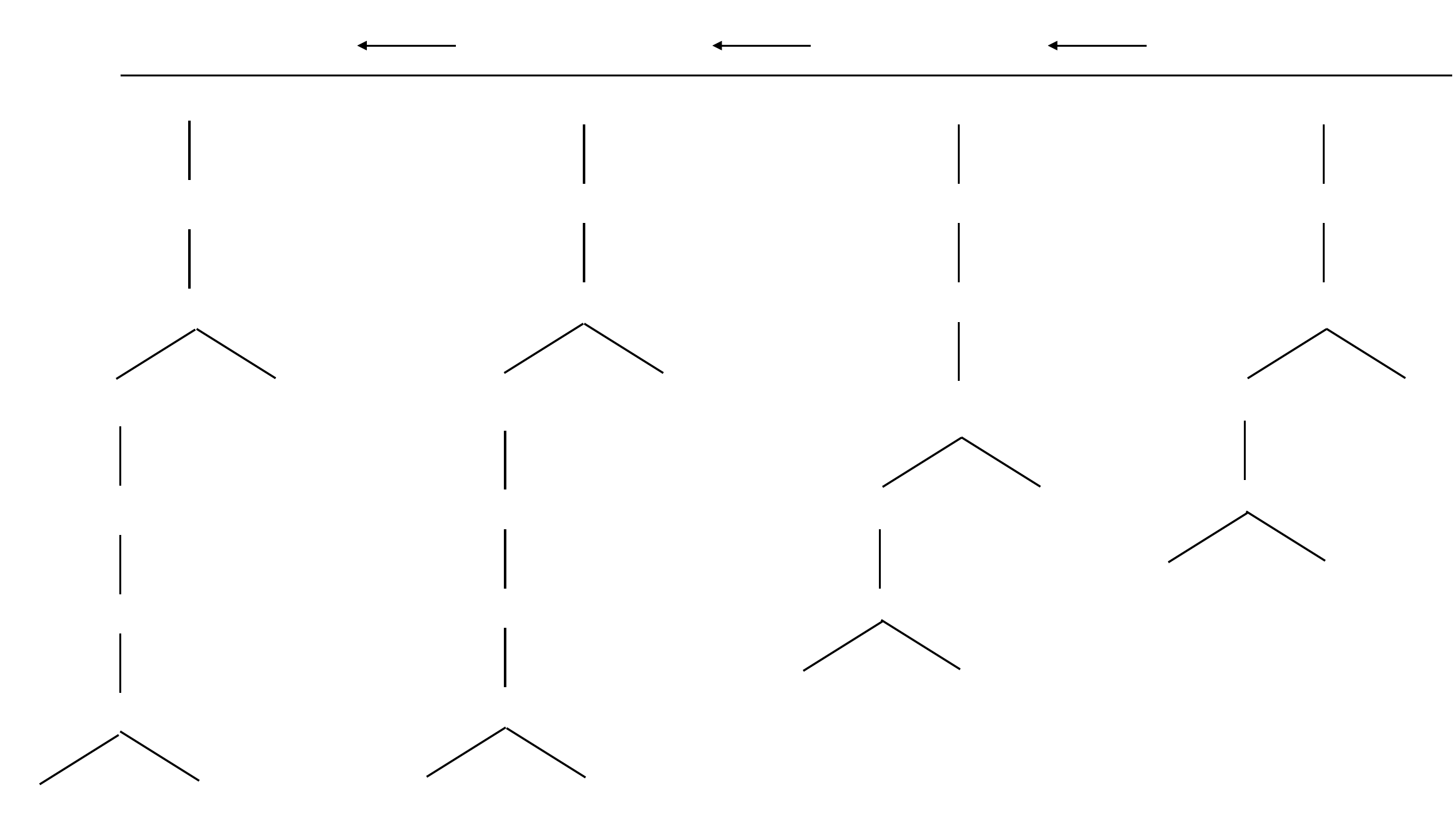_t}}
\caption{The reverse application of rules for Example~\ref{ex:running_plan}. The bipartite graphs defined have edge sets $E(\mH_0) = \emptyset$, $E(\mH_1) = \emptyset$ and $E(\mH_2) = \{(A,D), (B,D)\}$.}
\vspace{-0.3cm}
\label{fig:plan_transform}
\end{figure}

\begin{lemma}
(Rule-1'), (Rule-2'), (Rule-3') preserve the equivalence of the plans under $\Pi_X^{\mH}$.
\end{lemma}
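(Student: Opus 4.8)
\noindent The plan is to strip the outermost $\mE_{\mH}$/$\Pi_X^{\mH}$ from both sides of each rule and reduce each claim to a purely combinatorial statement about $\mH$-equivalence of ordinary relations. The key preliminary observation is that $\mE_{\mH}(V)$ is always $\mH$-equivalent to $V$ (\autoref{lem:R_equiv}) and that $\mH$-equivalence is an equivalence relation on relations of a fixed arity; hence to prove $\mP_1 \equiv_X^{\mH} \mP_2$ it suffices to show that for every tuple $\ba$ over $X$ the relations $\sigma_{X=\ba}(\mP_1)$ and $\sigma_{X=\ba}(\mP_2)$, viewed over the attributes $\bar{X}^{S}$, $\mH$-accept exactly the same probe tuples $t$ over $\mathbf{A} \setminus \att{S}$. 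I would also record the identity $\sigma_{X=\ba}(\Pi_{X}^{\mH}(V)) = \mE_{\mH}(\sigma_{X=\ba}(V))$, which holds because every group $\mE_{\mH}(\sigma_{X=\ba'}(V))$ in the definition of $\Pi_X^{\mH}$ retains all of its attributes, so selecting $X=\ba$ isolates a single group; combined with the previous observation, this lets the outer operator be peeled off on both sides of each rule.

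I would dispatch (Rule-3') first, since it is immediate. As $\theta$ mentions only attributes of $X$, the tuple $\ba$ either satisfies $\theta$ or it does not, selections commute with one another, and in either case $\sigma_{X=\ba}(\mP_1)$ and $\sigma_{X=\ba}(\mP_2)$ turn out to be \emph{literally} the same relation --- namely $\mE_{\mH}(\sigma_{X=\ba}(\sigma_{\mI}(R)))$ when $\ba$ satisfies $\theta$, and $\emptyset$ otherwise --- so $\equiv_X^{\mH}$ holds trivially.

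For (Rule-1'), fix $\ba$ and set $W = \sigma_{\mI}(S)$; after peeling off the outer operators I must show that $\sigma_{X=\ba}(W)$ and $\sigma_{X=\ba}(\Pi_Y^{\mH'}(W))$, both seen over $\bar{X}^{S} = (Y \setminus X) \cup \bar{Y}^{S}$, are $\mH$-equivalent. The structural fact to exploit is that $\mH'$ is exactly the subgraph of $\mH$ induced on $\bar{Y}^{S}$, keeping the same right side $\mathbf{A} \setminus \att{S}$. Writing each tuple over $\bar{X}^{S}$ as a pair $(u,w)$ with $u$ over $Y \setminus X$ and $w$ over $\bar{Y}^{S}$, and grouping a relation into fibers indexed by $u$, a probe $t$ is $\mH$-accepted by that relation iff there is an index $u$ for which the inequalities encoded by the edges of $\mH$ incident to $Y \setminus X$ hold between $u$ and $t$ and the fiber over $u$ $\mH'$-accepts $t$ --- since the remaining edges of $\mH$ are precisely those of $\mH'$. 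As $\Pi_Y^{\mH'}$ replaces each fiber by an $\mH'$-equivalent subrelation (\autoref{lem:R_equiv}), the set of accepted probes does not change.

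I expect (Rule-2') to be the main obstacle, and it is the one place where the enlargement of the right side of $\mH'$ is used in an essential way. Here I would write $\ba = (\ba_1,\ba_2)$ with $\ba_1$ over $X_1$ and $\ba_2$ over $Z_2$; if $\ba_2 \notin R_2$, or some inequality of $\mI$ over attributes already fixed by $\ba$ fails, then $\sigma_{X=\ba}$ of both inner plans is empty and there is nothing to prove, so assume otherwise. Let $U_1 = \setof{w}{(\ba_1,w) \in R_1 \text{ and } (\ba_1,w) \text{ satisfies } \mI[Z_1]}$, a relation over $\bar{X}^{S} = Z_1 \setminus X_1$. Tracking how $\mI$ splits into $\mI[Z_1]$, $\mI[Z_2]$, and the cross part $\mI[Z_1,Z_2]$ --- and noting that on the right-hand plan $\mI \setminus \mI[Z_1]$ supplies exactly $\mI[Z_2]$ and $\mI[Z_1,Z_2]$ --- one checks that the $\ba$-group of the left side is $\setof{w \in U_1}{w \text{ satisfies } \mI[Z_1 \setminus X_1, Z_2] \text{ against } \ba_2}$ and the $\ba$-group of the right side is the same set with $U_1$ replaced by $\mE_{\mH'}(U_1)$. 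The decisive step is to absorb the fixed tuple $\ba_2$ into the probe: given a probe $t$ over $\mathbf{A} \setminus \att{S}$, extend it to $t' = (\ba_2, t)$ over $Z_2 \cup (\mathbf{A} \setminus \att{S})$, which is exactly the right side of $\mH'$. Because the edge set of $\mH'$ is exactly $E(\mH) \cup \mI[Z_1 \setminus X_1, Z_2]$, a tuple $w$ witnesses $\mH'$-acceptance of $t'$ iff it satisfies $E(\mH)$ against $t$ and $\mI[Z_1 \setminus X_1, Z_2]$ against $\ba_2$; hence $t$ is $\mH$-accepted by the left $\ba$-group iff $t'$ is $\mH'$-accepted by $U_1$, iff (by \autoref{lem:R_equiv}) $t'$ is $\mH'$-accepted by $\mE_{\mH'}(U_1)$, iff $t$ is $\mH$-accepted by the right $\ba$-group. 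What then remains is routine bookkeeping of the attribute partitions and of the degenerate empty cases.
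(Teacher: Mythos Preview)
Your proposal is correct and rests on the same combinatorial core as the paper's proof. In particular, your ``absorb $\ba_2$ into the probe'' move for (Rule-2') --- forming $t' = (\ba_2,t)$ over the right side of $\mH'$ and then invoking $\mH'$-equivalence of $U_1$ and $\mE_{\mH'}(U_1)$ --- is exactly the key step the paper uses (there it appears as the tuple $t \circ s[Z_2]$). Your preliminary identity $\sigma_{X=\ba}(\Pi_X^{\mH}(V)) = \mE_{\mH}(\sigma_{X=\ba}(V))$ and the reduction to showing $\mH$-equivalence of the raw $\ba$-groups are also the same reductions the paper makes, just stated more explicitly.

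The one organizational difference worth noting: for both (Rule-1') and (Rule-2') the paper dispatches one direction cheaply via the subset observation $\Pi^{\mH'}(\cdot) \subseteq (\cdot)$ and only chases tuples for the nontrivial direction, whereas you set up a single if-and-only-if chain (via the fiber decomposition in (Rule-1') and via $\mH'$-acceptance of $t'$ in (Rule-2')) that handles both directions simultaneously. Your packaging is a bit cleaner and makes the role of $\mH'$ as the induced/augmented subgraph of $\mH$ more transparent; the paper's packaging is shorter because half of each argument is a one-liner. Neither buys anything the other does not.
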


\begin{proof} 
We show the equivalence for each rule.

\introparagraph{(Rule-1')} Denote $S' = \sigma_{\mI}(S)$. It suffices to show that for every tuple $\ba$, $\mE_1 = \mE_{\mH}(\sigma_{X=\ba}(S'))$ and $\mE_2 = \mE_{\mH}(\sigma_{X=\ba}(\Pi_Y^{\mH'}(S')))$ are $\mH$-equivalent. Fix some $X = \ba$.

The one direction is based on the observation that $\Pi_Y^{\mH'}(S') \subseteq S'$. Hence, $\sigma_{X=\ba}(\Pi_Y^{\mH'}(S')) \subseteq \sigma_{X=\ba}(S')$, which implies that if a tuple is $\mH$-accepted by $\mE_2$, it is accepted by $\mE_1$ as well. 

For the other direction, suppose that $t$ is $\mH$-accepted by $\mE_1$. Then, there exists some $s \in \sigma_{X = \ba} (S')$ such that $E \models s \circ t$.\footnote{$s \circ t$  denotes the concatenation of $s,t$.}
Since $\bar{Y}^S \subseteq \bar{X}^S$, $E[\bar{Y}^S, A] \models s \circ t$ and $t$ must be $\mH'$-accepted by $\sigma_{Y = s[Y]}(\sigma_{X = \ba} (S'))$, and consequently by  $\mE_{\mH'}(\sigma_{Y = s[Y]}(\sigma_{X = \ba} (S')))$ as well. 
Then, there exists some $s' \in \mE_{\mH'}(\sigma_{Y = s[Y]}(\sigma_{X = \ba} (S')))$ such that $E[\bar{Y}^S, A] \models t \circ s'$. However, since $s'[Y] = s[Y]$, we must also have that $E \models t \circ s'$. Since $s' \in \Pi_Y^{\mH'}(\sigma_{X = \ba} (S'))$, we conclude that $t$ is $\mH$-accepted by  $\mE_2$.

\introparagraph{(Rule-2')} Denote $R_1' = \sigma_{\mI[Z_1]}(R_1)$ and $\mI_1 = \mI \setminus \mI[Z_1]$. It suffices to show that for every tuple $\ba$, the following are $\mH$-equivalent:
\begin{align*}
\mE_1 & = \mE_{\mH}(\sigma_{X=\ba}(\sigma_{\mI'} (R_1' \times R_2))), \\
\mE_2 & = \mE_{\mH}(\sigma_{X=\ba}(\sigma_{\mI'} (\Pi_{X_1}^{\mH_1} (R_1') \times R_2 )))
\end{align*}
The one direction of the equivalence is based on the fact that $\Pi_{X_1}^{\mH_1}(R_1') \subseteq R_1'$. The other direction is more involved.

Suppose that $t$ is $\mH$-accepted by $\mE_1$. Then, there exists some $s \in R_1' \times R_2$ such that $E, \mI' \models s \circ t$ and $s[X] = \ba$.
Now, consider the tuple $t \circ s[Z_2]$.  The crucial observation is that $t \circ s[Z_2]$ is $\mH_1$-accepted by $\sigma_{X_1 = \ba[X_1]} (R_1')$, and thus by $\mE_{\mH_1}(\sigma_{X_1 = \ba[X_1]} (R_1'))$ as well. 
Then, there exists some $s_1 \in \mE_{\mH_1}(\sigma_{X_1 = \ba[X_1]} (R_1'))$ such that $E  \models t \circ s_1 \circ s[Z_2]$. Finally, observe that the tuple $s' = s_1 \circ s[Z_2]$ belongs in $\Pi_{X_1}^{\mH_1} (R_1') \times R_2$, has $s'[X] = \ba$, and also satisfies $\mI'$. This implies that $t$ is $\mH$-accepted by $\mE_2$.

\introparagraph{(Rule-3')} This is immediate, since the selection $\theta$ is applied only on the attributes in $X$, which are not projected out. 
\end{proof}

After applying the above transformations in the reverse order, the following lemma holds:
%we obtain a plan $\mP_{q,\mI}$ that has the following properties. 

\begin{lemma}
Let $\mP_q$ be an SPJ plan for $q$. For a set of inequalities $\mI$, the transformed plan $\mP_{q,\mI}$ has the following properties:
\begin{enumerate}
\item If $\mP_{q,\top} = \Pi_X(\mP_0)$,  the plan $\Pi_{X} (\mP_{q, \mI})$ computes $(q, \mI)$
(after projecting out the attributes that served as witness from $\mP_{q, \mI}$).
\item For every $\Pi_X$ operator in $\mP_q$, there exists a corresponding $\Pi_X^{\mH}$ operator in $\mP_{q,\mI}$ for some appropriately constructed $\mH$.
\item Every intermediate relation $R$ in $\mP_{q,\mI}$ has size at most $e \cdot \max_{\mH} \{\phi(\mH)\} \cdot |R'|$, where $R'$ is the corresponding intermediate relation in $\mP_q$.
\item If $T(|q|, |D|)$ is the time to evaluate $\mP_q$, the the time to evaluate $\mP_{q,\mI}$ increases by a factor of at most $(e \cdot \max_{\mH} \{ \phi(\mH)\})^2$.\end{enumerate}
\end{lemma}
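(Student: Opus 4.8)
The plan is to prove the four items essentially by structural induction on the plan, exploiting that $\mP_{q,\mI}$ is obtained from $\mP_q$ by a reversible sequence of local rewrites whose soundness was just established.

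For items (1) and (2), I would induct on the number of reverse rewrite steps. The base object is $\Pi_X^{\mH_0}(\sigma_{\mI}(\mP_0))$, where $\mP_{q,\top}=\Pi_X(\mP_0)$: here $\sigma_{\mI}(\mP_0)$ is exactly the full query $(q^f,\mI)$, and since $\mH_0$ has no $Y$-edges the operator $\Pi_X^{\mH_0}$ merely retains an arbitrary witness per $X$-group, so $\Pi_X$ on top recovers $(q,\mI)$ using $\Pi_X(R)=\Pi_X(\Pi_X^{\mH}(R))$. For the inductive step, each reverse use of Rule-1'/2'/3' replaces a sub-plan $\Pi_Y^{\mH}(\sigma_{\mathcal J}(S))$ by one that is equivalent under $\equiv_Y^{\mH}$ (by the preceding lemma showing Rule-1'/2'/3' preserve $\equiv_X^{\mH}$); what remains is a compositionality statement: substituting $\equiv_Y^{\mH}$-equivalent sub-plans inside a surrounding context does not change the final answer. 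The point is that every operator above the rewritten sub-plan — a join, a selection on attributes not projected away, or an outer $\mH'$-projection — accesses its sub-plan argument only through $\mH$-acceptance of tuples, which is exactly what $\equiv_Y^{\mH}$-equivalence preserves; I would prove this by tracing, for each operator type, how an $\mH'$-accepted witness of the rewritten plan is matched to one of the original and vice versa, reusing the witness-chasing arguments from the proof of the rule lemma. Item (2) then follows at once, since the reverse rewrites mirror step-for-step the forward rewrites that pulled all projections of $\mP_q$ to the top; consequently each $\Pi_X$ in $\mP_q$ has a unique descendant $\Pi_X^{\mH}$ in $\mP_{q,\mI}$ with the same $X$, where $\mH$ has accumulated exactly the inequalities of $\mI$ linking the attributes $X$ projects out to the attributes appearing later in the plan (the $\mI[Z_1\setminus X_1,Z_2]$ edges added by Rule-2').

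For items (3) and (4), I would carry a bottom-up invariant relating each intermediate relation $R$ of $\mP_{q,\mI}$ to its counterpart $R'$ in $\mP_q$ (the $\sigma_{\mI}$ operators of $\mP_{q,\mI}$ have no counterpart and only shrink things): first, $\att{R}\supseteq\att{R'}$ and $\Pi_{\att{R'}}(R)\subseteq R'$, which propagates trivially through selections and joins and holds after an $\mH$-projection because $\Pi_X(\Pi_X^{\mH}(\cdot))=\Pi_X(\cdot)$; second, a per-group count — after an $\mH$-projection $\Pi_X^{\mH}$ every value of $X$ carries at most $e\,\phi(\mH)$ tuples of $R$, directly from $|\mE_{\mH}(\cdot)|\le e\,\phi(\mH)$ of \autoref{lem:main-lem} and independent of the size of its input. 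Summed over the (at most $|R'|$ many) groups this gives $|R|\le e\max_{\mH}\phi(\mH)\cdot|R'|$ for $R$ the output of an $\mH$-projection. For the running time, the only costs in $\mP_{q,\mI}$ absent from $\mP_q$ are building each $\mE_{\mH}$, which by \autoref{lem:main-lem} costs $O(\phi(\mH)\cdot|\text{input}|)$ rather than a single scan, and feeding the slightly enlarged intermediate relations into later operators; each effect is a factor at most $e\max_{\mH}\phi(\mH)$, so together they cost a factor at most $(e\max_{\mH}\phi(\mH))^2$ per operator, and the extra $\sigma_{\mI}$ selections contribute only lower-order terms, giving the stated bound after summing over operators.

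The step I expect to be the real obstacle is the compositionality claim inside item (1): $\equiv_Y^{\mH}$ is a coarse ``up-to-$\mH$-acceptance'' equivalence, not equality of relations, so when an outer operator is itself an $\mH'$-projection whose edge set $E(\mH')$ mentions attributes that the inner operator has just projected away (keeping only witnesses), I must argue carefully that the witnesses preserved in the two $\equiv_Y^{\mH}$-equivalent sub-relations still induce $\mH'$-equivalent outputs — essentially that the inner $\mH$ was chosen rich enough to encode every inequality the outer operator could later test. A second, milder subtlety for (3)--(4) is tracking the blow-up across a join whose two operands are themselves $\mH$-projected: one must check the join output grows only by the product of the two per-group factors (each $\le e\max_{\mH}\phi(\mH)$), which is exactly the source of the squared factor in (4).
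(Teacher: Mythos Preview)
The paper does not give a detailed proof of this lemma; it only offers the one-line justification that since each relational operator can be implemented in at most quadratic time in its input size (written as $T(MN)\le cM^2T(N)$), the factor-$M$ blow-up of item~(3) with $M=e\max_{\mH}\phi(\mH)$ immediately yields the factor-$M^2$ of item~(4). Items (1)--(3) are left implicit from the construction and the preceding rule-soundness lemma. Your structural-induction outline is therefore a correct and much more explicit treatment than the paper's; in particular, your identification of the compositionality of $\equiv_Y^{\mH}$ under surrounding operators as the heart of item~(1) is exactly the point the paper is silently relying on.

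One real difference in accounting: you derive the square in item~(4) from two separate multiplicative effects --- the $O(\phi(\mH)\cdot|\text{input}|)$ cost of computing each $\mE_{\mH}$, and the enlarged inputs flowing into later operators --- whereas the paper derives it purely from a single quadratic-cost assumption on each operator applied to an $M$-times-larger input. Both arguments reach the same bound. Your ``milder subtlety'' about a join whose \emph{both} operands are $\mH$-projected is well spotted: in a bushy plan the join output can indeed pick up a factor $(e\max_{\mH}\phi(\mH))^2$ rather than $e\max_{\mH}\phi(\mH)$, so the linear bound stated in item~(3) is strictly correct only immediately after each $\Pi_X^{\mH}$ (where the per-group count resets to $e\phi(\mH)$), and may be squared at intermediate join nodes between two such projections. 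This does not affect the final running-time bound, but your more careful tracking is the honest version of the argument the paper glosses over.
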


%The above lemma can be now applied directly to prove~
\autoref{thm:main} directly follows from the above lemma. 
To prove the bound on the running time, we use the fact that each operator (selection, projection or join) 
can be implemented in at most quadratic time in the size of the input (\ie, $T(MN) \leq cM^2T(N)$). Additionally, notice that, if $k$ is the vertex size of the inequality graph, then $\max_{\mH} \{ \phi(\mH)\} \leq k! k^k$. Hence, the running time can increase at most by a factor of $2^{O(k \log k)}$ when inequalities 
are added to the query.
In our running example,  $\phi(\mH_0) = 1$, $\phi(\mH_0) = 1$ and $\phi(\mH_2) = 2$, hence the resulting intermediate relations in will be at most $2e$ times larger than the ones in $\mP_{q_0}$.

The following query with inequalities is an example where our algorithm
gives much better running time than the color-coding-based or treewidth-based techniques
described in the subsequent sections.

\begin{example}
Consider $P^k() = R_1(x_1,x_2), R_2(x_2,x_3), \cdots, R_k(x_k,x_{k+1})$ with
inequalities $\mI = \{x_i \neq x_{i+2} \mid i \in [k-1] \}$. 
Let $\mP$ be the SPJ plan that computes this acyclic query in time $O(k |D|)$ by performing joins from left to right and projecting out the attributes as soon as they join. Then, the plan $\mP_{\mI}$ that is constructed has constant $\max_{\mH}\{\phi(\mH)\}$; thus, $(P^k, \mI)$ can be evaluated in time $O(k |D|)$ as well.
\end{example}

\begin{remark}
In this section we compared the running time of queries with inequalities with SPJ plans that compute the query without the inequalities. However, optimal algorithms that compute  CQs may not use SPJ plans, as  the recent worst-case optimal algorithms in~\cite{NgoPRR12,Veldhuizen14} show.
These algorithms apply to conjunctive 
queries without projections, where any inequality can be applied at the end without affecting 
the asymptotic running time. However, there are cases where nonstandard algorithms for Boolean CQs run faster 
than SPJ algorithms, \eg $q() = R(x_1,x_2),R(x_2,x_3),\dots, R(x_{2k},x_1)$,  
can be computed in time $O(N^{2-1/k})$, where $N=|R|$. 
We show in \autoref{sec:non-SPJ} that our techniques can be applied in this case as well. However, it is an open whether we can use them for any black-box algorithm.
\end{remark}

\section{Color-coding Technique and Generalization of Theorem~\ref{thm:PY97_acyclic}}
\label{sec:gen-YP}
%\red{defn of k-coloring: from vars or domains}

In this section, we will review the color-coding technique from \cite{AlonYZ08} and use it to generalize
\autoref{thm:PY97_acyclic} for arbitrary CQs with inequalities (\ie, not necessarily acyclic queries)\footnote{The
$\log^2(|D|)$ factor in \autoref{thm:PY97_acyclic} is reduced to $\log(|D|)$ in \autoref{thm:any_cq_color}, but this is because
one $\log$ factor was due to sorting the relations in the acyclic
query, and now this hidden in the term $T(|q|, |D|)$.}. 
% from \cite{PY97} can be generalizes as follows to evaluate any CQ with inequalities.
\begin{theorem}\label{thm:any_cq_color}
Let $q$ be a CQ that can be evaluated in time $T(|q|,|D|)$. Then, $(q, \mathcal{I})$ can be computed in time $2^{O(k \log k)} \cdot log(|D|) \cdot T(|q|, |D|)$
where $k$ is the number of variables in $\mI$. % (that appear in at least one inequality).
\end{theorem}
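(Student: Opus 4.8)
The plan is to use the color-coding technique of Alon--Yuster--Zwick to reduce the evaluation of $(q,\mI)$ to a small number of evaluations of $q$ (without inequalities) on modified database instances. First I would recall the combinatorial setup: let $k$ be the number of variables occurring in $\mI$, and let these be $z_1,\dots,z_k$. A \emph{coloring} is a function $\chi : \dom \to [k]$. Given such a coloring, we say a homomorphism $h$ from $q$ to $D$ is \emph{colorful} (with respect to the inequality variables) if the values $h(z_1),\dots,h(z_k)$ receive pairwise distinct colors, i.e.\ $\chi(h(z_i)) \neq \chi(h(z_j))$ for all $i \neq j$. The key observation is that if $h$ assigns distinct colors to all $k$ inequality variables, then automatically $h(z_i) \neq h(z_j)$ for every pair, and in particular every inequality in $\mI$ is satisfied; conversely, if $h$ satisfies all inequalities in $\mI$, then the values $h(z_1),\dots,h(z_k)$ need not be distinct (two inequality variables not joined in $G^\mI$ may still collide), so we should only color \emph{with respect to the equivalence classes}: more precisely, we guess a partition of $\{z_1,\dots,z_k\}$ that is a refinement compatible with $\mI$ being satisfiable (each $\mI$-edge crosses the partition), color the blocks, and ask $q$ with the added equalities within blocks and the coloring constraint across blocks. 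Since there are at most $k^k = 2^{O(k\log k)}$ partitions, this guessing contributes only a query-dependent factor.

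Next I would implement the coloring constraints as a database modification so that an arbitrary algorithm for $q$ can be invoked as a black box. Fix a partition $\Pi$ of the inequality variables into blocks $B_1,\dots,B_r$ (with every edge of $G^\mI$ crossing blocks), and fix a coloring $\chi:\dom\to[r]$. Build the query $q_\Pi$ obtained from $q$ by identifying all variables within each block (this only shrinks $|q|$ and can be done in time depending on $q$). Build the instance $D_\chi^{j}$ for each block index assignment: for the relation containing (the merged variable of) block $B_i$, restrict its tuples to those whose $B_i$-coordinate has color equal to a guessed target color $c_i\in[r]$, where $c_1,\dots,c_r$ are required to be distinct; there are at most $r! \le k!$ such target color tuples. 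Evaluating $q_\Pi$ on this filtered instance in time $T(|q_\Pi|,|D|) \le T(|q|,|D|)$ finds exactly those homomorphisms that are colorful with distinct block colors. Summing over all partitions $\Pi$ and all distinct color-target tuples gives a factor of $2^{O(k\log k)}$ times $T(|q|,|D|)$ per coloring; handling the Boolean/non-Boolean distinction and projecting out merged attributes is routine.

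The remaining issue is that a single random (or fixed) coloring does not catch every satisfying homomorphism: we need that for \emph{some} coloring among a small family, each satisfying assignment becomes colorful. Here I would invoke the standard derandomized color-coding result of \cite{AlonYZ08}: there is an explicit family of $2^{O(k)}\log|\dom| = 2^{O(k)}\log|D|$ colorings $\chi:\dom\to[k]$ that is a \emph{$k$-perfect hash family}, meaning every $k$-element subset of $\dom$ is assigned all distinct colors by at least one coloring in the family. For a satisfying homomorphism $h$ whose $\mI$-induced partition of values is $\Pi$, pick the coloring in the family that makes the (at most $k$) distinct values $h(z_i)$ all differently colored; then $h$ is detected when we process partition $\Pi$ (or any coarsening consistent with $h$) with that coloring. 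Conversely, soundness is immediate since distinct colors imply distinct values imply all inequalities hold. Multiplying the $\log|D|$ colorings by the $2^{O(k\log k)}$ partition-and-target overhead and the cost $T(|q|,|D|)$ of each black-box call yields the claimed bound $2^{O(k\log k)}\cdot\log(|D|)\cdot T(|q|,|D|)$.

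The main obstacle I anticipate is the bookkeeping for non-colorful-but-satisfying homomorphisms, i.e.\ correctly enumerating the compatible partitions of the inequality variables and arguing that merging variables within a block preserves evaluability by the given algorithm for $q$ in time $T(|q|,|D|)$ (one must check $|q_\Pi| \le |q|$ and that the filtered instance has size $\le |D|$, so the black-box bound still applies); the perfect-hash-family derandomization itself and the soundness direction are straightforward given \cite{AlonYZ08}. One also has to be slightly careful that two inequality variables lying in the same relational atom are excluded by the preliminaries, and that inequalities between variables in the same block are vacuous after merging only if that block is an independent set in $G^\mI$ — which is exactly the compatibility condition imposed on $\Pi$.
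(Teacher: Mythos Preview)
Your overall framework---iterate over a $k$-perfect hash family, iterate over color assignments to the inequality variables, filter the database, and invoke the black-box algorithm---is the paper's approach. However, the variable-merging step introduces a genuine gap.

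Identifying variables can change the combinatorial structure of the query in a way that breaks the black-box bound. Take the acyclic path query $q() = R_1(x_1,x_2), R_2(x_2,x_3), \dots, R_k(x_k,x_{k+1})$, evaluable in $O(k|D|)$ by Yannakakis' algorithm. If a block of your partition $\Pi$ contains both $x_1$ and $x_{k+1}$, then $q_\Pi$ is the cycle query $C^k$, which is no longer acyclic and has no $O(k|D|)$ algorithm. So the step ``$T(|q_\Pi|,|D|) \le T(|q|,|D|)$'' that you need for the black-box call simply fails: $|q_\Pi|$ may be smaller as a syntactic object, but the hypothesis hands you an algorithm for $q$, not for arbitrary queries of size at most $|q|$. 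The obstacle you flagged at the end is exactly the real one, and it cannot be discharged.

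The fix---and this is what the paper does---is to drop the merging altogether. Rather than enumerating partitions $\Pi$ and then injective colorings of the blocks, enumerate directly the \emph{valid $k$-colorings} $\mathbf{c}$ of the inequality graph $G^{\mI}$ (at most $k^k = 2^{O(k\log k)}$ of them), where ``valid'' just means every edge of $G^{\mI}$ is bichromatic. For each $h$ in the perfect family and each valid $\mathbf{c}$, build the subinstance $D[\mathbf{c},h]$ that keeps, in every relation $R$, only those tuples $t$ with $h(t[x_i]) = c_i$ for all $x_i \in vars(R)$, and run the \emph{unchanged} query $q$ on $D[\mathbf{c},h]$. Soundness is immediate: inequality-adjacent variables get distinct colors, hence distinct values. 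Completeness: for any full tuple $t \models \mI$, some $h$ in the family separates the distinct values among $t[z_1],\dots,t[z_k]$, and then $c_i := h(t[z_i])$ is a valid coloring with $t \in q^f(D[\mathbf{c},h])$. Crucially, two variables receiving the same color are \emph{not} forced to take the same value, so no merging is needed and the black-box bound $T(|q|,|D|)$ applies verbatim to each call. Your partition-plus-injective-labeling enumeration is just a roundabout way of listing these valid colorings; once you notice that, the merge step is seen to be both unnecessary and harmful.
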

First, we state the original randomized color-coding technique to describe the intuition:
%We explain next the algorithm used in \autoref{thm:any_cq_color}.
%Let %$h: \dom \rightarrow \set{1, \dots, c}$ 
%$h$ be a hash function that maps each value of the active domain $\dom$ of the instance $D$ to a random {\em color}. 
%The basic idea of the color-coding technique is to 
randomly color each value of the active domain 
%$\dom$ of the instance $D$ 
by using a hash function $h$, use these colors  to check the inequality constraints, and use the actual values 
%of the attributes of the tuples will still be used for checking 
to check the equality constraints.
\par
For a CQ $q$, let $q^f$ denote the {\em full query} (without inequalities), where every variable in the body appears in the head of the query $q$. 
%\dan{specify that the full query does not included the inequalities  (i'm not completely sure)}
For a variable $x_i$ and a tuple $t$, $t[x_i]$ (or simply $t[i]$ where it is clear from the context)
denotes the value of the attribute of $t$ that corresponds to variable $x_i$.
%\begin{definition}
Let $t \in q^f(D)$. We say that $t$ {\em satisfies the inequalities $\mI$}, denoted by $t \models \mI$, 
if for each $x_i \neq x_j$ in $\mathcal{I}$, $t[x_i] \neq t[x_j]$.
We say that $t$ {\em satisfies the inequalities $\mI$ with respect to  the hash function $h$}, denoted by $t \models_h \mI$,
if for each such inequaity $h(t[x_i]) \neq h(t[x_j])$.
%\end{definition}

Recall that $k$ is the number of variables that appear in $\mI$.
Let $h$ be a perfectly random hash function $h: \dom \rightarrow [p]$ (where $p \geq k$). For any $t \in q^f(D)$ 
if $t$ satisfies $\mI$, then with high probability it also satisfies $\mI$ with respect to $h$, \ie,
$\mathbf{Pr}_h [t \models_h \mI~ |~ t \models \mI]$ $\geq$ $\frac{p (p-1) \cdots (p-k+1)}{p^k}$ 
$\geq$ $e^{-2 \sum_{i=1}^{k-1} (i/p)} \geq e^{-k}$,
%
%$$ \mathbf{Pr}_h [t \models_h \mI~ |~ t \models \mI] \geq \frac{p (p-1) \cdots (p-k+1)}{p^k} \geq e^{-2 \sum_{i=1}^{k-1} (i/p)} \geq e^{-k}$$
%
where we used the fact that $1-x \geq e^{-2x}$ for $x \leq \frac{1}{2}$. 
Therefore, by repeating the experiment $2^{O(k)}$ times we can evaluate a Boolean query with constant probability. 
%\sudeepa{is it $2^{O(k)}$ or $2^{O(k\log k)}$ for randomized algo?} \paris{Here I am not sure what happens when $q$ is not boolean. In this case, all tuples must be consistent at some point. So we may have to repeat even more times.}

This process can be derandomized leading to a deterministic algorithm (for evaluating any CQ, not necessarily Boolean) 
by selecting
$h$ from a family $\mathcal{F}$ of $k$-perfect hash functions. A $k$-perfect family guarantees that for every tuple of arity at most $k$ (with values from the domain $\dom$), there will be some $h \in \mathcal{F}$ such that for all $i, j \in [k]$,
if $t[i] \neq t[j]$, then  $h(t[i]) \neq h(t[j])$ %(and thus $t$ will be consistent w.r.t. $h$ for any inequality constraints).
(and thus if $t \models \mI$, then $t \models_h \mI$) It is known (see~\cite{AlonYZ08}) that we can construct a $k$-perfect family of size $|\mathcal{F}| = 2^{O(k)} \log(|\dom|) = 2^{O(k)} \log |D|$.
% \red{in time??? R3}~~
\footnote{Assuming $\dom$ includes only the attributes that appear as variables in the query $q$, $|\dom| \leq |D||q|$.}

A coloring $\mathbf{c}$ of the vertices   of the inequality graph $G^{\mathcal{I}}$ with $k$ colors is called a \emph{valid $k$-coloring}, if for each $x_i \neq x_j$ we have that $c_i \neq c_j$ where $c_i$ denotes the color of variable $x_i$ under $\mathbf{c}$.  
Let $\mathcal{C}(G^{\mI})$ denote all the valid colorings of $G^{\mI}$. 
For each such coloring $\mathbf{c}$ and any given hash function $h: \dom \rightarrow [k]$, we can define a subinstance $D[\mathbf{c},h] \subseteq D$ such that for each relation $R$, 
$R^{D[\mathbf{c},h]} = \setof{t \in R^D}{\forall x_i \in vars(R), h(t[x_i]) = c_i} $. 
In other words, the subinstance $D[\mathbf{c},h]$ picks only the tuples that under the hash function $h$ agree with the coloring $\mathbf{c}$ of the inequality graph. 
Then the algorithm can be stated as follows:
\begin{itemize}
	\item \textbf{Deterministic Algorithm:} \textit{For every hash function $h: \dom \rightarrow [k]$ 
	in a $k$-perfect hash family $\mathcal F$, for every valid $k$-coloring $\mathbf{c} \in \mathcal{C}(G^{\mI})$ of the variables, 
	evaluate the query $q$ on the sub-instance $D[\mathbf{c}, h]$.
%	 that is \emph{consistent} with $h$ and $c$ \red{need defn}: for each relation $R$, 
%	$R^{D[\mathbf{c},h]} = \setof{t \in R^D}{\forall x_i \in vars(R), h(t[x_i]) = c_i}$.  
	Output $\bigcup_{h \in \mathcal{F}} \bigcup_{\textbf{c} \in \mathcal{C}(G^{\mI})} q(D[\mathbf{c}, h])$.}
\end{itemize}

\begin{proof}[Proof of Theorem~\ref{thm:any_cq_color}]
Suppose 
$$q^{(h)}(D) = \setof{t[head(q)]}{t \in q^f(D) \models_h \mI }$$ 
Then the union $\bigcup_{h \in \mathcal{F}} q^{(h)}(D)$ produces the result of the query
(this is because for any tuple $t \in q^f(D) $, there exists a hash function $h \in \mathcal F$ that satisfies
all the inequalities in $\mI$).
 In the rest of this subsection, we will show how to compute $q^{(h)}(D)$ for a fixed hash function $h: \dom \rightarrow [p]$,
 $p \geq k$, 
using the coloring technique in time bounded by $2^{O(k \log k)} T(|q|, |D|)$. 
%\sudeepa{exactly $[k]$ or $\rightarrow [p]$ for some $p \geq k$?}

Let $\mathbf{C}$ be a \emph{valid $p$-coloring} of the vertices of the inequality graph $G^{\mathcal{I}}$, such that whenever $x_i \neq x_j$, we have that $c_i \neq c_j$ where $c_i$ denotes the color of variable $x_i$ under $\mathbf{c}$. For each such coloring, we can define a subinstance $D[\mathbf{C},h] \subseteq D$ such that for each relation $R$, 
$$R^{D[\mathbf{C},h]} = \setof{t \in R^D}{\forall x_i \in vars(R), h(t[x_i]) = c_i} $$ 
In other words, the subinstance $D[\mathbf{C},h]$ picks only the tuples that under the hash function $h$ agree with the coloring $\mathbf{C}$ of the inequality graph. 
%Then, ~\autoref{lem:color} easily follows from Lemma~\ref{lem:color}: \paris{this is confusing, what do you mean?}

\begin{lemma}\label{lem:color}
Let $\mathcal{C}(G^{\mI})$ denote all the valid colorings of $G^{\mI}$. Then, 
$$q^{(h)}(D) = \bigcup_{\mathbf{C} \in \mathcal{C}(G^{\mI})} q(D[\mathbf{C},h])$$
\end{lemma}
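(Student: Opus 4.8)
The plan is to prove the two set inclusions separately, unwinding the definitions of $q^{(h)}$ and of the subinstances $D[\mathbf{C},h]$. For the inclusion $\bigcup_{\mathbf{C}} q(D[\mathbf{C},h]) \subseteq q^{(h)}(D)$: fix a valid coloring $\mathbf{C}$ and a tuple $t' \in q(D[\mathbf{C},h])$, witnessed by a full tuple $t \in q^f(D[\mathbf{C},h])$ with $t'=t[head(q)]$. Since every relation instance of $D[\mathbf{C},h]$ is a subset of the corresponding instance of $D$, we have $t \in q^f(D)$. Moreover, for each variable $x_i$ appearing in some atom of $q$, the defining condition of $D[\mathbf{C},h]$ forces $h(t[x_i]) = c_i$. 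Hence for every inequality $x_i \neq x_j$ in $\mI$, validity of $\mathbf{C}$ gives $c_i \neq c_j$, so $h(t[x_i]) = c_i \neq c_j = h(t[x_j])$, i.e. $t \models_h \mI$. Therefore $t' = t[head(q)] \in q^{(h)}(D)$.

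For the reverse inclusion $q^{(h)}(D) \subseteq \bigcup_{\mathbf{C}} q(D[\mathbf{C},h])$: take $t' \in q^{(h)}(D)$, witnessed by $t \in q^f(D)$ with $t \models_h \mI$ and $t' = t[head(q)]$. The idea is to read off a coloring from $t$ itself. Define $\mathbf{C}$ by setting $c_i := h(t[x_i])$ for each variable $x_i$ of $q$ that appears in the inequality graph, and extend arbitrarily (with colors in $[p]$) to any remaining vertices of $G^{\mI}$ — though in fact all vertices of $G^{\mI}$ are variables appearing in some inequality, so every $c_i$ is determined. Because $t \models_h \mI$, for each edge $x_i \neq x_j$ of $G^{\mI}$ we have $c_i = h(t[x_i]) \neq h(t[x_j]) = c_j$, so $\mathbf{C}$ is a valid coloring. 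By construction, $t$ satisfies $h(t[x_i]) = c_i$ for every variable, so every base tuple used by $t$ survives in $D[\mathbf{C},h]$; hence $t \in q^f(D[\mathbf{C},h])$ and $t' = t[head(q)] \in q(D[\mathbf{C},h]) \subseteq \bigcup_{\mathbf{C}} q(D[\mathbf{C},h])$.

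The only mild subtlety — and the place to be careful — is the bookkeeping around which variables are colored: $G^{\mI}$ has vertex set $vars(q)$ (by the convention in Section~\ref{sec:prelim}) but only the $k$ variables occurring in $\mI$ are constrained, and a variable may appear in several atoms, so one must check the condition "$h(t[x_i]) = c_i$" is consistent across all atoms containing $x_i$ — which it is, since $c_i$ is a single value assigned to the variable $x_i$, not to an occurrence. With that observation both inclusions are routine, and combining them yields the claimed equality.
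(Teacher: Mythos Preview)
Your proof is correct and follows essentially the same argument as the paper's: one inclusion uses that $D[\mathbf{C},h]\subseteq D$ together with validity of $\mathbf{C}$ to conclude $t\models_h\mI$, and the other inclusion defines the coloring by $c_i:=h(t[x_i])$ and checks validity from $t\models_h\mI$. The only difference is the order in which you present the two inclusions and a bit of extra (harmless) bookkeeping about which variables get colored.
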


\begin{proof}
Let $t \in q^f(D)$ $\models_h \mI$. Let $\mathbf{C}$ be the coloring such that for every $x_i \in V(G^{\mI})$, we set $c_i = h(t[x_i])$. We will show that $\mathbf{C}$ is a valid coloring of $G^{\mI}$. Indeed, if $x_i \neq x_j \in \mI$, it must be that $h(t[x_i]) \neq h([t_j])$ (Since $t \models_h \mI$) and hence $c_i \neq c_j$. Thus, we have that $t[head(q)] \in q(D[\mathbf{C},h])$.

For the other direction, let $t \in q^f(D[\mathbf{C},h])$ for a valid coloring $\mathbf{C}$. For any inequality $x_i \neq x_j$, we will have $h(t[x_i]) = c_i \neq c_j = h(t[x_j])$, and hence $t \models_h \mI$.
\end{proof}

The algorithm now iterates over all hash functions $h \in \mathcal{F}$, and all valid colorings of $G^{\mathcal{I}}$ with $p$ colors, and for each combination computes $q(D[\mathbf{c}],h)$. The output result is:
$$ \bigcup_{h \in \mathcal{F}, \mathbf{C} \in \mathcal{C}(G^{\mI})} q(D[\mathbf{C},h])$$
The running time is $O(|\mathcal{F}| \cdot |\mathcal{C}(G^{\mI})| \cdot T(q, |D|))$. As we discussed before $|\mathcal{F}| \leq 2^{O(p)} \log |D|$ and $|\mathcal{C}(G^{\mI})| \leq k^p$. \autoref{thm:any_cq_color} follows by choosing $p=k = |V(G^{\mI})|$ (the smallest possible value of $p$).
\end{proof}

\textbf{Comparison of Theorem~\ref{thm:main} with Theorem~\ref{thm:any_cq_color}.~~}
	%(discussed in detail in Section~\ref{sec:gen-YP}).
	 %gives an upper bound on the running time of query evaluation with inequalities, there are certain drawbacks
%of using the color coding technique. 
The factors dependent on the query in these two theorems ($g(q, \mI)$ in Theorem~\ref{thm:main} and $f(k)$ in Theorem~\ref{thm:any_cq_color}) are both bounded by 
$2^{O(k\log k)}$. However, our technique outperforms the color-coding technique in several respects. 
First, the randomized color-coding technique is simple and elegant, but is unsuitable to implement
in a database system that typically aims to find deterministic answers. On the other hand, 
 apart from the additional $\log(|D|)$ factor, 
 the derandomized color-coding technique
demands the construction of a new $k$-perfect hash family 
for every database instance and query, and therefore may not be efficient for practical purposes.
Our algorithm %does not depend on $D$,
requires no preprocessing and can be 
applied in a database system by maintaining the same query plan
and using a more sophisticated projection operation.
More importantly, the color coding technique is {\em oblivious} of the combined structure of the query and the inequalities. As an example, consider the path query $P^k$, together with the inequalities $\mI_1 = \{x_i \neq x_{i+2}: i \in [k-1]\}$. %Using the proof of~\autoref{thm:any_cq_color}, we obtain a 
The color-coding-based algorithm has a running time of $2^{O(k \log k)} |D| \log |D|$. However, as discussed in Section~\ref{sec:any-CQ}, we can compute this query in time $O(k |D|)$, thus the exponential dependence on $k$ 
is eliminated.

\cut{
\subsection*{Comparison of color-coding with our techniques in Sections~\ref{sec:main} and \ref{sec:any-CQ}}
While \autoref{thm:any_cq_color} gives an upper bound on the running time of query evaluation with inequalities, %the color-coding based technique has certain limitations. 
it has certain limitations for practical applications. 
\begin{enumerate}
\item The randomized algorithm for color-coding is simple and elegant, 
but it cannot be easily applied to database systems, where
randomized algorithms are typically avoided. 
\item Derandomized color-coding, apart from the additional $\log(|D|)$ factor, 
demands the construction of a different $k$-perfect hash family 
for every database instance and query. Our algorithm does not depend on $D$,
requires no preprocessing, and can be 
applied in a database system even by maintaining the same query plan, 
by simply modifying the projection operator (which can be defined as an 
aggregate operation). \red{@Paris: could you add the references for constructing
k-perfect hash families that Rev1 mentioned?}
\item 
Most importantly, the color coding technique totally ignores the structure of the query and the inequalities
(as it was originally used for the complete inequality graph), 
and has exponential dependency in $k$ even for very simple inequality patterns.
In particular, our algorithm can compute certain queries with polynomial 
combined complexity, whereas color-coding leads to exponential running time in $k$.
Consider $P^k() = R_1(x1,x2), R_2(x2,x3), \cdots, R_k(x_k,x_{k+1})$ with
inequalities $x_i \neq x_{i+2}$, $i \in [1, k-1]$. 
For a database $D$, color-coding needs $2^{O(klogk)}|D| \log(|D|)$ time to evaluate this query, 
while our technique needs $O(k|D|)$ time.
\item \red{@Paris, please update this, (i) does not the example query below include inequalities?
(ii) the algo to compute even length cycles already uses the witness technique, no?\
we can shorten/remove this point as well}\\
However, we also note that the recent worst-case optimal 
algorithms \cite{NgoPRR12,Veldhuizen14} do not use SPJ plans.
These algorithms apply to conjunctive 
queries without projections, where any inequality can be applied at the end without affecting 
the asymptotic running time. However, there are cases where nonstandard algorithms for boolean CQs run faster 
than SPJ algorithms. e.g., the Boolean query that computes cycles of a given even length in a graph, 
\ie, $Q() = R(x_1,x_2),R(x_2,x_3),…,R(x_{2k},x_1)$,  
can be computed in time $O(N^{2-1/k})$, where $N=|R|$ (the number of edges in the graph). 
Our techniques from Sections~\ref{sec:main} and \ref{sec:any-CQ} 
can be applied here as well with an additional (exponential) factor dependent on the query. 
However, it cannot currently be used as a black box; 
it is interesting future work to identify the class of algorithms where it can be used as a black box.
\end{enumerate}
% for practical purposes. Additionally, 

}

%%%%%%%%%%%%%%%%%%%%%%%%%%%%%%%%%%%%%%%%%%%%
\section{CQs and Inequalities with Polynomial Combined Complexity}\label{sec:treewidth-results}

In this section, we investigate classes of queries and inequalities that entail a poly-time combined complexity
for $(q, \mI)$ in terms of the treewidths of query graph $G^q$,  inequality graph $G^{\mI}$, 
and augmented graph $G^{q, \mI}$. 
%As we mentioned in Section~\ref{sec:prelim}, 
If the augmented graph $G^{q, \mI}$ has bounded treewidth, then 
$(q, \mI)$ can be answered in poly-time combined complexity \cite{Yan81, ChekuriR00}. 
We give examples of such $q$ and $\mI$ below:

\begin{figure}[t]
	\centering
	\begin{minipage}{0.7\linewidth}
	\includegraphics[scale=0.32]{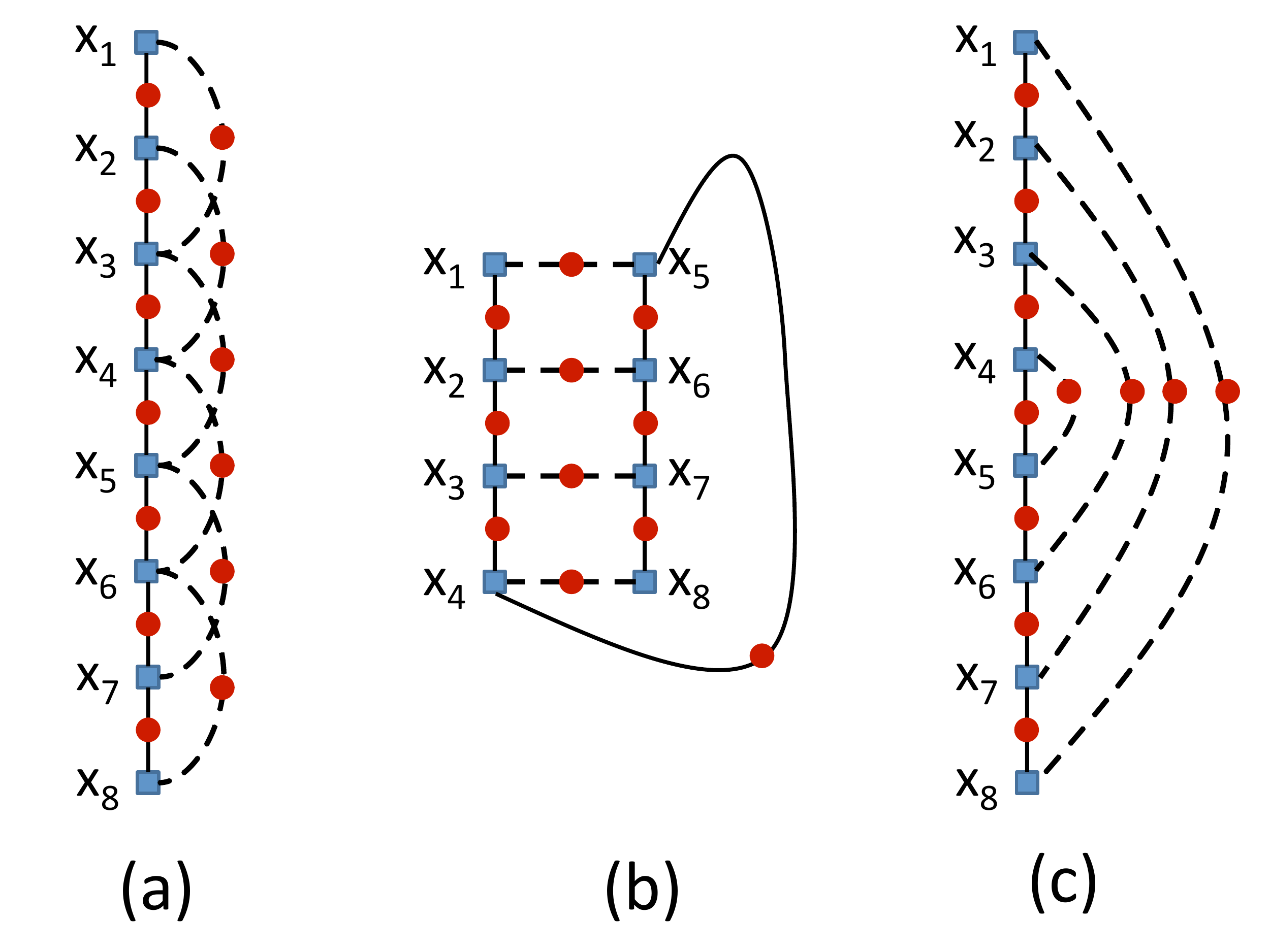}
	\end{minipage}
	\begin{minipage}{0.25\linewidth}
	\includegraphics[scale=0.17]{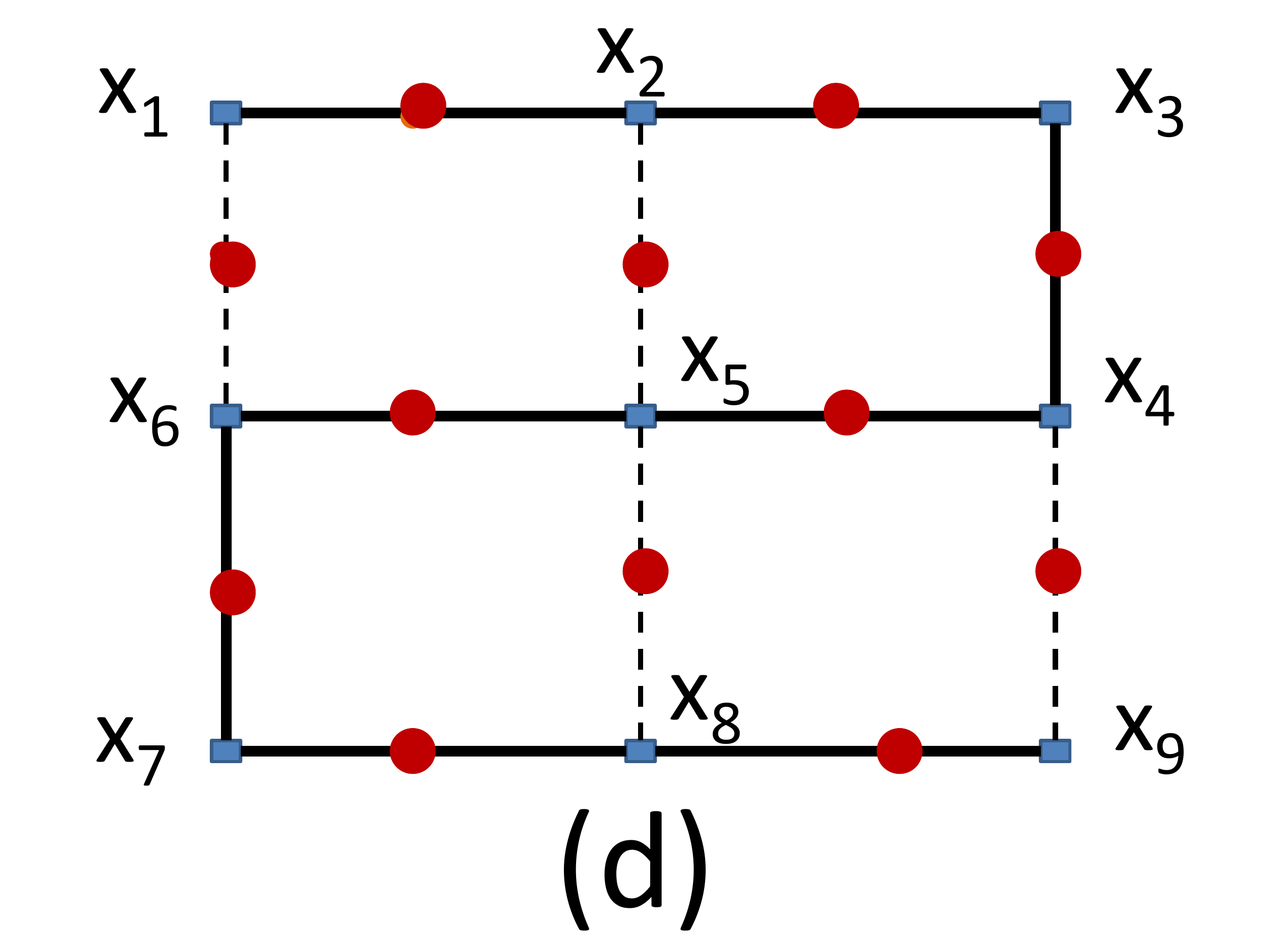}
	\end{minipage}
	\caption{{Augmented graphs 
	for %queries in 
	Example~\ref{eg:tw-eg} ($k= 7$)
	and Example~\ref{eg:tw-eg-2} ($k= 8$).
	The solid and dotted edges come from the query and inequalities respectively;
	the %variables are denoted by 
	blue squares denote variables, and red circles denote (unnamed) relational atoms: 
	%correspond to the original sub-goals in the query %$P^8$
	%whereas the dotted edges corresponds to the inequalities.
  %the relational atoms (red circles) are shown in the figure but are not named for simplicity: 
  (a) $(P^7, \mI_1)$, (b) $(P^7, \mI_2)$, (c) $(P^7, \mI_3)$, (d) $(P^7, \mI_4)$}}
  \vspace{-0.4cm}
	\label{fig:tw-eg}
\end{figure}

\begin{example}\label{eg:tw-eg}
Consider the path query:
$P^k(~) = R_1(x_1, x_2), R_2(x_2, x_3), \dots, R_k(x_{k}, x_{k+1})$, which is acyclic,  
and consider the following inequality patterns (see Figure~\ref{fig:tw-eg}):
\begin{enumerate}
\item $(P^k, \mI_1)$ where $\mI_1 = \{x_i \neq x_{i+2}: i \in [k-1]\}$ has treewidth 2.
 A tree-decomposition with treewidth 2 (\ie, maximum node in the tree
has size 3) is $\set{x_i, x_{i+1}, x_{i+2} }$ --- $\set{ x_{i+1}, x_{i+2}, x_{i+3} }$ --- $\cdots$.
\item $(P^k, \mI_2)$ where $\mI_2 = \{x_i \neq x_{i+\frac{k}{2}}: i \in [\frac{k+1}{2}]\}$ has treewidth 3 ($k$ is odd): The incidence graph without the edge $(x_{\frac{k}{2}}, x_{\frac{k}{2}}+1)$ has the structure of a  $\frac{k}{2} \times 2$ grid and therefore has treewidth 2 (see~\autoref{fig:tw-eg}(b)). We can simply add the node $x_{\frac{k}{2}}$ to all nodes in this tree-decomposition to have a decomposition with treewidth 3. 
\item $(P^k, \mI_3)$ where $\mI_3 = \{x_i \neq x_{k-i+1}: i \in [\frac{k+1}{2}]\}$ has treewidth 2 ($k$ is odd): A tree-decomposition with treewidth 3 can be obtained by going back and forth along the inequality (dotted) edges: $\set{ x_{i+1}, x_i, x_{k-i+1} }$ --- $\set{ x_{k-i+1}, x_{k-i}, x_{i+1} }$ --- $\cdots$. For example, in~\autoref{fig:tw-eg}(c) the decomposition can be $\set{x_2, x_1, x_8}$ --- $\set{x_8, x_7, x_2}$ --- $\set{x_3, x_2, x_7}$ --- $\set{x_7, x_6,x_3}$ --- $\set{x_4, x_3, x_6}$ --- $\set{x_6, x_5, x_4}$.
\end{enumerate}
\end{example}

However, for certain inputs our algorithm in Section~\ref{sec:any-CQ}
can outperform the treewidth-based techniques %. However, Theorem~\ref{thm:main} 
since it considers the inequality structure more carefully.
For instance, even though the augmented graph of $(P^k, \mI_1)$ has treewidth 2 (see Figure~\ref{fig:tw-eg} (a)), the techniques of
	\cite{Yan81} will give an algorithm with running time $O(poly(k) |D|^2)$, whereas the algorithm in Section~\ref{sec:any-CQ}
gives a running time of $O(k |D|)$. 	

Indeed, the treewidth of $G^{q, \mI}$ is at least as large as the treewidth of $G^q$ and $G^{\mI}$.
As mentioned earlier, when $G^{\mI}$ is the complete graph on $k+1$ variables (with treewidth $= k+1$), 
answering $(P^k, \mI)$
is as hard as finding if a graph on $k+1$ vertices has a Hamiltonian path, and therefore is NP-hard in $k$.
Interestingly, even when both $G^{q}$ and $G^{\mI}$ have bounded treewidths,
$G^{q, \mI}$ may have unbounded treewidth as illustrated by the following example:
\begin{example}\label{eg:tw-eg-2}
Consider $(P^k, \mI_4)$ (see Figure~\ref{fig:tw-eg}(d)), where $k+1 = p^2$ for some $p$. Algebraically, we can write $\mI_4$ as:
$ \mI_4 = \setof{x_i \neq x_{\lfloor i/p \rfloor +1 + 2p-(i \mod p)}}{i=1, \dots, p(p-1)}$.
%\paris{Check this, not 100\%.}
	The edges for $P^k$ are depicted in the figure as an alternating path on the grid with solid edges, whereas the remaining edges are dotted and correspond to the inequalities. Here both $G^{P^k}$ and $G^{\mI_4}$ have treewidth 1, but $G^{P^k, \mI_4}$ has treewidth $\Theta(\sqrt{k})$. 
\end{example}
However, this does not show that evaluation of the query $(P^k, \mI_4)$ is NP-hard in $k$,
%\red{NP-hard in k and D? R3}, 
which we prove below by a reduction from the \emph{list coloring problem}:
\begin{definition}[List Coloring]\label{def:list-color}
Given an undirected graph $G = (V,E)$, and a list of admissible colors $L(v)$ for each vertex $v \in V$,
  list coloring asks whether there exists a coloring $c(v) \in L(v)$ for each vertex $v$ such that 
 the adjacent vertices in $G$ have different colors.
\end{definition}
The list coloring problem generalizes the coloring problem, and therefore is NP-hard.
List coloring is NP-hard even on grid graphs with $4$ colors and where $2 \leq |L(v)| \leq 3$ for each vertex $v$ \cite{Demange2013}; we show NP-hardness for $(P^k, \mI_4)$ by a reduction from list coloring on grids.
\begin{proposition}\label{thm:np-hard-acyclic}
The combined complexity of evaluating $(P^k, \mI_4)$ is NP-hard, where both the query $P^k$ and the inequality graph $G$ are acyclic (have treewidth 1).
\end{proposition}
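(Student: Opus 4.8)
I would reduce from the \emph{list coloring} problem on $p\times p$ grid graphs. By \cite{Demange2013}, list coloring is NP-hard on grids with $4$ colors and $2\le |L(v)|\le 3$ for every vertex; moreover we may assume the grid is square, since a rectangular $m\times n$ instance (say $m\le n$) extends to an $n\times n$ one by appending $n-m$ new rows and giving the vertex in new row $r$ and column $c$ the list $\{1,2\}$ when $r+c$ is even and $\{3,4\}$ when $r+c$ is odd: any two grid-adjacent new vertices then receive disjoint lists, so the only genuine constraints involve the first new row and the old boundary row, and each of those can always be met greedily (a size-$2$ list has a free value avoiding a single neighbor), so colorability is preserved while the list sizes and color budget are unchanged. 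Put $p=n$ and $k=p^2-1$, so that $\mI_4$ is defined. By Example~\ref{eg:tw-eg-2}, the augmented graph $G^{P^k,\mI_4}$, after smoothing away the degree-$2$ vertices that represent the relational atoms, is exactly the $p\times p$ grid graph: the path edges of $P^k$ together with the inequality edges of $\mI_4$ realize every grid edge on the vertices $x_1,\dots,x_{p^2}$. Let $v_i$ denote the grid vertex identified with $x_i$ (the snake ordering of the grid), and let $\Gamma$ be this grid equipped with the given lists $L$.

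\textbf{The instance.} Take all values from $\{1,2,3,4\}$ and, for each $\ell\in[p^2-1]$, set
\[ R_\ell \;=\; \{\,(a,b) \;:\; a\in L(v_\ell),\; b\in L(v_{\ell+1}),\; a\neq b\,\}. \]
Let $D$ be the database consisting of these $p^2-1$ binary relations. Each $R_\ell$ has at most $16$ tuples, so $|D|=O(p^2)=O(k)$ and $D$ is constructed in polynomial time. Every variable $x_i$ occurs in $R_{i-1}$ and/or in $R_i$, and in either case the corresponding attribute is restricted to $L(v_i)$; in addition, every path edge $(x_\ell,x_{\ell+1})$ carries the disequality $x_\ell\neq x_{\ell+1}$ through $R_\ell$.

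\textbf{Correctness.} I claim $(P^k,\mI_4)(D)$ is true iff $\Gamma$ is list-colorable. From a satisfying assignment of $(P^k,\mI_4)$ on $D$, define $c(i)$ to be the value assigned to $x_i$: then $c(i)\in L(v_i)$ (pinned by some $R_\ell$), $c(\ell)\neq c(\ell+1)$ on every path edge (the $a\neq b$ clause of $R_\ell$), and $c(i)\neq c(j)$ on every edge $(x_i,x_j)\in\mI_4$ (enforced by the query); since the path edges and the $\mI_4$ edges are precisely the edges of the grid, $c$ is a proper list coloring of $\Gamma$. Conversely, a proper list coloring $c$ of $\Gamma$ yields the assignment $x_i\mapsto c(i)$, which satisfies every $R_\ell$ (its endpoints lie in the required lists and differ, because $(v_\ell,v_{\ell+1})$ is a grid edge) and every inequality of $\mI_4$ (its two endpoints form a grid edge); hence $(P^k,\mI_4)(D)$ is true. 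As this is a polynomial-time reduction, evaluating $(P^k,\mI_4)$ is NP-hard in $k$; and by Example~\ref{eg:tw-eg-2} both $G^{P^k}$ and $G^{\mI_4}$ have treewidth $1$, which gives the refined statement.

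\textbf{Main obstacle.} The only real idea is a single observation --- that list membership can be absorbed into the binary relations $R_\ell$, so that $(P^k,\mI_4)$ on $D$ is literally a list-coloring constraint-satisfaction problem on the augmented graph --- and everything after that is bookkeeping. The two places that need care are (i) confirming, from the index arithmetic of the snake ordering and the definition of $\mI_4$, that $P^k$'s path edges together with $\mI_4$'s edges give exactly the $p\times p$ grid (this is the content of Example~\ref{eg:tw-eg-2}), and (ii) the rectangular-to-square padding of the list-coloring instance sketched above; both are routine.
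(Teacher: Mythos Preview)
Your proof is correct and follows essentially the same approach as the paper: a reduction from list coloring on $p\times p$ grids (citing \cite{Demange2013}), encoding the lists into the binary relations $R_\ell$ via $R_\ell=\{(a,b):a\in L(v_\ell),\,b\in L(v_{\ell+1}),\,a\neq b\}$, and using the fact that the path edges of $P^k$ together with the edges of $\mI_4$ realize the grid. Your write-up is in fact more careful than the paper's (you spell out the rectangular-to-square padding and the two directions of correctness explicitly), but the argument is the same.
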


\begin{proof}
We reduce from list coloring on grid graphs, which is known to be NP-complete
with $c = 4$ colors and where $2 \leq |L(v)| \leq 3$ for each vertex $v$ \cite{Demange2013}. 

Given an instance of the list coloring problem where the graph $G$ is a $p \times p$ grid-graph, we create an instance of $P^k, \mI_4$
as shown in Figure~\ref{fig:tw-eg}(d), where $k+1 = p^2$.
We denote by $x_i$ both the vertices in $G$ as welll as the variables in $P^k$.
For each $i \in [k]$, we create an instance
$$R_i(x_i, x_{i+1}) = \{(a, b)~:~ a \neq b~ \textrm{and}~ a, b \in L(x_i) \times L(x_j)\}$$
The inequalities $\mI_4$ are as shown in the figure: $x_i \neq x_j$. Note that each vertex $v$ in the grid graph $G$
appears in one of the relations so its domain in the query is bounded by $L(v)$.
\par
Suppose the list coloring instance has a valid coloring, \ie, every vertex $v$ in $G$ can be colored $c[v] \in L(v)$ such that 
for each edge $(u, v)$, $c[u] \neq c[v]$. This gives an yes-instance to the query $(P^k, \mI_4)$.
Similarly, if the query has a yes instance, that corresponds to a yes-instance of the list coloring problem.
\end{proof}

In fact, the above proposition can be generalized as follows:
 \emph{if the graph $G^{q,\mI}$ is NP-hard for list coloring for a query $q$
where each relation has arity 2, then evaluation of the query $(q, \mI)$ is also NP-hard in the size of the query}.

On the contrary, %Example~\ref{eg:tw-eg-2} and Theorem~\ref{thm:np-hard-acyclic} do not imply that 
$(q, \mI)$ may not be hard in terms of combined complexity if the treewidth of $G^{q, \mI}$ is unbounded,
which we also show with the help of the list coloring problem.
Consider the queries $F^k(~)  = R_1(x_1), R_2(x_2), \dots, R_k(x_k)$.
Given inequalities $\mI$, the evaluation of $(F^k, \mI)$ 
 is \emph{equivalent} to the list coloring problem on 
 the graph $G^{\mI}$ when the available colors for each vertex $x_i$ are the tuples in $R_i(x_i)$.
Since list coloring is NP-hard: 
\begin{proposition}\label{prop:cross-product-nphard}
The evaluation of  $(F^k, \mI)$ is NP-hard in $k$ for arbitrary inequalities $\mI$.
\end{proposition}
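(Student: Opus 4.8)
The plan is to give a polynomial-time reduction from the \emph{list coloring} problem (Definition~\ref{def:list-color}) to the evaluation problem for $(F^k, \mI)$, exploiting the observation already made above that, because $F^k$ contains no joins, the query with inequalities is just a list coloring instance in disguise. First I would note that list coloring is NP-hard in full generality, since it specializes to ordinary $c$-coloring by setting $L(v) = \set{1, \dots, c}$ for every vertex; so it suffices to exhibit the reduction.

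Given a list coloring instance with graph $G = (V,E)$, $V = \set{v_1, \dots, v_k}$, and admissible color lists $L(v_1), \dots, L(v_k)$, I would construct the following instance of the evaluation problem. Take the query $F^k(~) = R_1(x_1), \dots, R_k(x_k)$, associating variable $x_i$ with vertex $v_i$; take the database $D$ with $R_i = \setof{(a)}{a \in L(v_i)}$ (using fresh domain values as colors); and take $\mI = \setof{x_i \neq x_j}{(v_i, v_j) \in E}$, so that $G^{\mI} = G$. This is computable in time polynomial in $|V| + |E| + \sum_i |L(v_i)|$, and the number of variables appearing in $\mI$ is at most $k = |V|$.

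For correctness I would observe that a tuple $t \in F^k(D)$ is exactly a choice of an admissible color $t[x_i] \in L(v_i)$ for each vertex, and that $t \models \mI$ holds iff $t[x_i] \neq t[x_j]$ for every edge $(v_i, v_j) \in E$, \ie iff this choice is a proper coloring of $G$. Hence $(F^k, \mI)(D) \neq \emptyset$ iff the list coloring instance is a yes-instance, which together with NP-hardness of list coloring yields that evaluating $(F^k, \mI)$ is NP-hard in $k$. There is no real obstacle here; the only points deserving a sentence of care are that the reduction is \emph{uniform in $k$} (the query is part of the input and grows with the instance, which is precisely the combined-complexity setting in which the claim is stated) and, if one wants the sharper statement, that one may keep $\mI$ within a restricted family of inequality graphs --- \eg grids with few colors --- for which list coloring is already hard, in the spirit of Proposition~\ref{thm:np-hard-acyclic}.
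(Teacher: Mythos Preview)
Your proposal is correct and follows exactly the approach the paper uses: the paper simply notes, in the paragraph preceding the proposition, that evaluating $(F^k,\mI)$ is equivalent to list coloring on $G^{\mI}$ with color lists $R_i$, and invokes NP-hardness of list coloring. Your write-up is a more explicit version of that same reduction; the paper also remarks parenthetically that the result alternatively follows from Theorem~\ref{thm:general_hardness}.
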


Therefore, answering $(F^k, \mI)$ becomes NP-hard in $k$ even for this simple class of queries if we allow arbitrary set of inequalities $\mI$
(this also follows from Theorem~\ref{thm:general_hardness}).
However, list coloring can be solved in polynomial time for certain graphs $G^{\mI}$:
\begin{packed_item}
\item 
 \textbf{Trees} (the problem can be solved in time $O(|V|)$ independent of the available colors\cite{JansenS97}), and in general graphs of constant treewidth.
\item  \textbf{Complete graphs} (by a reduction to \textit{bipartite matching}). \footnote{We can construct a bipartite graph where all vertices $v$ appear on one side, the colors appear on the other side, and there is an edge $(v, c)$ if $c \in L(v)$. Then the list coloring problem on complete graph is solvable if and only if there is a perfect matching in the graph.}
\end{packed_item}
In general, if the connected components of $G$ are either complete graphs or have constant treewidth, list coloring can be solved in polynomial time.
Therefore, on such graphs as $G^{\mI}$, the query $(F^k, \mI)$ can be computed in poly-time in $k$ and $|D|$.
%(when $G^{\mI}$ is a complete graph, it has treewidth = $k$).
Here we point out that none of the other algorithms given in this paper can give a poly-time algorithm in $k, |D|$ for $(F^k, \mI)$
when $G^{\mI}$ is the complete graph (and therefore has treewidth $k$).
The following proposition generalizes this property:

\begin{proposition}\label{prop:list-color}
Let $q$ be a Boolean CQ, where each relational atom has arity at most 2. If $q$ has a \emph{vertex cover} (a set of variables that can cover all relations in $q$) of constant size %\red{check} 
and the list coloring problem on $G^{\mI}$ can be solved in poly-time, then $(q, \mI)$ can be answered in poly-time combined complexity.
\end{proposition}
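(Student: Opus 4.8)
The plan is to branch over all value assignments to a constant-size vertex cover $S$ of $q$, and to observe that each such assignment leaves behind precisely an instance of list coloring on an induced subgraph of $G^{\mI}$, which can then be solved with the assumed oracle.

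First I would fix a vertex cover $S\subseteq vars(q)$ of $q$ (so every relational atom contains a variable of $S$); by hypothesis $|S|$ is bounded by a constant. The structural fact I would record at the start is that, since every atom has arity at most $2$, every atom that contains a variable $y\in vars(q)\setminus S$ must also contain a variable of $S$, hence has the form $R(z,y)$ or $R(y,z)$ with $z\in S$ (in particular there is no unary atom on $y$, no atom $R(y,y)$, and no binary atom with both variables outside $S$). Then I would enumerate all functions $\sigma\colon S\to\dom$; there are at most $|\dom|^{|S|}\le(|D|\,|q|)^{|S|}$ of them, which is polynomial in $|D|$ and $|q|$ because $|S|$ is constant. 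For a fixed $\sigma$ I would: discard $\sigma$ unless it satisfies every atom all of whose variables lie in $S$ and every inequality of $\mI$ between two variables of $S$; otherwise, for each $y\in vars(q)\setminus S$ build a list $L_\sigma(y)\subseteq\dom$ by intersecting, over all atoms $R$ containing $y$ (each of which also contains some $z\in S$), the set of values for $y$ consistent with $\sigma(z)$ in $R$, and then removing $\sigma(x)$ for every inequality $(y\neq x)\in\mI$ with $x\in S$. Each $L_\sigma(y)$ is computed in time $O(|q|\,|D|)$ by scanning the relations.

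By the structural fact, once $\sigma$ is fixed every constraint of $(q,\mI)$ that mentions a variable of $S$ is captured by the requirement that $y$ receive a value in $L_\sigma(y)$; therefore $\sigma$ extends to a satisfying assignment of $(q,\mI)$ if and only if there is a color $c(y)\in L_\sigma(y)$ for every $y\in vars(q)\setminus S$ with $c(y)\neq c(y')$ whenever $(y\neq y')\in\mI$, i.e.\ if and only if the list-coloring instance on the induced subgraph $G^{\mI}[vars(q)\setminus S]$ with lists $L_\sigma$ is a yes-instance (the decision version suffices, since $q$ is Boolean). Declaring $(q,\mI)$ true exactly when some $\sigma$ yields a yes-instance is then correct, and the total running time is $|\dom|^{|S|}$ times a polynomial, i.e.\ polynomial combined complexity.

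The one point needing genuine care — and essentially the only nontrivial step — is that the hypothesis supplies a polynomial-time algorithm for list coloring on $G^{\mI}$ itself, not on its induced subgraphs; I would bridge this with the standard reduction that assigns each vertex of $G^{\mI}$ outside the chosen subset a singleton list consisting of a private fresh color used nowhere else, so that $G^{\mI}[vars(q)\setminus S]$ is list-colorable if and only if $G^{\mI}$ is list-colorable under these augmented lists. The remaining obligations — the running-time accounting and the check that, after fixing $\sigma$, no constraint beyond the list memberships and the inequalities internal to $vars(q)\setminus S$ survives — are routine and follow directly from $S$ being a vertex cover of a query of arity at most $2$.
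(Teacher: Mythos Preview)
Your proposal is correct and follows essentially the same strategy as the paper: enumerate all assignments to a constant-size vertex cover, observe that what remains is a list-coloring instance, and invoke the assumed oracle. The paper's proof is terser and simply asserts that the residual query is equivalent to an $F^n$ instance on which list coloring on $G^{\mI}$ can be applied; you are in fact more careful than the paper on one point it glosses over, namely that the residual list-coloring instance lives on the induced subgraph $G^{\mI}[vars(q)\setminus S]$ rather than on $G^{\mI}$ itself, and your fresh-singleton-color trick cleanly closes that gap.
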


\begin{proof}
Let $X  = \{x_{i_1}, \dots, x_{i_c}\}$ be the vertices of the vertex cover. Consider each possible instantiation of these variables from the domain $\dom$; the number of such instantiation is $|\dom|^c$. For each such instantiation $\mathbf{\alpha}$ consider the updated query $q^{\mathbf{\alpha}}$. Since $X$ is a vertex cover and each relation has arity $\leq 2$, in $q^{\mathbf{\alpha}}$ each relation has at most one free variable. Relations with single variable that has been instantiated to a unique constant from $\mathbf{\alpha}$ or relations where both the variables have been instantiated can be evaluated by a linear scan of the instance and removed thereafter.
Similarly, relations with arity 2 where exactly one of the two variables has been instantiated to a constant can be evaluated by removing tuples from the instance that are not consistent with this constant. These steps can be done in poly-time in combined complexity.
In the reduced query, each relation has exactly one free variable and therefore is equivalent to $F^n$ for some $n$ ($n = $ the number of relations in the query where exactly one variable belongs to $X$). Hence if list coloring can be solved in poly-time on $G^{\mI}$, $(q^{\mathbf{\alpha}}, \mI)$ for each instantiation $\alpha$, and therefore $(q, \mI)$ can be solved in poly-time in combined complexity.
\end{proof}

To see an example, consider the star query $Z^n(~) = R_1(y,x_1), \dots, R_n(y, x_n)$ which has a vertex cover $\{y\}$ of size 1. 
We iterate over all possible values of $y$: for each such value $\alpha \in \dom$, the query $R_1(\alpha, x_1), \dots, R_n(\alpha, x_n) $ is equivalent to $F^n$,
and therefore $(Z^n, \mI)$ can be evaluated in poly-time in combined complexity when $G^{\mI}$ is an easy instance of list coloring.

\section{CQs with Polynomial Combined Complexity for All Inequalities}
\label{sec:other}

This section aims to find  CQs $q$ such that computing $(q, \mI)$ has poly-time combined complexity, no matter what the choice of $\mI$ is. Here we present a sufficient condition for this, and a stronger necessary condition.

A {\em fractional edge cover} of a CQ $q$ assigns  a number $v_R$ to each relation $R \in q$ such that for each variable $x$, $\sum_{R: x \in vars(R)} v_R \geq 1$. A {\em fractional vertex packing} (or, \emph{independent set}) of $q$ assigns a number $u_x$ to each variable $x$, such that $\sum_{x \in vars(R)} u_x \leq 1$ for every relation $R \in q$.
By duality, the minimum fractional edge cover is equal to the maximum fractional vertex packing.
When each $v_R \in \{0, 1\}$ we get an \emph{integer edge cover}, and when each $u_x \in \{0, 1\}$ we get an \emph{integer vertex packing}.
\begin{definition}
 A family $\mathcal{Q}$ of Boolean CQs has {\em unbounded fractional (resp. integer) vertex packing} if  there exists a function $T(n)$
 such that 
 for every integer $n>0$ it can output  in time $poly(n)$ 
 a query $q \in \mathcal{Q}$
 that has a fractional (resp. integer) vertex packing of size at least $n$
 (counting relational atoms as well as variables).

A family $\mathcal{Q}$ of Boolean CQs has \emph{bounded fractional (resp. integer) vertex packing} if there exists a constant $b > 0$ such that for any $q \in \mQ$, the size of any fractional (resp. integer) vertex packing is $\leq b$.
\end{definition}

The class of path queries $P^k(~) = R_1(x_1, x_2), R_2(x_2, x_3), \dots, R_k(x_k, x_{k+1})$
and cycle queries $C^k(~) = R_1(x_1, x_2), R_2(x_2, x_3), \dots, R_k(x_k, x_1)$
are examples of classes of unbounded vertex packing.

The main theorem of this section is stated below:

\begin{theorem}\label{thm:general_hardness}
The following hold:
\begin{enumerate}
	\item If a family of Boolean CQs $\mQ$ has unbounded integer vertex packing,  the combined complexity of $(q, \mI)$ for $q \in \mathcal{Q}$ is NP-hard.
	\item If a family of CQs $\mQ$ has bounded fractional vertex packing, then for each $q \in \mQ$, $(q, \mI)$ can be evaluated in poly-time combined complexity for any $\mI$.
\end{enumerate} 
\end{theorem}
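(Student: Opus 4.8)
The plan is to prove part~(1) by an NP-hardness reduction from graph colouring (in the spirit of Propositions~\ref{thm:np-hard-acyclic} and~\ref{prop:cross-product-nphard}), and part~(2) by invoking a worst-case optimal join algorithm and postponing the inequalities until after the join. For part~(1): given a parameter $n$, the hypothesis produces in time $poly(n)$ a query $q\in\mathcal{Q}$ together with a witnessing integer vertex packing, i.e.\ $n$ variables $y_1,\dots,y_n\in vars(q)$ no two of which occur in a common atom of $q$. I would reduce $3$-colourability of an arbitrary graph $H=([n],E_H)$, which is NP-hard and is a special case of list colouring, to deciding $(q,\mathcal{I})(D)$. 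Take $\mathcal{I}=\{\,y_i\neq y_j:(i,j)\in E_H\,\}$; this is a legal inequality set because $y_i$ and $y_j$ always lie in distinct atoms. For the database, fix a fresh value $\star$ and, for each atom $R(z_1,\dots,z_a)$ of $q$, put into $R^{D}$ every tuple whose $p$-th coordinate is $\star$ when $z_p$ is neither a constant nor one of the $y_i$, equals that constant when $z_p$ is a constant, and ranges over $\{1,2,3\}$ when $z_p$ equals some $y_i$ (at most one coordinate is of this last kind). When $q$ has no repeated relation symbol this is exact: every homomorphism $q\to D$ is forced to send each variable outside $\{y_1,\dots,y_n\}$ to $\star$ and each $y_i$ into $\{1,2,3\}$, and conversely every such assignment is a homomorphism, so the projection of $q^{f}(D)$ onto $(y_1,\dots,y_n)$ is exactly $\{1,2,3\}^{n}$; hence $(q,\mathcal{I})(D)$ holds iff $H$ has a proper $3$-colouring. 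Since $q$, $\mathcal{I}$ and $D$ have size $poly(n)$ and are constructed in $poly(n)$ time, the combined-complexity problem is NP-hard; this specialises to Proposition~\ref{prop:cross-product-nphard} when $q=F^{k}$.

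The subtle point, which I expect to be the bulk of the work in part~(1), is repeated relation symbols in $q$: if one symbol labels several atoms, $R^{D}$ pools the tuples contributed by all of them, and a homomorphism could satisfy one atom using a tuple intended for another, invalidating the rigidity argument above. I would handle this with a \emph{typed} domain --- tagging each neutral value and each colour with the atom and coordinate it is meant to fill, and identifying copies only where $q$ itself forces a variable to be shared --- which is exactly the flavour of the colour-coding reconstruction used elsewhere in the paper. This lengthens the correctness proof but leaves the structure of the reduction unchanged; modulo this, the factor-$2$ discrepancies that may arise from the precise accounting of the packing are absorbed by invoking the family with a slightly larger parameter.

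For part~(2): let $b$ be the constant bounding the fractional vertex packing of every $q\in\mathcal{Q}$. By LP duality (minimum fractional edge cover $=$ maximum fractional vertex packing, as recalled in the preliminaries), the fractional edge cover number satisfies $\rho^{*}(q)\le b$ for every $q\in\mathcal{Q}$. To evaluate $(q,\mathcal{I})$ on $D$ I would (i) compute the full query $q^{f}(D)$ with a worst-case optimal join algorithm, e.g.\ that of \cite{NgoPRR12} or \cite{Veldhuizen14}; by the AGM fractional-cover bound $|q^{f}(D)|\le|D|^{\rho^{*}(q)}\le|D|^{b}$, and such algorithms run in time $poly(|q|)\cdot|D|^{b}$ up to logarithmic factors; (ii) as observed in the remark closing Section~\ref{sec:any-CQ}, apply the inequalities afterwards, by a single scan of $q^{f}(D)$ discarding every tuple that violates some inequality of $\mathcal{I}$, in time $O(|\mathcal{I}|\cdot|q^{f}(D)|)$; and (iii) project the surviving tuples onto $head(q)$. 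The total running time is $poly(|q|,|\mathcal{I}|)\cdot|D|^{b}$ up to logarithmic factors, and since $b$ depends only on $\mathcal{Q}$, this is polynomial in the combined input size. The argument never used any property of $\mathcal{I}$, so it works for an arbitrary inequality set, which is exactly what part~(2) asserts.
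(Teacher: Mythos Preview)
Your proposal is correct and follows essentially the same route as the paper for both parts: a reduction from \textsc{3-Coloring} using the packing variables as the colour-bearing positions, and the AGM/duality argument followed by post-filtering for the tractability direction.

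One remark on your ``subtle point'': the paper's proof simply ignores the repeated-relation-symbol issue. Its construction sets $R^{D}=\{(0,\dots,0)\}$ or $R^{D}=\{(c,0,\dots,0):c\in\{1,2,3\}\}$ per atom, implicitly treating each atom as if it had its own relation name, and then asserts $q^{f}(D)=\{1,2,3\}^{n}\times\{(0,\dots,0)\}$. So you are being more careful than the paper here; your typed-domain fix is the right way to make the argument airtight in the presence of self-joins, but you should not expect this to be ``the bulk of the work'' --- it is a routine tagging trick, and the paper evidently did not consider it worth spelling out.
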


The NP-hardness in this theorem follows by a reduction from \textsc{3-Coloring}, whereas the poly-time algorithm uses the bound given 
by Atserias-Grohe-Marx \cite{GroheM06, AGM2013} in terms of the size of minimum fractional edge cover of the query,
and the duality between minimum fractional edge cover and maximum fractional vertex packing.

\begin{proof}
\textbf{1. NP-hardness.~~} 
We do a reduction from \textsc{3-Coloring}. Let $G = (V,E)$ be an undirected graph, where $n = |V|$. The goal is to color $G$ with $3$ colors such that no two adjacent vertices have the same color. Given the family $\mQ$ with unbounded integer vertex packing,
we construct a query with inequality $(q, \mI)$ where $q \in \mathcal{Q}$, and an instance $D$ such that $G$ admits a 3-coloring if and only if $(q, \mI)$ is true on the instance $D$. 

Fine in polynomial time the query $q \in \mathcal{Q}$ %with $\leq B \cdot n$ atoms 
such that $q$ has an integer vertex packing $X$ of size $n$. 
Let $X = \{x_1, \dots, x_n\}$ be the variables in the vertex packing. We create an instance $D$ as follows. Note that each relation $R$ contains at most one variable $x_i$ from $X$. If $R$ contains no such variable, then $R^{D} = \set{(0,0, \dots, 0)}$ (a single tuple with value 0 for all variables). Otherwise, let $x_i \in vars(R)$ and without loss of generality (wlog.), let $x_i$ be at the first position of $R$; then, $R^{D} = \setof{(c, 0,0, \dots)}{c =1,2,3}$. Observe that the size of the instance $D$ is at most $3 \cdot |q|$ and it is constructed in poly-time. %B \cdot n$.
% hence polynomially large.
% 3n + |q|
%\paris{Here we need to say that |q| is polynomially large in $n$.}\sudeepa{Is that true? what if $n' >> n$? I changed to exactly n' = n}

By construction the answer to the full query $q^f$ of $q$ is $q^f(D) = \set{1,2,3}^n \times \{(0,0,\dots)\}$, where wlog. all $x_1, \dots, x_n$ appear at the first $n$ positions of the head of $q^f$. Therefore, each variable $x_i$, $i=1, \dots, n$ can obtain each color independent of the other variables. Finally, we construct a one-to-one mapping from each vertex $v \in V$ to a unique variable $x_v \in \set{x_1, \dots, x_n}$, and define $\mI = \setof{x_u \neq x_v}{(u,v) \in E}$. 

Now it is easy to verify that $G$ has a valid 3-coloring if and only if $(q, \mI)$ is true on $D$.\\

\textbf{2. Algorithm for queries with bounded fractional vertex packing.~~} Since the maximum fractional vertex packing of any $q \in \mQ$ is $\leq b$, the minimum fractional edge cover is also $\leq b$ by duality. Thus from \cite{GroheM06, AGM2013}, $q^f(D)$ can be evaluated in poly-time in combined complexity (in time $O(|q|^2 |D|^{b+1})$). 
Further, $|q^f(D)| \leq |D|^b$ \cite{GroheM06, AGM2013}. We first compute $q^f(D)$, then for each tuple in $q^f(D)$ we check whether it satisfies the inequalities, and finally apply the projection to get the answers to $(q, \mI)$ in poly-time in combined complexity.
\end{proof}

In this paper, we illustrate the properties with examples. Consider the family $S^k(~) = R(x_1, \dots, x_k)$ for $k \geq 1$: this has vertex packing of size $ = 1$ and therefore can be answered trivially  in poly-time in combined complexity for any inequality pattern $\mI$. On the other hand, the class of path queries $P^k$ %() = R_1(x_1, x_2), R_2(x_2, x_3), \dots, R_k(x_k, x_{k+1})$ 
mentioned earlier has unbounded vertex packing (has a vertex packing of size $\approx \frac{k}{2}$), and therefore for certain set of inequalities (\eg, when $G^{\mI}$ is a complete graph), the query evaluation of $(P^k, \mI)$ is NP-hard in $k$. Similarly, the class $F^k(~)  = R_1(x_1), R_2(x_2), \dots, R_k(x_k)$ mentioned earlier
			has unbounded vertex packing, and is NP-hard in $k$ with certain 
			inequality patterns (see Proposition~\ref{prop:cross-product-nphard}).
			%which can also be directly proved by a reduction from the list coloring problem (Definition~\ref{def:list-color}).

\autoref{thm:general_hardness} is not a dichotomy or a characterization of easy CQs w.r.t. inequalities, since there is a gap between the 
maximum fractional and integer vertex packing.\footnote{For example, for the complete graph on $k$ vertices, the maximum integer vertex packing is of size 1 whereas the maximum fractional vertex packing is of size $\frac{k}{2}$.}

%\autoref{thm:general_hardness} tells us that if we want to hope for polynomial time combined complexity, we have to look at families of queries where the edge cover is bounded. For example, consider the family that contains queries of the form $q_n() = R(x_1, \dots, x_n)$ for $n \geq 1$: this has edge cover $1$ and can be answered trivially  in polynomial time, $O(|\mI| \cdot |D|)$. In fact, we can claim the following more general result.

%\todo{reference to AGM bound.}

%\begin{lemma}
%Let $\mathcal{Q}$ be a family of Boolean CQs, such that for every $q \in \mathcal{Q}$, the hypergraph of $q$ has bounded fractional vertex packing ($\leq a$). 
%Then, $(q, \mI)$ can be evaluated in polynomial time combined complexity for any $\mI$.
%\end{lemma}
%
%\begin{proof}
%Since $q \in \mathcal{Q}$ has bounded fractional vertex packing, the fractional edge cover is equal by duality and hence is also bounded by $a$. Thus, $q^f(D)$ can be evaluated in polynomial time and $|q^f(D)| \leq |D|^a$. For each tuple in $q^f(D)$ we can check whether it satisfies the inequalities and then apply the projection.
%\end{proof}
%
%\paris{This is now not a dichotomy, since we have a gap between fractional vertex cover and non-fractional.}

\section{Conclusion}
\label{sec:conclusion}
We studied the complexity of CQs with inequalities and compared the complexity of query answering with and without the inequality constraints.
%We showed that CQs with inequalities can always be solved with an additional multiplicative factor that depends only on the query, 
%and sometimes better bounds can be obtained by evaluating the query 
%along an optimized query plan that uses a special projection operation to handle the inequalities.
%We also studied the relationship between the complexity of query evaluation with the treewidth of query-, inequality-, and the 
%augmented-graph, and illustrated a class of CQs that have poly-time combined complexity for any inequality pattern.
Several questions remain open: Is there a property that gives a dichotomy of query evaluation with inequalities
both for the class of CQs, and for the class of CQs along with the inequality graphs? 
What can be said about unions of conjunctive queries (UCQ) and recursive datalog programs?
Can our techniques be used as a black-box to extend any algorithm for CQs, \ie, not necessarily based on
SPJ query plans, to evaluate CQs with inequalities?
%(ii) Given any algorithm for evaluating
%CQs, for instance the worst case join algorithm, can we include our techniques  to extend the algorithm for CQs 
%to CQs with inequalities with a factor increase in the running time that only depends on the inequalities and not on the data? 
%(iii) 
%A similar study of complexity of aggregate queries, queries with negation, and UCQs with inequalities will be 
%very interesting as well.

\bibliographystyle{plain}
\bibliography{bib}

\newpage
\appendix

\section{Comparison of Our Techniques in Section~\ref{sec:main} with Other Related Work}\label{app:related}

Monien in~\cite{Monien85} defines the notion of {\em $q$-representatives} for families of sets. Given a family of sets $F$, where each set has $p$ elements, $\hat F \subseteq F$ is a $q$-representative if for every set $T$ of size $q$, there exists some set $U \in F$ with $U \cap T = \emptyset$ if and only if there exists a set $\hat U \in \hat F$ such that $\hat U \cap T = \emptyset$. Observe that a $q$-representative is a special case of an $\mH$-equivalent relation: indeed, we can model the family $F$ as a relation $R^F$ of arity $p$ (where we do not care about the order of the attributes), and define $\mH$ as the full bipartite graph with edge set $[p] \times [q]$. Then, if we write $\mE_{\mH}(R^F)$ back to a family of sets, it is a $q$-representative of $F$.

Our techniques also generalize the notion of {\em minimum samples} presented in~\cite{Durand06}, which corresponds to $\mH$-forbidden tuples of a relation in the case where $\mH = (X,Y,E)$ has $|X| = |Y| $ and $E(\mH)$ forms a perfect matching between $X$ and $Y$. Several of the definitions and algorithmic ideas were inspired by both~\cite{Durand06, Monien85}.

\section{Computing Cycles with non-SPJ plans}
\label{sec:non-SPJ}

Let $R$ be a binary relation, and define for any $k \geq 1$ the query:
$$ C_{2k}() = R(x_1, x_2), R(x_2, x_3), \dots, R(x_{2k}, x_1)$$
One can think intuitively that $R$ represents the edges of a directed graph $G$.

\begin{theorem}
Let $R$ be of size $N$. The query  $C_{2k}$ can be computed in time $O(N^{2-1/k})$.
\end{theorem}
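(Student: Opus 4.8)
The plan is to view this as the standard sparse‑graph algorithm for detecting a closed walk of length $2k$ in a directed graph (in the spirit of Alon, Yuster and Zwick \cite{AYZ97}), recast for the query $C_{2k}$. Observe first that $C_{2k}$ holds on an instance $R$, viewed as the edge set of a directed graph $G$ with $N$ edges, exactly when $G$ has a (not necessarily simple) closed walk $w_0 \to w_1 \to \cdots \to w_{2k} = w_0$ of length $2k$, i.e.\ a homomorphism from the directed $2k$‑cycle into $G$. I would fix the degree threshold $\Delta = N^{1/k}$ and call a vertex \emph{heavy} if its out‑degree is at least $\Delta$, and \emph{light} otherwise; since the out‑degrees sum to $N$, there are at most $N/\Delta = N^{1-1/k}$ heavy vertices.

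The algorithm then splits on whether such a closed walk uses a heavy vertex. For the \emph{heavy} case: for each heavy vertex $h$, compute in breadth‑first layers the set $F(h)$ of vertices reachable from $h$ by a walk of length exactly $k$, and the set $B(h)$ of vertices from which $h$ is reachable by a walk of length exactly $k$; each layer is obtained from the previous one by a single pass over the adjacency lists at cost $O(N)$, so this is $O(kN)$ per heavy vertex and $O(kN \cdot N^{1-1/k}) = O(kN^{2-1/k})$ overall. By cyclically rotating a closed walk so that it starts at one of its heavy vertices and then cutting it at the midpoint, $G$ has a length‑$2k$ closed walk through a heavy vertex if and only if $F(h) \cap B(h) \neq \emptyset$ for some heavy $h$, checked in $O(N)$ per heavy vertex. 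For the \emph{light} case: enumerate, by a depth‑first search of depth $k$ that backtracks the moment it enters a heavy vertex, every walk $w_0 \to \cdots \to w_k$ of length $k$ whose internal vertices $w_1, \dots, w_{k-1}$ are all light; the number of such walks is at most $\sum_{w_0} d^{+}(w_0)\,\Delta^{k-1} = N \Delta^{k-1} = N^{2-1/k}$, so they can all be listed and their endpoint pairs $(w_0,w_k)$ stored in a hash table within time $O(N^{2-1/k})$. Cutting a light closed walk of length $2k$ at positions $0$ and $k$ shows that $G$ has such a walk if and only if this set of endpoint pairs contains some pair together with its reverse, again an $O(N^{2-1/k})$ check. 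The algorithm answers ``true'' iff one of the two tests succeeds; every positive test exhibits a genuine closed walk of length $2k$, and the two cases are exhaustive, so the algorithm is correct, with running time $O(kN^{2-1/k}) = O(N^{2-1/k})$ for fixed $k$.

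The part that actually requires care — the only real obstacle — is the bookkeeping that makes the two regimes balance: one must check that with $\Delta = N^{1/k}$ the light‑walk enumeration (cost $\Theta(N\Delta^{k-1})$) and the work over the $\Theta(N/\Delta)$ heavy vertices ($\Theta(N^{2}/\Delta)$ after $k$ BFS layers) both evaluate to $N^{2-1/k}$, and that the case split is genuinely complete, i.e.\ every closed walk of length $2k$ either meets a heavy vertex (so that a cyclic rotation starts at one) or is entirely light (so that its two halves are length‑$k$ light walks whose endpoint tuples form a mutually reversed pair). The remaining details — pruning the search correctly, deduplicating endpoint pairs (a hash table or a radix sort suffices), and the $k=1$ boundary case, which just asks whether $G$ has a directed $2$‑cycle or a self‑loop and is solvable in $O(N)$ — are routine.
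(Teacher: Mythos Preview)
Your proposal is correct and follows essentially the same heavy/light threshold argument as the paper: fix $\Delta = N^{1/k}$, handle cycles through a heavy vertex by anchoring at that vertex and computing the resulting acyclic path query in $O(kN)$ per heavy vertex, and handle the all-light case by enumerating the two length-$k$ halves (at most $N\Delta^{k-1}$ of them) and intersecting endpoint pairs. The only cosmetic differences are that the paper phrases the heavy case as evaluating the acyclic CQ $C_{2k}^{(a)}$ rather than computing $F(h)$ and $B(h)$ by layered BFS, and in the light case it first restricts $R$ to light values rather than allowing the two cut vertices to be heavy as you do; neither affects the argument or the bound.
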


\begin{proof}
Let $\delta$ be some threshold parameter. We say that a value $a$ is a {\em heavy hitter} if the degree $|\sigma_{X=a} R(X,Y)| \geq \delta$, otherwise it is light. The algorithm distinguishes two cases.

First, we compute all the $2k$-cycles that contain some heavy hitter value. We have at most $N/\delta$ such values. For each such value, we can compute
$$ C_{2k}^{(a)}() = R(a, x_2), R(x_2, x_3), \dots, R(x_{2k}, a)$$
Observe that this is an acyclic query now that $a$ is a fixed value, so we can compute this query in time $O(k N)$. Hence, to compute all possible cycles in this case we need $O(k N^2/\delta)$ time.

Second, we compute whether there exists a cycle $C_{2k}$ that uses only light values. To do this, let $R'$ be the subset of $R$ that contains only the light values.  The maximum degree is $\delta$, so the queries
\begin{align*} 
q_1(x_1, x_k) & = R(x_1, x_2), \dots, R(x_{k-1}, x_k) \\
q_2(x_k, x_1) & = R(x_k, x_{k+1}), \dots, R(x_{2k}, x_1)
\end{align*}
each contain at most $N \delta^{k-1}$ answers, which we can compute in time $O(k N \delta^{k-1})$ by performing consecutive joins. However, $|q_1|, |q_2|$ have size at most $N$, and we can compute their intersection in time $O(N \log N)$. So the total running time for this case is $O(N \delta^{k-1})$.

To balance the two cases, we must have $N^2/\delta = N \delta^{k-1}$ or $\delta = N^{1/k}$.
\end{proof}

We can combine the above algorithm with our technique as follows. Suppose the query now is $(C_{2k}, \mI)$ for some set of inequalities $\mI$. Observe that the first case is easy to handle, since we know how to compute $C_{2k}^{(a)}$ in time $O(kN \times \max_{\mH} (\phi(\mH))^2)$, where $\phi(\mH)$ depends only on the inequality structure. For the second case, instead of computing $q_1, q_2$, we consider the full queries
\begin{align*} 
q_1^f(x_1, x_2, \dots, x_k) & = R(x_1, x_2), \dots, R(x_{k-1}, x_k) \\
q_2^f(x_k, \dots, x_{2k}, x_1) & = R(x_k, x_{k+1}), \dots, R(x_{2k}, x_1)
\end{align*}
and compute $(q_1^f, \mI_1)$, $(q_2^f, \mI_2)$, where $\mI_1$ are the inequalities defined only between the variables of $x_1$ (and similarly for $\mI_2$). Since these queries have size at most $N \delta^{k-1}$, we can compute the full answers and apply the inequalities at the end. To compute the intersection between $q_1, q_2$, let $\mI_{12} = \mI \setminus (\mI_1 \cup \mI_2)$. We then compute $\Pi_{x_1, x_k}^{\mH_1} (q_1^f)$, where $\mH_1 = ( \{x_2, \dots, x_{k-1}\}, \{x_{k+1}, \dots, x_{2k} \}, \mI_{12})$, and similarly $\Pi_{x_1, x_k}^{\mH_2} (q_2^f)$ with a symmetrically defined $\mH_2$. The resulting projections have size at most $N \cdot \phi(\mH_i)$ for $i=1,2$, so we can then compute their intersection in time $O(N \log N)$ and then apply the inequalities in $\mI_{12}$.

\end{document}